\newcommand{\E}{\mathbb{E}}
\newcommand{\Var}{\mathrm{Var}}
\newcommand{\Cov}{\mathrm{Cov}}
\newtheorem{theorem}{Theorem}
\newtheorem{corollary}{Corollary}
\newtheorem{remark}{Remark}
\newtheorem{example}{Example}
\newtheorem{assumption}{Assumption}
\newtheorem{lemma}{Lemma}
\DeclareMathOperator*{\argmin}{\arg\min}
\def\bgamma{\bm \gamma}
\def\bmu{\mu}
\def\bbeta{\bm \beta}
\def\btheta{\theta}
\def\balpha{\bm \alpha}
\def\var{{\rm Var}}
\def\cov{{\rm Cov}}
\def\Z{Z} 
\def\X{X} 
\def\x{X} 
\def\W{\bm W} 
\def\z{Z}
\def\Y{Y}
\def\Ya{y_a}
\def\Yai{y_{a,i}}
\def\I{{\mathcal I}}
\def\bsigma{\bm{\Sigma}}
\def\aipw{\hat{\btheta}_{{}_{\rm AIPW}}}
\def\jcal{\hat{\btheta}_{{}_{\rm JC}}}
\def\vjcal{{\bm V}_{{}_{\rm JC}}}
\newcommand{\diag}{\text{diag}}
\newcommand{\suma}{\sum_{i:A_i=a}}
\newcommand{\sumn}{\sum_{i=1}^{n}}
\renewcommand{\algocf@captiontext}[2]{#1\algocf@typo. \AlCapFnt{}#2} 
\def\@algocf@capt@plain{top}
\renewcommand{\algocf@makecaption}[2]{%
  \addtolength{\hsize}{\algomargin}%
  \sbox\@tempboxa{\algocf@captiontext{#1}{#2}}%
  \ifdim\wd\@tempboxa >\hsize
    \hskip .5\algomargin%
    \parbox[t]{\hsize}{\algocf@captiontext{#1}{#2}}
  \else%
    \global\@minipagefalse%
    \hbox to\hsize{\box\@tempboxa}
  \fi%
  \addtolength{\hsize}{-\algomargin}%
}
\def\bgamma{\gamma}
\def\bmu{\mu}
\def\bbeta{\beta}
\def\btheta{\theta}
\def\balpha{\alpha}
\def\var{{\rm Var}}
\def\cov{{\rm Cov}}
\def\Z{Z} 
\def\X{X} 
\def\x{x} 
\def\W{W} 
\def\z{z}
\def\Y{Y}
\def\I{{\mathcal I}}
\def\bsigma{\Sigma}
\def\aipw{\hat{\btheta}_{{}_{\rm AIPW}}}
\def\jcal{\hat{\btheta}_{{}_{\rm JC}}}
\def\jcf{\hat{\btheta}_{{}_{\rm CF}}}
\def\vjcal{{V}_{{}_{\rm JC}}}
\begin{document}

\setlength{\abovedisplayskip}{3pt}
\setlength{\belowdisplayskip}{3pt}
\setlength{\abovedisplayshortskip}{0pt}
\setlength{\belowdisplayshortskip}{0pt}

\markboth{Bannick et al.}{A General Form of Covariate Adjustment in Randomized Clinical Trials}

\title{A General Form of Covariate Adjustment in Randomized Clinical Trials}

\author[1]{Marlena S. Bannick}
\affil[1]{Department of Biostatistics, University of Washington\thanks{Correspondence to Ting Ye (tingye1@uw.edu). This work was supported by National Institute of Allergy and Infectious Diseases [NIAID 5 UM1 AI068617].}}
\author[2]{Jun Shao}
\affil[2]{Department of Statistics, University of Wisconsin-Madison}

\author[3]{Jingyi Liu}
\author[3]{Yu Du}
\author[3]{Yanyao Yi}
\affil[3]{Global Statistical Sciences, Eli Lilly and Company}

\author[1]{Ting Ye}
\maketitle

\begin{abstract}
In randomized clinical trials, adjusting for baseline covariates can improve credibility and efficiency for demonstrating and quantifying treatment effects. This article studies the augmented inverse propensity weighted (AIPW) estimator, which is a general form of covariate adjustment that uses linear, generalized linear, and non-parametric or machine learning models for the conditional mean of the response given covariates. Under covariate-adaptive randomization, we establish general theorems that show a complete picture of the asymptotic normality, {efficiency gain, and applicability of AIPW estimators}. In particular, we provide for the first time a rigorous theoretical justification of using machine learning methods with cross-fitting for dependent data under  covariate-adaptive randomization.
Based on the general theorems, we offer insights on the conditions for guaranteed efficiency gain and universal applicability {under different randomization schemes}, which also motivate a joint calibration strategy using some constructed covariates after applying AIPW. 
Our methods are implemented in the \textsf{R} package \textsf{RobinCar}.
\end{abstract}

\textsf{{\bf Keywords}: Augmentation; Covariate-adaptive randomization; G-computation;  Model-assisted; Multiple treatment arms; Nonlinear adjustment
}

	\section{Introduction}
	\label{sec: intro}
	
	In randomized clinical trials, the credibility and precision of estimated treatment effects are of utmost importance. Prognostic baseline factors can be utilized at the analysis stage of a clinical trial -- often called \textit{covariate adjustment} -- to decrease the variance of estimated treatment effects and increase power of hypothesis tests \citep{ICHE9,ema:2015aa,fda:2019aa}. 
 Based on extensive research and discussion in recent decades, the U.S. Food and Drug Administration (FDA) recently released guidance for sponsors and trial practitioners on best practices for covariate adjustment \citep{fda:2019aa}. Among them is the importance of \emph{model-assisted} analyses. In other words, although a working model between the responses and covariates may be used to assist estimating the treatment effect, the resulting estimator must be valid  even if the working model is misspecified and ``under approximately the same minimal statistical assumptions that would be needed for unadjusted estimation'' \citep{fda:2019aa}.
 
Adjusting for covariates at the analysis stage using a linear model has a long practical history and strong theoretical justification \citep[among others]{Yang:2001aa, Tsiatis:2008aa, Lin:2013aa, Wang:2019aa, Liu:2020aa, Li:2020aa, Ye2022bio, Ye2021better}. In contemporary perspectives, the classical ANCOVA is regarded as a model-assisted approach that uses the prognostic variables in a linear regression to decrease the variances of treatment effect estimators \citep{Yang:2001aa, Tsiatis:2008aa}. ANHECOVA, or ANCOVA2, also uses a linear model but adds treatment-by-covariate interactions to guarantee efficiency gain even if the model is misspecified \citep{Tsiatis:2008aa, Lin:2013aa, Ye2021better}. Generalized linear models are commonly used for trials with binary or count outcomes to improve the fit to the data \citep{Freedman:2008ab,Moore:2009tu,Rosenblum:2010wt,Steingrimsson:2017we,Wang:2021wg,Guo:2021ua, cohen2021noharm}. 

 Many proposals for model-assisted covariate adjustment, including those listed above, fall under the umbrella of the {\em augmented inverse propensity weighted} (AIPW) method \citep{robins1994estimation, glynn2010introduction, Guo:2021ua} or {\em g-computation} \citep{Freedman:2008ab}. Both methods 
 adjust for baseline covariates by estimating the conditional mean response given covariates under either a parametric or nonparametric working model such as a flexible machine learning model. The AIPW and g-computation approaches yield identical estimators when
the selected working model meets the {\em prediction unbiasedness} condition (see (\ref{PU}) in Section 2). If this condition is not met, estimators from g-computation may have large bias but the AIPW estimators are always asymptotically unbiased.     In fact, the AIPW estimator can be viewed as a de-biased g-computation estimator \citep{firth1998robust} and is a general form of covariate adjustment in the sense that any regular consistent and asymptotically normal estimator for the unconditional treatment effect is exactly equal to or asymptotically equivalent to an AIPW estimator \citep{Tsiatis:2008aa}.  Therefore, we focus on the AIPW method for the rest of the article.

 A complementary approach to covariate adjustment at the analysis stage is to account for prognostic factors at the design stage of a trial using \emph{covariate-adaptive randomization} (CAR), such as stratified permuted block randomization \citep{Zelen:1974aa} and Pocock and Simon's minimization \citep{Taves:1974aa, Pocock:1975aa}. CAR is widely used because it balances important prognostic factors across treatment groups, which enhances the credibility of the trial results \citep{ema:2015aa}. Like model-assisted covariate adjustment, CAR can also decrease the variance of the estimated treatment effect. This last point is nuanced because unlike simple randomization, CAR generates a dependent sequence of treatment assignments. For instance, in stratified permuted block randomization, patients within each stratum are randomized in ``blocks" with a predetermined block size to achieve the desired allocation ratio by the end of each block.
 This leads to a dependency in treatment assignments at the stratum level.
 As emphasized by \cite{fda:2019aa},   to fully realize the power of CAR, variance estimators should account for this dependence. Importantly, covariate-adjusted estimators as described above can be used in conjunction with CAR. 
 
 To facilitate its practical uses, there is a need for a comprehensive theory of AIPW estimators with arbitrary -- and potentially nonparametric -- working models under CAR.
In contrast to previous theoretical contributions that have focused on specific forms of estimators {(e.g.,  M-estimators under parametric models and CAR in \cite{Wang:2021wg}),} 
	we establish a suite of theoretical results that show a complete picture of 
	the asymptotic properties of AIPW estimators of the vector of population mean responses under different treatments, in the context of CAR. Our theory covers the use of machine learning working models and cross-fitting, which is the first rigorous theoretical justification of these methods under CAR that produces dependent data. Beyond asymptotic normality,
we provide  high-level conditions and results for \emph{universal applicability} (i.e., the asymptotic distribution of AIPW estimator is invariant to CAR), and \emph{guaranteed efficiency gain} (i.e., the AIPW estimator is asymptotically at least as efficient as the unadjusted estimator, even in cases of working model misspecification).  

 Furthermore, we provide insights into the conditions  for guaranteed efficiency gain and universal applicability. Our theoretical results motivate our proposal of a  joint calibration estimator that achieves both guaranteed efficiency gain and universal applicability regardless of the initial working model. 
 For  assessment of accuracy and inference, we derive estimators of asymptotic covariance matrices that are both robust to model misspecification and correct under CAR, bridging the gap between the practical guidelines in \cite{fda:2019aa} and the theoretical underpinnings in statistical literature. All estimators for our simulation study and data application are computed using our \textsf{RobinCar R} package, available at \textsf{https://github.com/mbannick/RobinCar}. 

	\section{Setup, Estimators, and Assumptions}
	\label{sec: general setup and estimators}
	
	Consider a trial with $k$ treatments, where treatment $a=1,...,k$ has a pre-specified assignment proportion $\pi_a >0$ and 
	$\sum_{a=1}^k \pi_a =1$. 
	Let 
 $y_a$ be the potential response of a patient under treatment  $a$,  and  $\Y = (y_1,...,y_k)^T$, 
	where  ${c}^T$ denotes the transpose of ${c = (c_1,...,c_k)}$. We consider estimation and inference on a given function of the target parameter vector $\theta = 
	(\theta_1, \dots, \theta_k)^T$, where $\theta_a = E(y_a)$, the population mean of $y_a$ under treatment  $a$.  
	Let $\X$ denote the vector of baseline covariates used for adjustment at the analysis stage, and let $\Z$ be a discrete baseline covariate vector used in CAR, where $\Z$ has a fixed number $L<\infty$ of distinct joint levels $\z_1,...,\z_L$ {with $P(\Z=\z_l)>0$ for all $l$}. $\X$ and $\Z$ may share some components. 
 Importantly, we assume that the distribution of baseline or pre-randomization covariates $\X$ and $ \Z$ is not affected by treatment. This is true for clinical trials and is the reason why we can robustly improve efficiency by adjusting for baseline covariates.

	A random sample of $n$ patients is obtained. For the $i$th patient, let  $\Y_i$, $\X_i$, and $\Z_i$ be  realizations of  
	$\Y$, $\X$, and $\Z$, respectively, $i=1,...,n$. 
	Throughout, we assume that $\Y_i, \X_i, \Z_i$, $i=1,..., n$, are independent and identically distributed with finite second order moments.
	
	Each patient is assigned to a single treatment based on a randomization scheme. The assigned treatment for patient $i$ is denoted by $A_i$ taking values $a=1,...,k$. Simple randomization assigns treatments randomly without using any covariates, resulting in $A_i$'s being independent and identically distributed with $P(A_i=a) = \pi_a$, independent of both the potential responses and covariates. Under CAR other than simple randomization, the assigned treatment for patient $i$ depends on the treatment assignments and covariates for all previous patients. Thus, the sequence of treatment assignments $A_1,...,A_n$ are interrelated and depend on $\Z_i$'s used in randomization.
 All commonly used CAR schemes, including those introduced in Section 1 and simple randomization treated as a special case of CAR, satisfy the following mild assumption \citep{Baldi-Antognini:2015aa, hu2022multi}.

	\begin{assumption}[Treatment Assignments]\label{assump: car}
		(i) Given covariate vector $ (\Z_1,...,\Z_n)$, the assignment vector 
		$ (A_1,...,A_n)$ is conditionally independent of 
		$\{\Y_i, \X_i, \ i=1,...,n\}$; (ii) For each $a$, $P(A_i=a\mid Z_1,...,Z_n) = \pi_a $; (iii)	For every level $\z$ of $\Z$, the sequence $\sqrt{n} \{n_a(\z) /n(\z) - \pi_a\} $ is bounded in probability as $n \to
		\infty$,   
		where $n(\z)$
		is the number of patients with $\Z=\z$ and $n_a(\z)$ is the number of
		patients with $\Z=\z$ and  assigned to treatment $a$.
	\end{assumption}

 	{Assumption \ref{assump: car}(i) means that the treatment assignment vector $ (A_1,...,A_n)$ depends solely on the covariate vector $ (\Z_1,...,\Z_n)$. Assumption \ref{assump: car}(ii) ensures that the allocation ratio is preserved at $\pi_a$ for every step; although this implies that $A_i$ is independent of $(\Z_1,...,\Z_n)$, the whole vector $ (A_1,...,A_n)$ is not independent of $(Z_1,...,\Z_n)$ under CAR other than simple randomization. \linebreak
  Assumption \ref{assump: car}(iii) requires a rate at which $n_a(\z) /n(\z)$ converges to $\pi_a$ and is satisfied by most commonly used CAR schemes.  For example, 
  for simple randomization, $\sqrt{n} \{n_a(\z) /n(\z) - \pi_a\} $ converges in distribution to a normal distribution; 
  for stratified permuted block randomization, $n \{n_a(\z) /n(\z) - \pi_a\}  $ is bounded in probability,  as  $| n_a(\z)- n(\z)  \pi_a |$ is bounded by the block size; for Pocock and Simon's minimization, Assumption \ref{assump: car}(iii) holds as shown in \cite{hu2022multi}.
 }
 
	For patient $i$ receiving $A_i = a$, the only observed potential response is the one under treatment $a$ and is denoted as $y_{a, i}$.
 Thus, the observed data are $y_{A_i, i}$, $X_i$, $Z_i$, $i=1,...,n$. 
	Without covariate adjustment, the unadjusted estimator of $\theta$ is the 
	sample mean vector $\bar{Y} = (\bar{y}_1,...,\bar{y}_k)^T$, where each $\bar{y}_a$ is the sample mean of $y_{a,i}$'s for those assigned to treatment $a$.
	Throughout, $\bar{ Y}$ is considered as the 
 benchmark for comparison with covariate-adjusted estimators.
 
	We consider a model-assisted approach  to adjust  for covariate $\X$ using a working model for the conditional response mean $E(y_a \mid \X=  x)$.  Working models can be arbitrary and include parametric  or  nonparametric machine learning methods such as random forest. 
 	Let $\hat\mu_a( x ) $ be the estimated value of $E(y_a \mid \X=  x)$ obtained by fitting a working model under treatment $a$. To estimate the population response mean vector $\btheta$,  we utilize the well-known augmented inverse propensity weighting (AIPW) estimator \citep{robins1994estimation, Tsiatis:2008aa}, given by 
	\begin{align}\label{eq: aipw}
	\aipw =	\bigg( \, \bar{y}_a -  \frac{1}{n_a} \sum_{i: A_i=a }  \hat\mu_a(\X_i) + \frac1n \sum_{i=1}^{n} \hat\mu_a(\X_i), \quad a=1,...,k \, \bigg)^T  ,
	\end{align}
	where $n_a$ is the number of patients in treatment arm $a$. The AIPW estimator in \eqref{eq: aipw} is a general form of covariate adjustment  \citep{Tsiatis:2008aa} and is
	robust against working model misspecification, because $\bar y_a$ is  asymptotically correct for estimating $\theta_a$ and 
	$n_a^{-1} \sum_{i: A_i=a } \hat\mu_a(\X_i) - n^{-1} \sum_{i=1}^{n} \hat\mu_a(\X_i)$  has  asymptotic mean zero, as the distribution of $\X$ is not affected by treatments in randomized clinical trials.   
If
 \begin{equation}\label{PU}
   \bar{y}_a =    \frac{1}{n_a} \sum_{i: A_i = a} \hat \mu_a (\X_i)  \qquad \mbox{almost surely}, 
 \end{equation}
which  is referred to as prediction unbiasedness \citep{Guo:2021ua}, 
		then $\aipw$ is the same as  the g-computation estimator $\big( n^{-1} \sum_{i=1}^{n} \hat\mu_a(\X_i), a=1,...,k \, \big)^T$ \citep{Freedman:2008ab}, also known as the generalized Oaxaca-Blinder
	estimator \citep{oaxaca1973male, blinder1973wage}. However, without  prediction unbiasedness property (\ref{PU}), the g-computation estimator may have a large bias. The AIPW estimator  in \eqref{eq: aipw} corrects the bias of g-computation estimator and is  
always asymptotically unbiased. To demonstrate this, we give a simple simulation example that compares the g-computation and AIPW approaches in Section 6.

	To estimate $f(\btheta )$, such as a linear contrast of $\btheta$ or risk ratio, we use the AIPW estimator $f( \aipw )$. This estimator can be seen as taking an ``augment-then-contrast" approach because it first augments the response means and then contrasts the augmented response mean estimators.

 	In addition to Assumption 1 for CAR, our asymptotic results presented in Sections 3-4 require the following two  assumptions that are standard for establishing the asymptotic properties of AIPW estimators. 

	\begin{assumption}[Stability] \label{assump: stability}
		For every $a$, there exists a function ${\mu}_a$ of $\X$ with finite $E\{ \mu_a^2 (X)\}$ such that, as $n \to \infty$, 
		$\|\hat{\mu}_a - \mu_a\|_{L_2} \to 0 $ in probability with respect to the randomness of $\hat{\mu}_a$ as a function of observed data, where  $\| h \|_{L_2} = [ E\{ h^{2}(\X) \} ]^{1/2}$ denotes the $L_2$ norm of a function $h$  of $\X$ with finite $E\{ h^2(\X) \}$ and $E$ is the expectation with respect to the distribution of $X$.  
	\end{assumption}
 
	\begin{assumption}[Donsker condition] \label{assump: simple}
		For every $a$, there exists a class $\mathcal{F}_a$ of functions  of $\X$ such that (i) $\mu_a \in \mathcal{F}_{a}$, where $\mu_a$ is the function in Assumption 2; (ii) $P(\hat\mu_a  \in \mathcal{F}_{a} )\rightarrow 1$ as $n \to \infty$;  {(iii) $\int_0^1 \sup_Q \sqrt {\log {N( \mathcal{F}_{a}, \|\cdot \|_{L_2(Q)} ,s)} } \ d s < \infty$,  where $Q$ is any finitely supported probability distribution on the range of $\X$}, $N( \mathcal{F}_{a}, \|\cdot \|_{L_2(Q)} ,s)$ is the $s$-covering number of the metric space $(\mathcal{F}_{a}, \|\cdot \|_{L_2(Q)} )$, defined as the size of the smallest collection $\mathcal G_{a,s} \subset \mathcal{F}_{a}$ such that every $f \in \mathcal{F}_{a}$ satisfies $\| f - g \|_{L_2(Q)} \leq s$ for some $g \in \mathcal G_{a,s}$,   and $\|\cdot \|_{L_2(Q)}$ is 
  the $L_2$ norm with respect to $Q$. 
  	\end{assumption}

 	The function  $\mu_a$ in Assumption 2 can be viewed as a limit of $\hat\mu_a$,  and it is not  equal to the conditional mean $E(y_a\mid \X = \x )$ when the working model is misspecified. 
Under Assumption \ref{assump: simple}, $\hat \mu_a$ and $\mu_a$  fall in a function class $\mathcal{F}_a$ with small complexity. 
			Examples  of models that meet this assumption include smooth parametric models of fixed dimension 
		and nonparametric regression models with sufficiently many bounded derivatives 
		({van der Vaart and Wellner, 1996; Kosorok, 2008}). If $\hat \mu_a$ is obtained from complex machine learning methods,
		Assumption \ref{assump: simple} may not hold \citep{chernozhukov2017double}. 
  In Section 3.2, we relax Assumption \ref{assump: simple} using cross-fitting. 
 
	\section{Asymptotic Results} 
	\label{sec: asymptotic theory}
	
	\subsection{Asymptotic Linearity and Normality}
	
We present a suite of theoretical results that characterize the asymptotic behavior of AIPW estimators under CAR.

    \begin{theorem}[Asymptotic linearity and normality of $\aipw$ under CAR]\label{theo:asymptotic-normality}
    Suppose that Assumptions \ref{assump: car}-\ref{assump: simple} hold. 
    \vspace{2mm}
 
\noindent
    (i) $\aipw$ has the following asymptotically linear form, regardless of the CAR scheme used:
        \begin{align}\label{influence-function}
        \aipw -  \theta = \frac{1}{n} \sum_{i=1}^{n} \Big( \phi_a(A_i, y_{a,i}, \X_i, \theta_a,  \pi_a, \mu_a ), \ a=1,...,k\Big)^T  + o_p(n^{-1/2})
        \end{align}
        where  $o_p(n^{-1/2})$ denotes a term with 
        $n^{1/2} o_p(n^{-1/2}) \to 0$  in probability as $n \to \infty$, 
        \begin{align*}
  \phi_a(A_i, y_{a,i}, \X_i, \theta_a,  \pi_a, \mu_a ) & =         \frac{I(A_i= a)}{\pi_a } \left[ y_{a,i} - \mu_a (\X_i) - \theta_a + E\{ \mu_a(\X) \}\right] \\
  & \quad + \mu_a(\X_i) -  E\{ \mu_a(\X) \} 
        \end{align*}
  is the $a$th component of influence  function,  and $I(A_i=a)$ is the indicator function of $A_i=a$.
\vspace{2mm}
 
\noindent
(ii) Assume further that,   almost surely,
        \begin{align*}
            &\sqrt{n} \left(\frac{n_a( z_l)}{n( z_l)} - \pi_a, \ a=1,...,k, \ l=1,...,L \right)^T \, \bigg| \, Z_1,...,Z_n  \xrightarrow{d} N\left(0,  V_{ \Omega} \right), \tag{B}
        \end{align*}
        where     $\xrightarrow{d} $ denotes convergence in distribution
	as $n \to \infty$,  $n_a( z)$ and $n( z)$ are defined in Assumption 1, 
$V_{ \Omega} = {\rm diag} \big[
		\frac{\Omega(\z_l)}{P(\Z=\z_l)}, \, {l=1,...,L} \big]$, 
  $ \Omega( z)$ is a $k\times k$ covariance matrix depending on the CAR scheme, and
  ${\rm diag}\big[ C_j, \, j=1,...,J \big]$  denotes the block diagonal matrix whose $j$th block is the matrix $ C_j$, $j=1,...,J$. 
  Then, 
  \begin{align}\label{AN}
      \sqrt{n} (\aipw -  \theta) \ \xrightarrow{d} \ N( 0,  V ),  
  \end{align}
  where
  \begin{align*}
       V &= \mathrm{diag}\big[\pi_a^{-1} \var\left\{y_a - \mu_a( X) \right\}, a = 1, ..., k \big] + \cov\{ Y -  \mu (\X),  \mu (\X)\} \\
      &\quad  + \cov\{ \mu(\X),  Y\}- \E\left[\left\{ R_Y( Z) -  R_X( Z) \right\} \left\{ \Omega_{\rm SR} -  \Omega( Z) \right\} \left\{ R_Y( Z) -  R_X( Z) \right\} \right],
  \end{align*}
   $ \mu ( X)= \left( \mu_1(\X) ,..., \mu_k(\X)  \right)^T$ is the vector of $\mu_a (  X )$'s from Assumption \ref{assump: stability}, 
 $ \Omega_{\rm SR} = \mathrm{diag}[ \pi^T] -  \pi  \pi^T$,
 $ \pi^T = (\pi_1,...,\pi_k)$, 
 $  R_Y (\Z)  =  {\rm diag} \big[ \pi_a^{-1} E( Y_a - \theta_a \mid \Z ), \, a=1,...,k  \big]$, and  $
		 R_X (\Z)  =  {\rm diag} \big[ \pi_a^{-1} E[ \mu_a(\X) -E\{\mu_a(\X)\} \mid \Z], \, a=1,...,k  \big] $.

    \end{theorem}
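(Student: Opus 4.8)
The plan is to establish part (i) by replacing the fitted function $\hat\mu_a$ by its $L_2$-limit $\mu_a$ up to an $o_p(n^{-1/2})$ remainder, after which the influence function $\phi_a$ drops out of an elementary expansion of the resulting oracle estimator; part (ii) then reduces to finding the joint limit law of the (non-i.i.d.) influence-function average, which I will obtain by splitting it into a conditionally centered part $S$ and an allocation-imbalance part $T$ controlled by condition (B).

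For part (i), write the $a$th coordinate of $\aipw$ as $n_a^{-1}\sum_{i:A_i=a}\{y_{a,i}-\hat\mu_a(X_i)\}+n^{-1}\sum_{i=1}^n\hat\mu_a(X_i)$ and subtract the same expression with $\mu_a$ in place of $\hat\mu_a$; the difference equals $(\mathbb{P}_n-\mathbb{P}_n^a)\,g$ with $g=\hat\mu_a-\mu_a$, $\mathbb{P}_n h=n^{-1}\sum_{i=1}^n h(X_i)$, and $\mathbb{P}_n^a h=n_a^{-1}\sum_{i:A_i=a}h(X_i)$. By Assumption \ref{assump: stability} $\|g\|_{L_2}\to0$ in probability, and by Assumption \ref{assump: simple} $g$ lies, with probability tending to one, in a class with finite uniform entropy integral, hence Donsker for every finitely supported $Q$; asymptotic equicontinuity then gives $\sqrt n\,(\mathbb{P}_n-P)g=o_p(1)$ for the full i.i.d. sample, where $Pg=E\{g(X)\}$. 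The arm-$a$ average $\mathbb{P}_n^a g$ is the delicate term, since $\{i:A_i=a\}$ is a dependent subsample under CAR; I will condition on $(A_1,\dots,A_n,Z_1,\dots,Z_n)$, given which the pairs $(y_{a,i},X_i)$ are independent with $X_i$ drawn from the law of $X\mid Z=Z_i$, so that stratifying by the levels of $Z$ turns $\mathbb{P}_n^a$ into a finite mixture of within-stratum i.i.d. averages. Because $P(Z=z_l)>0$ at the finitely many levels, $\|g\|_{L_2}\to0$ forces $E(g\mid Z=z_l)\to0$ and the within-stratum norms to vanish, so stratum-wise equicontinuity kills the within-stratum fluctuations while the $O_p(n^{-1/2})$ deviations of the stratum proportions from $P(Z=z_l)$ are multiplied by these vanishing conditional means; hence $\sqrt n\,(\mathbb{P}_n^a-P)g=o_p(1)$ as well. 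This empirical-process step under dependent assignments is the main obstacle. Given it, expanding the oracle estimator with $n_a^{-1}=\pi_a^{-1}n^{-1}(n/n_a)$, $n/n_a=\pi_a^{-1}+o_p(1)$, and the marginal independence of $A_i$ from $(y_{a,i},X_i)$ implied by Assumption \ref{assump: car}(i)--(ii), produces precisely the stated $\phi_a$.

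For part (ii), I substitute the representation from part (i) and split each summand using $r_{a,i}=y_{a,i}-\mu_a(X_i)-\theta_a+E\{\mu_a(X)\}$ and $m_{a,i}=\mu_a(X_i)-E\{\mu_a(X)\}$ into a piece centered given $Z_i$, namely $\tilde r_{a,i}=r_{a,i}-E(r_{a,i}\mid Z_i)$ and $\tilde m_{a,i}=m_{a,i}-E(m_{a,i}\mid Z_i)$, and a piece that is a function of $Z_i$ alone. Writing $n_a(z_l)=\pi_a n(z_l)+\{n_a(z_l)-\pi_a n(z_l)\}$ and summing over strata yields $\sqrt n(\aipw-\theta)=S+T+o_p(1)$, where $S$ gathers the conditionally centered residuals $\pi_a^{-1}I(A_i=a)\tilde r_{a,i}+\tilde m_{a,i}$ together with the i.i.d. sum $n^{-1/2}\sum_i\{E(y_a\mid Z_i)-\theta_a\}$, and $T_a=\pi_a^{-1}\sum_l \bar r_a(z_l)\{n_a(z_l)-\pi_a n(z_l)\}/\sqrt n$ with $\bar r_a(z)=E(r_a\mid Z=z)=\pi_a\{R_Y(z)-R_X(z)\}_{aa}$ is the allocation-imbalance term.

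It remains to assemble the joint limit of $(S,T)$. Condition (B) gives $T\to N(0,\E[\{R_Y(Z)-R_X(Z)\}\Omega(Z)\{R_Y(Z)-R_X(Z)\}])$, since $\{n_a(z_l)-\pi_a n(z_l)\}/\sqrt n$ converges to $P(Z=z_l)$ times a Gaussian vector with block-diagonal covariance $\Omega(z_l)/P(Z=z_l)$. I will show $S$ and $T$ are asymptotically independent: the $Z$-measurable part of $S$ is uncorrelated with $T$ because $E\{n_a(z)-\pi_a n(z)\mid Z_1,\dots,Z_n\}=0$ by Assumption \ref{assump: car}(ii), and the conditionally centered part is uncorrelated with the $(A,Z)$-measurable $T$ upon conditioning on $(A,Z)$. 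A conditional Lindeberg--Feller CLT for the conditionally independent, conditionally mean-zero array underlying $S$ gives its asymptotic normality with a limiting variance that is deterministic (the empirical stratum and arm proportions converging to $\pi_a P(Z=z_l)$), which simultaneously yields the asymptotic independence of $S$ from $T$. Finally, a law-of-total-variance calculation shows that $\var_\infty(S)$ equals the first three terms of $V$ minus $\E[\{R_Y(Z)-R_X(Z)\}\Omega_{\rm SR}\{R_Y(Z)-R_X(Z)\}]$; adding $\var_\infty(T)$ recovers $V$, with the scheme-dependent correction entering through $\Omega_{\rm SR}-\Omega(Z)$ and hence with a minus sign.
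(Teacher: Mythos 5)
Your proposal is correct, and its architecture coincides with the paper's at every structural joint: you condition on $(A_1,\dots,A_n,Z_1,\dots,Z_n)$ to restore conditional independence, reduce part (i) to $(\mathbb{P}_n-\mathbb{P}_n^a)(\hat\mu_a-\mu_a)=o_p(n^{-1/2})$, and in part (ii) split $\phi_a$ into a conditionally centered block, an allocation-imbalance block driven by (B), and a $Z$-measurable i.i.d. block (your $S$ absorbs the paper's $\phi^1+\phi^4$ and $\phi^3+\phi^5$, your $T$ is its $\phi^2$), assembling the joint limit by nested conditional CLTs; your variance bookkeeping, with $\Var_\infty(S)$ carrying $-\E[\{R_Y(Z)-R_X(Z)\}\Omega_{\rm SR}\{R_Y(Z)-R_X(Z)\}]$ and $\Var_\infty(T)=\E[\{R_Y(Z)-R_X(Z)\}\Omega(Z)\{R_Y(Z)-R_X(Z)\}]$, reproduces exactly the paper's $V^{(1,4)}+V^{(2)}+V^{(3,5)}$ computation. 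Where you genuinely diverge is the implementation of the empirical-process core. The paper treats $\sqrt n(P_n-P_0^n)(1_a\hat\mu_a-1_a\mu_a)$, with $P_0^n=n^{-1}\sum_i P_0(\cdot\mid\mathcal{A}_n,\mathcal{Z}_n)$, as a single process over conditionally independent but \emph{non-identically distributed} data, re-deriving symmetrization and Dudley's entropy bound in that setting, and handles the $E(g\mid Z)$ piece via a separate Donsker argument for the class of functions of $Z$. You instead exploit $L<\infty$ twice: within each stratum-arm cell the data are conditionally i.i.d.\ from $P(X\mid Z=z_l)$, so standard i.i.d.\ equicontinuity applies cell-by-cell (legitimate, since the uniform-entropy condition in Assumption 3 is distribution-free and $\|g\|_{L_2}\to0$ transfers to each conditional law because $P(Z=z_l)>0$), and your treatment of the conditional-mean piece is pure finite-strata algebra, $(P_n-P_0)\,E(g\mid Z)=\sum_l(\hat p_l-p_l)E(g\mid Z=z_l)=O_p(n^{-1/2})\,o_p(1)$, which replaces the paper's Donsker lemma for that class entirely. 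Your route is more elementary; the paper's heavier machinery is what its cross-fitting result (Theorem 2) reuses, where the conditionally-non-identically-distributed extension cannot be avoided by stratifying. Two cautions: uncorrelatedness of $S$ and $T$ alone does not deliver asymptotic independence --- the operative mechanism is the one you state next (conditional convergence to a Gaussian with \emph{deterministic} variance implies asymptotic independence from conditioning-measurable statistics, formalized in the paper's Lemma A.11 via iterated expectations and dominated convergence), so that argument should carry the weight; and in the cell-by-cell step you should note that the cell sizes $n_a(z_l)$ are random (though measurable with respect to the conditioning) and of exact order $n$ by Assumption 1(iii), so equicontinuity must be invoked along these random subsequence sizes.
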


To the best of our knowledge, this is the first result for asymptotic linearity and normality of AIPW estimator $\aipw$ allowing for nonparametric estimates of the conditional means, under CAR. 
The proof of all theorems are in the Supplemental Material.
A novel technique in proving (\ref{influence-function}) in Theorem \ref{theo:asymptotic-normality} shows that, under CAR, for a given $a$,
 \begin{align*}
     \frac{1}{n} \sum_{i=1}^n I(A_i=a) \left[\left\{\hat{\mu}_a( X_i) - \mu_a(\X_i)\right\} - E \left\{\hat{\mu}_a( X) - \mu_a( X) \right\}\right] = o_p(n^{-1/2}),
 \end{align*}
using a modification to the standard empirical process result to account for the fact that $A_i$'s are dependent.

Besides Assumptions 1-3, the asymptotic normality of 
$\aipw$ under CAR requires condition (B)  on the limiting distribution of the allocation fractions by stratum and treatment group in randomization. Condition
(B) is satisfied under simple randomization with 
 $ \Omega ( z) = \Omega_{\rm SR}$ 
and stratified permuted block randomization with $ \Omega( z) =  0$ for all $ z$.  
It is also satisfied under 
   most commonly used CAR schemes that are implemented separately for each stratum.    Pocock and Simon's minimization is a notable exception under which condition (B) is not satisfied -- it assigns patients by minimizing a balance measure across the marginal levels of covariates, thereby generating correlations across different strata. Later in Section 3.3, we will provide a strategy for valid inference under Pocock and Simon's minimization.

The  sample mean $\bar{ Y}$ can be treated as a special case of AIPW with $\hat\mu_a = \mu_a =0$ for all $a$. Thus, Theorem \ref{theo:asymptotic-normality} tells us that, under Assumption 1 and (B), (\ref{AN}) holds for $\bar{ Y}$ with 
$$  V = \mathrm{diag}\big[ \pi_a^{-1} \var (y_a), a = 1, ..., k \big]  - \E\left[ R_Y( Z) \left\{ \Omega_{\rm SR} -  \Omega( Z) \right\}  R_Y( Z)   \right]. $$
The matrix $  \Omega_{\rm SR}  -\Omega( Z) $  is typically positive definite for CAR other than simple randomization. As a result, using CAR reduces the asymptotic variance of $\bar{ Y}$ compared to simple randomization. This also highlights the important point that using conventional formulas for inference based on  $\bar{ Y}$ under simple randomization can be conservative. 

   \subsection{Machine Learning and Cross-Fitting}
	\label{sec: ml}

	If one constructs $\hat\mu_a(\X)$  using machine learning methods such as random forest, then $\hat \mu_a$ and its limit may be too complicated to satisfy  Assumption 3 (Donsker condition). Cross-fitting -- separating the data used for fitting the machine learning models and the data used for making predictions -- is a technique that allows us to relax Assumption 3 \citep{chernozhukov2017double}.
	
    Specifically, let $\I_1,...,\I_J$ be a random partition of $\{ 1,..., n\}$ with a fixed integer $J$. {Define} $n^{(j)} = n/J $ as the size of $\I_j$, and  let $\I^c_j $ be the set containing indices in $\{1,..., n\}$ but not in $\I_j$, $j=1,...,J$. For each partition $j$, we use data with indices in $\mathcal{I}_j^c$ to obtain $\hat{\mu}_{a,j}(\X )$ as an estimate of $E(y_a \mid \X )$, $j=1,...,J$.
The cross-fitted estimator of $\btheta$ is defined as 
	\begin{align}
 \jcf= \bigg( \frac{1}{J} \sum_{j=1}^{J} \frac{1}{n^{(j)}}\sum_{i\in \I_j} \bigg[\frac{I(A_i=a)}{\hat{\pi}_{a, j
	}}\{y_{a,i} - \hat{\mu}_{a,j}(\X_i) \} + \hat{\mu}_{a,j}(\X_i) \bigg]  , \ \ \ a=1,...,k \bigg)^T ,\label{eq: cf}
	\end{align}
	where  $ \hat\pi_{a,j}=n_{a}^{(j)}/n^{(j)}$ and $n_{a}^{(j)}$ counts indices in  $\I_j$ with $A_i =a$. 

\vspace{2mm}

	{\em Assumption 2*}. 
	{\em For every $a$, there exists a function ${\mu}_a $ of $X$ with finite $L_2$ norm
		such that
		$P(\|\hat \mu_{a,j}  - \mu_a \|_{L_2 } \leq \delta_n )\to 1$
		as $n \to \infty$, 
		where $\hat\mu_{a,j}$ is given in  \eqref{eq: cf} for a fixed $j$ and 
		$\{ \delta_n \}$ is a sequence of positive numbers converging to 0}. 
	\vspace{2mm}

 With cross-fitting, we use Assumption 2* to replace Assumptions 2-3. 
 Assumption 2* is slightly stronger than Assumption 2, since 
 it requires $\|\hat \mu_{a,j}  - \mu_a \|_{L_2 } \to 0$ at a rate at least $\delta_n$, though $\delta_n$ can be any sequence tending to 0. Assumption 2* is  achievable
by many machine learning methods \citep{chernozhukov2017double}. 
 
 We have the following result for $\jcf$ in (\ref{eq: cf}) under CAR.
        \begin{theorem}[Asymptotic linearity and normality of $\jcf$ under CAR]\label{theo:ml}
Suppose that Assumptions 1 and 2* hold. Then all results in Theorem 1 hold with $\aipw$ replaced by $\jcf$ defined in \eqref{eq: cf}. 
        \end{theorem}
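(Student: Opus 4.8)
The plan is to prove that $\jcf$ admits exactly the influence-function expansion \eqref{influence-function}; once this is in hand, part (ii) and the asymptotic normality \eqref{AN} follow \emph{verbatim} from the argument for Theorem \ref{theo:asymptotic-normality}, because the influence function $\phi_a$ is identical and condition (B) enters only through the common quantities $\mu_a$, $R_Y$, $R_X$. Since $J$ is fixed and each fold has size $n^{(j)}=n/J\asymp n$, it suffices to establish the expansion fold by fold and then average over $j=1,\dots,J$. So I would fix a fold $j$ and analyze the $a$th component of the fold-$j$ estimator in \eqref{eq: cf}.

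The first move is to pass to an \emph{oracle} version in which $\hat\mu_{a,j}$ is replaced by its $L_2$-limit $\mu_a$ from Assumption 2*. Because $\mu_a$ is a single deterministic function, the empirical-process term that forced the Donsker requirement (Assumption \ref{assump: simple}) in Theorem \ref{theo:asymptotic-normality} is now identically zero, so no complexity control is needed. Combined with $\hat\pi_{a,j}\xrightarrow{p}\pi_a$ — which follows since $A_i$ has marginal conditional probability $\pi_a$ by Assumption \ref{assump: car}(ii) and $\I_j$ is a random subsample — the oracle fold estimator reproduces the $a$th component of $\phi_a$ in \eqref{influence-function} by the same bias-and-variance computation used to prove Theorem \ref{theo:asymptotic-normality}(i). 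It therefore remains only to show that the feasible and oracle fold estimators differ by a negligible amount, i.e.
\[
\sqrt n\,\Delta_{a,j}:=\sqrt n\cdot\frac{1}{n^{(j)}}\sum_{i\in\I_j}\Big(1-\frac{I(A_i=a)}{\hat\pi_{a,j}}\Big)\{\hat\mu_{a,j}(\X_i)-\mu_a(\X_i)\}=o_p(1),
\]
where I have already used the identity $\sum_{i\in\I_j}\{1-I(A_i=a)/\hat\pi_{a,j}\}=0$, which lets the summand be freely recentered. The entire value of cross-fitting is that, conditional on the training data in $\I_j^c$, the map $\hat\mu_{a,j}$ is a \emph{fixed} function $g$ satisfying $\|g-\mu_a\|_{L_2}\le\delta_n$ on an event of probability tending to one (Assumption 2*). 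I would therefore bound $\Delta_{a,j}$ by a conditional Chebyshev argument rather than by empirical-process theory. Conditioning further on $(\Z_i,\X_i)_{i\in\I_j}$ makes $\{g-\mu_a\}(\X_i)$ nonrandom, and Assumptions \ref{assump: car}(i)--(ii) give $E\{I(A_i=a)\mid\Z_1,\dots,\Z_n,\X\}=\pi_a$, so the conditional mean of $\Delta_{a,j}$ vanishes to leading order (after the harmless replacement of $\hat\pi_{a,j}$ by $\pi_a$).

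The main obstacle is the \emph{conditional variance} of $\Delta_{a,j}$: under CAR the indicators $\{I(A_i=a)\}_{i\in\I_j}$ are dependent, so unlike under simple randomization their pairwise covariances do not vanish. I would control these using the stratum-level structure of CAR: grouping $\I_j$ by the levels of $\Z$ and recentering $g-\mu_a$ within each stratum, the balancing (approximate exchangeability) of assignments inside a stratum makes the within-stratum assignment sum behave like a controlled finite-population sampling sum, so its covariance contribution is bounded by the within-stratum $L_2$ fluctuation of $g-\mu_a$; the off-diagonal covariance sum is then of the same order as the diagonal. This yields $\var(\Delta_{a,j}\mid\text{training})=O_p(\|g-\mu_a\|_{L_2}^2/n^{(j)})=O_p(\delta_n^2/n)$, whence $\sqrt n\,\Delta_{a,j}=O_p(\delta_n)=o_p(1)$ by the conditional Chebyshev inequality and $\delta_n\to 0$. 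Averaging the fold-wise expansions over the fixed number $J$ of folds delivers \eqref{influence-function} for $\jcf$, and part (ii) follows exactly as in Theorem \ref{theo:asymptotic-normality}(ii) under condition (B).
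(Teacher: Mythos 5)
Your overall architecture --- a fold-wise comparison to an oracle that uses the fixed limit $\mu_a$, the free recentering via $\sum_{i\in\I_j}\{1-I(A_i=a)/\hat\pi_{a,j}\}=0$, and a stratum-wise decomposition of $\hat\mu_{a,j}-\mu_a$ --- is sound, and your target bound $\sqrt n\,\Delta_{a,j}=O_p(\delta_n)=o_p(1)$ is the right one. The gap is in your probabilistic control of the treatment indicators. To make $\hat\mu_{a,j}$ a fixed function $g$ you must condition on the training data $(\bm W_i)_{i\in\I_j^c}$, and under CAR this conditioning set \emph{includes the assignments $A_i$, $i\in\I_j^c$} (and responses that depend on them). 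Since $(A_1,\dots,A_n)$ is a dependent vector, conditioning on the complement-fold assignments distorts the law of $(A_i)_{i\in\I_j}$: the identity $E\{I(A_i=a)\mid \Z_1,\dots,\Z_n\}=\pi_a$ from Assumption 1(ii) does not survive this conditioning (under stratified permuted blocks, an excess of arm-$a$ assignments in $\I_j^c$ within a stratum tilts the fold-$j$ assignments in that stratum away from $a$), so your claim that the conditional mean of $\Delta_{a,j}$ vanishes is unjustified as stated. The covariance step is the more serious problem: Assumption 1 supplies only marginal probabilities (ii) and root-$n$ \emph{aggregate} stratum imbalance (iii); it implies nothing like within-stratum exchangeability or finite-population-sampling covariances for the indicators. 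Pocock--Simon minimization --- a scheme the theorem is explicitly meant to cover --- has non-exchangeable within-stratum assignments and correlations across strata, so ``the off-diagonal covariance sum is of the same order as the diagonal'' is an assertion you cannot derive from the stated assumptions.

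The repair, which is exactly the paper's device, is to condition on the \emph{entire} vectors $(\mathcal{A}_n,\mathcal{Z}_n)$ in addition to the auxiliary sample. Then every $I(A_i=a)$ and $\hat\pi_{a,j}$ is a constant, and the fold-$j$ observations are conditionally independent (though not identically distributed), because $(\X_i,Y_i)$ are i.i.d.\ and independent of the assignments given $\mathcal{Z}_n$ by Assumption 1(i). Splitting $g-\mu_a$ into $\{(g-\mu_a)-E(g-\mu_a\mid\Z)\}$ plus $E(g-\mu_a\mid\Z)$, the first part of $\Delta_{a,j}$ becomes a sum of conditionally independent mean-zero terms whose cross-terms vanish identically, with conditional second moment $O_p(\delta_n^2/n)$, so conditional Chebyshev applies and no covariance control of the indicators is ever needed; the second part factors, stratum by stratum, into an $o_p(1)$ coefficient (bounded using $\|g-\mu_a\|_{L_2}\le\delta_n$ and $P(\Z=z_l)>0$) times the imbalance sum $n^{-1}\sum_{i\in\I_j:\Z_i=z_l}\{\pi_a-I(A_i=a)\}$, which is $O_p(n^{-1/2})$ by Assumption 1(iii) combined with a hypergeometric argument transferring full-sample balance to the random fold --- the same argument you also need, but do not supply, for your ``harmless'' replacement of $\hat\pi_{a,j}$ by $\pi_a$. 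With these substitutions your outline becomes correct and coincides in essence with the paper's proof: its remainder term $\mathcal{I}_{3,k}$ is your conditionally centered part, and $\mathcal{I}_{4,k}$ (handled there by a Taylor expansion in the nuisance, using the same stratum-imbalance bound) absorbs your stratum-mean and $\hat\pi_{a,j}$ contributions.
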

        
Under simple randomization, $(A_i, y_{A_i, i}, \X_i) $'s are independent and \cite{chernozhukov2017double} showed that  $\jcf$ is asymptotically linear. Under CAR other than simple randomization, however, $(A_i, y_{A_i,i}, \X_i) $'s are dependent. To handle the dependency arising from CAR, we need to define a new empirical process term, which not only involves conditioning on the auxiliary samples in $\mathcal{I}_j^c$ (used to estimate nuisance parameters) to render the cross-fitted nuisance parameter estimators non-stochastic, but also requires conditioning on all the treatment and strata indicators $(A_1,\dots, A_n, Z_1\dots, Z_n)$ to make the samples conditionally independent, though not identically distributed. Hence, a set of empirical process results needs to be extended to cases when data are not identically distributed. We provide details in the Supplementary Material.

\subsection{Guaranteed Efficiency Gain and Universal Applicability}\label{sec:intuition}

Our next result establishes a high-level condition for $\aipw$ to have guaranteed efficiency gain over the  sample mean $\bar{ Y} $ under CAR schemes that satisfy {\rm (B)}. 

  \begin{theorem}[Guaranteed efficiency gain under CAR]\label{corollary:efficiency}
  Assume Assumptions 1-3 for $\aipw$  or Assumptions 1 and 2* for $\jcf$.  
    For all CAR schemes satisfying  {\rm (B)}, a sufficient condition for $\aipw$ or  $\jcf$ having guaranteed efficiency gain over the sample mean  $\bar{ Y} $ under CAR is  
       \begin{equation}
        \begin{array}{c}
        {\rm diag} \big[ \pi_a^{-1} \cov \{ y_a-\mu_a(\X), \mu_a(\X)\}, a=1,...,k \big] - \cov \{\Y -  \mu (\X),  \mu (\X) \} \vspace{2mm} \\
		= E\left[ \{  R_Y (\Z) -  R_X(\Z)\} \{  \Omega_{\rm SR} - \Omega(\Z) \}   R_X(\Z)  \right] . \end{array} \tag{G1}
     \end{equation}
    \end{theorem}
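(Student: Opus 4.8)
The plan is to verify the statement by directly forming and analyzing the difference of the two limiting covariance matrices produced by Theorem~\ref{theo:asymptotic-normality}. Since Theorem~\ref{theo:ml} gives $\jcf$ the same limiting covariance $V$ as $\aipw$, I may treat a single matrix $V$ and show $V_{\bar{Y}} - V \succeq 0$, where $V_{\bar{Y}}$ is the covariance of $\sqrt{n}(\bar{Y}-\theta)$, i.e.\ the $\mu_a\equiv 0$ specialization of the formula in Theorem~\ref{theo:asymptotic-normality}. Writing both matrices out, the difference splits into a ``main'' block from the diagonal-variance and covariance terms and an ``$\Omega$-block'' coming from the CAR corrections $-\E[\{R_Y(\Z)-R_X(\Z)\}\{\Omega_{\rm SR}-\Omega(\Z)\}\{R_Y(\Z)-R_X(\Z)\}]$ versus $-\E[R_Y(\Z)\{\Omega_{\rm SR}-\Omega(\Z)\}R_Y(\Z)]$. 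Guaranteed efficiency gain then means exactly that this difference is positive semidefinite.

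First I would expand the main block using $\var(y_a)-\var(y_a-\mu_a)=2\cov\{y_a-\mu_a,\mu_a\}+\var\{\mu_a\}$ and $\cov\{\mu(\X),Y\}=\cov\{\mu(\X),Y-\mu(\X)\}+\var\{\mu(\X)\}$, which casts it as $2D+P-C-C^\T-\var\{\mu(\X)\}$, where $D=\mathrm{diag}[\pi_a^{-1}\cov\{y_a-\mu_a,\mu_a\}]$, $P=\mathrm{diag}[\pi_a^{-1}\var\{\mu_a\}]$, and $C=\cov\{Y-\mu(\X),\mu(\X)\}$. Expanding the two $\Omega$-blocks and cancelling the shared $R_Y(\Z)\{\Omega_{\rm SR}-\Omega\}R_Y(\Z)$ term leaves $\E[-R_YGR_X-R_XGR_Y+R_XGR_X]$ with $G=\Omega_{\rm SR}-\Omega(\Z)$. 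The purpose of (G1) is precisely to turn the $\cov\{y_a-\mu_a,\mu_a\}$ contribution into these cross terms: writing (G1) as $D-C=M$ with $M=\E[\{R_Y-R_X\}GR_X]$, and using that $D$ is diagonal (hence $D-C^\T=M^\T$) to also control $C^\T$, every cross term cancels and the difference should collapse to
\[
V_{\bar{Y}} - V = \mathrm{diag}\big[\pi_a^{-1}\var\{\mu_a(\X)\}\big] - \var\{\mu(\X)\} - \E\big[R_X(\Z)\{\Omega_{\rm SR}-\Omega(\Z)\}R_X(\Z)\big].
\]

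Next I would certify positive semidefiniteness of this residual by a law-of-total-variance decomposition in $\Z$. With $\bar{\mu}_a(\Z)=E\{\mu_a(\X)\mid\Z\}-E\{\mu_a(\X)\}$, so that $R_X(\Z)=\mathrm{diag}[\pi_a^{-1}\bar{\mu}_a(\Z)]$, the identity $R_X(\Z)\Omega_{\rm SR}R_X(\Z)=\mathrm{diag}[\pi_a^{-1}\bar{\mu}_a(\Z)^2]-\bar{\mu}(\Z)\bar{\mu}(\Z)^\T$ (from $\Omega_{\rm SR}=\mathrm{diag}[\pi^\T]-\pi\pi^\T$) absorbs the between-strata part of $\var\{\mu(\X)\}$, rewriting the difference as
\[
\Big(\mathrm{diag}\big[\pi_a^{-1}E\{\var(\mu_a(\X)\mid\Z)\}\big] - E\{\var(\mu(\X)\mid\Z)\}\Big) + \E\big[R_X(\Z)\,\Omega(\Z)\,R_X(\Z)\big].
\]
The second summand is PSD because $\Omega(\Z)$ is a covariance matrix and $R_X(\Z)$ is symmetric, so each $R_X\Omega R_X$ is a congruence of a PSD matrix. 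For the first, it suffices to show for any PSD $\Sigma$ that $\mathrm{diag}[\pi_a^{-1}\Sigma_{aa}]-\Sigma\succeq 0$, which follows from Cauchy--Schwarz: for any $c$, $c^\T\Sigma c\le(\sum_a|c_a|\sqrt{\Sigma_{aa}})^2\le(\sum_a\pi_a)\sum_a\pi_a^{-1}c_a^2\Sigma_{aa}=c^\T\mathrm{diag}[\pi_a^{-1}\Sigma_{aa}]\,c$, using $\sum_a\pi_a=1$. Applying this with $\Sigma=\var\{\mu(\X)\mid\Z\}$ pointwise and taking expectation completes the proof.

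The step I expect to be the main obstacle is the middle reduction: matching the symmetric and transposed covariance blocks so that (G1) cancels \emph{every} cross term, which requires careful tracking of which factors are diagonal (hence symmetric) and repeated use of the symmetry of $\Omega_{\rm SR}-\Omega(\Z)$. Once the difference is reduced to the two structured summands, positive semidefiniteness is routine—a congruence plus a one-line Cauchy--Schwarz inequality—so the algebraic cancellation enabled by (G1), not any analytic subtlety, is the crux.
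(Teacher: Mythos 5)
Your proposal is correct and follows essentially the same route as the paper's proof: it forms the difference of the two limiting covariance matrices from Theorem 1, expands via the laws of total variance and covariance, uses (G1) together with its transpose (valid since $D$, $R_X$, $R_Y$ are diagonal and $\Omega_{\rm SR}-\Omega(\Z)$ is symmetric) to cancel every cross term, and lands on exactly the paper's residual $\diag\{\pi_a^{-1}\E(\Var(\mu_a(\X)\mid\Z))\} - \E(\Var(\mu(\X)\mid\Z)) + \E\{R_X(\Z)\,\Omega(\Z)\,R_X(\Z)\}$, which is then shown positive semidefinite by the same congruence argument for the $\Omega(\Z)$ term. The only difference is cosmetic: you prove the diagonal-dominance inequality $\diag[\pi_a^{-1}\Sigma_{aa}]-\Sigma\succeq 0$ inline via a two-step Cauchy--Schwarz argument, whereas the paper imports it as Lemma 1 of \citet{Ye2021better}; your inline version is a correct, self-contained substitute.
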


{Condition (G1) is a novel condition for guaranteed efficiency gain under CAR. It can be simplified 
when $ \Omega(\Z) =\Omega_{\rm SR} $ under simple randomization. See also the result in Theorem 4.}

 Condition (B) is required for  Theorems  \ref{theo:asymptotic-normality}-\ref{corollary:efficiency}, and it requires knowing the form of 
 $ \Omega ( z )$ for a particular CAR scheme. 
 Importantly, condition (B) is not satisfied for Pocock and Simon's minimization.
 To  bypass condition (B), we establish the following theorem for $\aipw$ and $\jcf$ under a different condition (U) for universal applicability.  The result also strengthens the asymptotic normality in Theorems  \ref{theo:asymptotic-normality}-\ref{corollary:efficiency}, i.e., 
 the asymptotic distribution  of $\aipw$ or $\jcf$ is invariant across CAR schemes.

	\begin{theorem}[Universal applicability]\label{theo:universality}
 Assume Assumptions 1-3 for $\aipw$  or Assumptions 1 and 2* for $\jcf$.  	If
  \begin{align*}
      E \{  \Y -  \mu (\X)   \mid \Z\} =  \theta - E\{  \mu (\X )\}  \quad \mbox{almost surely,} \tag{U}
  \end{align*}
		then results \eqref{influence-function} and \eqref{AN}   hold 
  for $\aipw$ or $\jcf$ with 
 $$ 
    V \!=\! \mathrm{diag}\left[\pi_a^{-1} \var\left\{y_a - \mu_a( X) \right\}, a = 1, ..., k \right] \!+\! \cov\{ Y -  \mu (\X),  \mu (\X)\}\!+\! \cov\{ \mu(\X),  Y\}, $$
  i.e., 
		the asymptotic distribution of $\aipw$ or $\jcf$  is invariant to CAR schemes. Furthermore, (G1) in Theorem 3 reduces to the following condition 
  for $\aipw$ or $\jcf$   having guaranteed efficiency gain over $\bar Y$, 
  \begin{align*}
      \cov \{  Y -  \mu (\X), \mu (\X) \} = 0. \tag{G2}
  \end{align*}
	\end{theorem}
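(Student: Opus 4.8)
The plan is to leverage the asymptotic linear expansion \eqref{influence-function}, which holds for \emph{every} CAR scheme by Theorem~\ref{theo:asymptotic-normality}(i) for $\aipw$ and by Theorem~\ref{theo:ml} for $\jcf$; thus result \eqref{influence-function} is inherited for free and all the work lies in establishing \eqref{AN} with the simplified, scheme-free covariance. Since $\sqrt n\,o_p(n^{-1/2})=o_p(1)$, it suffices to find the limiting law of $n^{-1/2}\sumn(\phi_a,\,a=1,\dots,k)^T$. Writing $\tilde r_{a,i}=y_{a,i}-\mu_a(\X_i)-\theta_a+E\{\mu_a(\X)\}$, the summand admits the pointwise identity
\begin{align*}
\phi_a(A_i,y_{a,i},\X_i,\theta_a,\pi_a,\mu_a)=\Big(\tfrac{I(A_i=a)}{\pi_a}-1\Big)\tilde r_{a,i}+(y_{a,i}-\theta_a),
\end{align*}
because the two ``$\mu$'' terms telescope. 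The second piece is a clean i.i.d.\ average over potential outcomes, unaffected by randomization; the entire analysis therefore reduces to the ``imbalance-weighted'' average of $\tilde r_{a,i}$.

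First I would decompose the residual along the stratification variable, $\tilde r_{a,i}=E(\tilde r_{a,i}\mid\Z_i)+e_{a,i}$ with $E(e_{a,i}\mid\Z_i)=0$. Collecting individuals by level $\z_l$ isolates the only CAR-sensitive term,
\begin{align*}
\frac1{\sqrt n}\sumn\Big(\tfrac{I(A_i=a)}{\pi_a}-1\Big)E(\tilde r_{a,i}\mid\Z_i)=\sum_{l=1}^{L}\frac{E(\tilde r_a\mid\Z=\z_l)}{\pi_a}\,\frac{n(\z_l)}{n}\,\sqrt n\Big(\frac{n_a(\z_l)}{n(\z_l)}-\pi_a\Big).
\end{align*}
Condition (U) is exactly the statement that $E(\tilde r_a\mid\Z=\z_l)=\theta_a-E\{\mu_a(\X)\}-\theta_a+E\{\mu_a(\X)\}=0$ for every $l$ (equivalently $R_Y(\Z)=R_X(\Z)$, so the last term of $V$ in Theorem~\ref{theo:asymptotic-normality} vanishes), and hence this term is identically zero. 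The crucial point is that the coefficients $E(\tilde r_a\mid\Z=\z_l)$ are set to zero and multiply the imbalances $\sqrt n\{n_a(\z_l)/n(\z_l)-\pi_a\}$, which Assumption~\ref{assump: car}(iii) guarantees are only bounded in probability; we never use their limiting distribution, which is precisely why condition (B) can be dropped and the conclusion is universal over CAR schemes, including Pocock and Simon's minimization.

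What remains is $S_n=n^{-1/2}\sumn\xi_i$ with $\xi_{a,i}=(\tfrac{I(A_i=a)}{\pi_a}-1)e_{a,i}+(y_{a,i}-\theta_a)$. I would split $y_{a,i}-\theta_a=\{y_{a,i}-E(y_{a,i}\mid\Z_i)\}+\{E(y_{a,i}\mid\Z_i)-\theta_a\}$: the last summand is $\sigma(\Z_1,\dots,\Z_n)$-measurable and yields a scheme-free i.i.d.\ sum, while the remaining summands have conditional mean zero given $\mathcal F_n:=\sigma(\Z_1,\dots,\Z_n,A_1,\dots,A_n)$. Conditioning on $\mathcal F_n$, Assumption~\ref{assump: car}(i) renders the summands independent across $i$, so I would apply a conditional Lindeberg--Feller CLT to the conditionally centered part. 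The conditional Lindeberg condition is routine under the finite second-moment assumption; the key step is to show the conditional covariance concentrates in probability to a deterministic, scheme-independent matrix. Using $E\{I(A_i=a)\mid\Z_1,\dots,\Z_n\}=\pi_a$ (Assumption~\ref{assump: car}(ii)) together with $n_a(\z_l)/n(\z_l)\to\pi_a$ (Assumption~\ref{assump: car}(iii)), the conditional variance of the $a$th coordinate of $(\tfrac{I(A_i=a)}{\pi_a}-1)e_{a,i}$ converges to $\pi_a^{-1}(1-\pi_a)E\{\var(e_a\mid\Z)\}$, with analogous limits for the cross terms; a characteristic-function argument glues the conditionally centered part to the $\sigma(\Z_1,\dots,\Z_n)$-measurable part, delivering joint normality with covariance exactly the stated $V$, which no longer contains $\Omega(\Z)$.

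Finally, for the efficiency reduction I would specialize (G1). Because (U) forces $R_Y(\Z)=R_X(\Z)$, the right-hand side $E[\{R_Y(\Z)-R_X(\Z)\}\{\Omega_{\rm SR}-\Omega(\Z)\}R_X(\Z)]$ of (G1) is identically zero, so (G1) collapses to the matrix identity $\mathrm{diag}[\pi_a^{-1}\cov\{y_a-\mu_a(\X),\mu_a(\X)\},\,a=1,\dots,k]=\cov\{\Y-\mu(\X),\mu(\X)\}$. Equating off-diagonal $(a,b)$ entries with $a\ne b$ forces $\cov\{y_a-\mu_a(\X),\mu_b(\X)\}=0$, while equating diagonal entries forces $(\pi_a^{-1}-1)\cov\{y_a-\mu_a(\X),\mu_a(\X)\}=0$, hence $\cov\{y_a-\mu_a(\X),\mu_a(\X)\}=0$ since $\pi_a\in(0,1)$; together these are precisely (G2). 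The main obstacle is the concentration of the conditional covariance in the CLT step — verifying that within-stratum averages of $\var(e_a\mid\Z)$, weighted by the random, dependent assignment counts, converge to their population counterparts for an arbitrary CAR scheme is what makes universal applicability possible without condition (B).
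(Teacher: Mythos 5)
Your proposal is correct and follows essentially the same route as the paper: your three-term split of $\phi_a$ is a regrouping of the paper's five-component decomposition ($\phi^1+\phi^4$ for the conditionally centered part, $\phi^2+\phi^3$ for the stratum-imbalance part annihilated by (U), and $\phi^5$ for the $\sigma(Z)$-measurable part), and the paper likewise proves \eqref{AN} via a conditional Lindeberg--Feller CLT given $(\mathcal{A}_n,\mathcal{Z}_n)$ glued to the i.i.d.\ $Z$-part by an asymptotic-independence lemma, then observes that (U) forces $R_Y(\Z)=R_X(\Z)$ so the right side of (G1) vanishes. Your entrywise argument that the left side of (G1) being zero forces (G2) (using $\pi_a^{-1}-1\neq 0$) is a slightly more explicit two-way reduction than the paper's one-direction check, but otherwise the arguments coincide.
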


The universal applicability in Theorem 4 provides valid inference under Pocock-Simon's minimization without condition (B) or knowledge of $\Omega(z)$. Additionally, it also facilitates inference about functions of $ \theta$ using a single and universally valid inference procedure applicable to all CAR schemes satisfying Assumption 1. In terms of conditions for asymptotic normality, Theorem \ref{theo:universality} complements Theorems 1-3 in the sense that 
condition (U) in Theorem 4 is a restriction on the limit function  $ \mu (\X) $, whereas condition (B) in Theorems 1-3 is on the CAR schemes and the information about  $ \Omega( z)$.  In terms of sufficient conditions on guaranteed efficiency gain, (U)+(G2) in Theorem 4 implies (G1), but (G1) has to be applied with condition (B).  

A situation under which (U), (G1), and (G2) are all satisfied is when the working model is correctly specified, i.e., $\bmu (\X) =E(\Y \mid \X)$, and 
	$\hat \mu_a$ is correctly constructed with $\X $ including $\Z$. In fact, in this situation, the AIPW estimator also achieves the semiparametric efficiency bound (see, e.g. \cite{Tsiatis:2008aa}).

Note that (U) may not hold 
even if a nonparametric method such as the machine learning is applied,
when the working model is misspecified.  
 To restore universal applicability  if the initial $\hat \mu_a$ does not satisfy condition (U), which is particularly important for Pocock and Simon's minimization, 
a simple approach 	is to replace $\hat \mu_a (\X_i)$  with 
$$
	\hat{\mu}^{(U)}_a (\X_i) = \hat \mu_a(\X_i) + \frac{1}{n_a(\Z_i)} \sum_{j: A_j = a, \Z_j = \Z_i} \{ y_{a,j} - \hat \mu_a(\X_j) \} 
$$
	for every $a$. Since the limiting value of $\hat{\mu}^{(U)}_a(\X)$ is
	$  \mu_a(\X) \!+ \! E \{ y_a - \mu_a(\X) \mid \Z\} $
	for every $a$, condition (U) is satisfied  with $\bmu (\X) $ replaced by  $\big(\mu_1(\X) \!+ \! E \{ Y_1 - \mu_1(\X) | \Z\}, \dots,\mu_k(\X) \!+ \! E \{ Y_k - \mu_k(\X) \mid \Z\} \big)^T$. This can be viewed as  a
 ``stratum-specific internal bias calibration'' and is called $Z$-calibration for simplicity.
 
  When the initial $\hat\mu_a$ does not satisfy (G1) or (G2),  under simple randomization, \cite{cohen2021noharm} proposed a linear calibration strategy that replaces $\hat \mu_a (X)$ with $\tilde \gamma_a^T \hat \mu (X)$, where $\hat \mu (X) = (\hat \mu_1 (\X), ..., \hat \mu_k (\X)  )^T$ and $\tilde \gamma_a$ is 
 {the coefficient vector of $\hat \bmu (\X)$ from a linear regression of $y_a$  on ``covariate" $\hat \bmu (\X)$ plus an intercept} with data under treatment $a$. 
  However, this strategy only works under simple randomization.

	\section{Joint Calibration for Guaranteed Efficiency Gain \\ and Universal Applicability} \label{subsec: joint calibration} 
	
We aim to derive an estimator of $\theta$ that has both universal applicability and guaranteed efficiency gain. {In Section \ref{sec: asymptotic theory}.3   we have introduced two calibration techniques, the $Z$-calibration for universal applicability and linear calibration for guaranteed efficiency gain under simple randomization. It 
	seems  to suggest that we can first achieve  universal applicability  through $Z$-calibration  to obtain $\hat{\mu}^{(U)} (X) = \left(\hat{\mu}^{(U)}_1(\X),\dots,\hat{\mu}^{(U)}_k(\X)\right)^T$, and then, to obtain guaranteed efficiency gain, we apply a second calibration, the linear calibration proposed by 
	\cite{cohen2021noharm} with $\hat \mu (X)$ replaced by $\hat{\mu}^{(U)} (X)$. However, this second calibration step may lose universal applicability and thus does not necessarily lead to guaranteed efficiency gain under CAR.  
	
	To simultaneously achieve both  universal applicability  and guaranteed efficiency gain,
	we propose the following joint calibration with  regressor $\hat \W = \left(\Z^T, \hat\bmu(\X)^T \right)^T$  under each treatment $a=1,...,k$.   We assume that the regressors have been appropriately adjusted to avoid collinearity such that $\Sigma_W = \var (W) $ is positive definite, where $W = \left(\Z^T, \bmu(\X)^T \right)^T$. 
 Define 
    $$\hat \mu_a^* (\hat W_i) =  \hat {\gamma}_a^T  \hat\W_i , \qquad a=1,...,k, $$
 where $\hat \W_i$ is the value of $\hat \W$ from patient $i$ and {$\hat\gamma_a$ is the coefficient vector of $\hat W$ from a linear regression of $y_a$ on  covariate $\hat W$ plus an intercept with data under treatment $a$}. 
 The joint calibration estimator $\jcal$ is defined as the AIPW estimator in \eqref{eq: aipw} or the cross-fitted estimator in (\ref{eq: cf})
 with $\hat\mu_a(\X_i)$ replaced by $\hat \mu_a^* (\hat W_i)$ for $a=1,..., k$.


 Theorem \ref{theo: double} summarizes the asymptotic properties of $\jcal$, showing that joint calibration can simultaneously achieve universal applicability and guaranteed efficiency gain.
 
	\begin{theorem}[Asymptotic Properties of $\jcal$ under CAR]\label{theo: double} 
Assume Assumptions 1-3 for $\aipw$  or Assumptions 1 and 2* for $\jcf$. \\
  (i)  Regardless of which CAR scheme is used,
 \begin{align*}
     \sqrt{n} (	\jcal  -  \theta     ) \ \xrightarrow{d} \ N( 0, \vjcal ),
 \end{align*}
 where $\vjcal = {\rm diag} \big[ \pi_a^{-1} \var (y_a -  \gamma_a^T  W), a=1,...,k \big] +  \Gamma^T  \Sigma_W  \Gamma$,  $\gamma_a = \bsigma_W^{-1} \cov (\W , y_a)$,  and $\Gamma = (\bgamma_1,..., \bgamma_k)$.\\
(ii) For all randomization schemes satisfying (B), $\jcal  $ has guaranteed efficiency gain over the sample mean $\bar{ Y}$.
	\end{theorem}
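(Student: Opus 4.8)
The plan is to recognize $\jcal$ as an instance of the AIPW estimator in \eqref{eq: aipw} (or the cross-fitted estimator in \eqref{eq: cf}) whose working model is the linear model $\hat\mu_a^*(\hat\W) = \hat\gamma_a^\T \hat\W$, and to identify its stability limit as the population linear projection $\mu_a^*(\W)=\gamma_a^\T \W$ with $\gamma_a = \bsigma_W^{-1}\cov(\W,y_a)$. Because the AIPW estimator is invariant to adding a constant to the working model, the fitted intercept is immaterial and the relevant limiting model is $\gamma_a^\T\W$ up to an additive constant. Once this limit is pinned down, part (i) follows by verifying that $\mu_a^*$ satisfies condition (U) and invoking Theorem \ref{theo:universality}, and part (ii) follows by checking condition (G2) and invoking Theorem \ref{theo:universality} again.

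First I would establish the limit. Since the joint law of $(\W, y_a)$ is unaffected by treatment and, within arm $a$, the arm-$a$ subsample is asymptotically representative of the population under Assumption \ref{assump: car}(i), the arm-$a$ empirical second moments of $\hat\W$ and cross moments of $\hat\W$ with $y_a$ converge to their population analogues; combined with $\|\hat\mu_b - \mu_b\|_{L_2}\to 0$ from Assumption \ref{assump: stability} (resp.\ 2*) this yields $\hat\gamma_a \xrightarrow{p} \gamma_a = \bsigma_W^{-1}\cov(\W,y_a)$, where positive definiteness of $\bsigma_W$ guarantees $\gamma_a$ is well defined. I would then verify that $\hat\mu_a^*$ inherits Assumptions \ref{assump: stability}--\ref{assump: simple} (resp.\ 2*) from $\hat\mu$: writing $\hat\mu_a^*(\hat\W) - \mu_a^*(\W) = \hat\gamma_a^\T(\hat\W - \W) + (\hat\gamma_a - \gamma_a)^\T \W$, the first term is controlled by $\|\hat\mu-\mu\|_{L_2}$ because the $\Z$-block of $\hat\W-\W$ vanishes, and the second by $\hat\gamma_a-\gamma_a\to 0$, giving stability; the complexity condition transfers because $w\mapsto\gamma^\T w$ over a shrinking neighborhood of $\gamma_a$, composed with the Donsker class containing $\hat\mu$, remains Donsker (resp.\ the cross-fitted version directly satisfies 2*).

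Next I would verify condition (U) for $\mu^* = \Gamma^\T\W$. Because $\Z$ enters $\W$ through its full set of stratum indicators (the coding is chosen to avoid collinearity), the linear span of the regressors contains every $I(\Z=\z_l)$, so the population normal equations force the projection residual $r_a := y_a - \gamma_a^\T\W$ to satisfy $\cov(r_a,\W)=0$ and in particular $E(r_a\mid \Z)=0$. Hence $E\{y_a - \gamma_a^\T\W \mid \Z\}$ is constant in $\Z$, which is exactly condition (U). Theorem \ref{theo:universality} then delivers the CAR-invariant asymptotic normality with covariance $V = \mathrm{diag}[\pi_a^{-1}\var\{y_a-\gamma_a^\T\W\}, a=1,\dots,k] + \cov\{\Y-\mu^*(\W),\mu^*(\W)\} + \cov\{\mu^*(\W),\Y\}$. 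The same orthogonality $\cov(r_a,\W)=0$ makes the middle term vanish and reduces the $(a,b)$ entry of the last term to $\cov(\gamma_a^\T\W, y_b)=\gamma_a^\T\bsigma_W\gamma_b$, so that $\cov\{\mu^*(\W),\Y\} = \Gamma^\T\bsigma_W\Gamma$ and $V=\vjcal$, proving (i). For part (ii), this same orthogonality gives $\cov\{\Y-\mu^*(\W),\mu^*(\W)\}=0$, which is precisely condition (G2); since (U) already holds, Theorem \ref{theo:universality} shows (U) together with (G2) implies (G1), and therefore for every CAR scheme satisfying (B), $\jcal$ has guaranteed efficiency gain over $\bar\Y$.

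I expect the main obstacle to be the generated-regressor aspect: the regressor $\hat\W$ contains the possibly machine-learning estimate $\hat\mu(\X)$, so I must argue that $\hat\gamma_a\to\gamma_a$ and that the stability and complexity conditions transfer to the composite map $\hat\gamma_a^\T(\Z,\hat\mu(\X))$ with enough uniformity, while simultaneously handling the dependence among the arm-$a$ indices induced by CAR when passing to population moments. Once the limit $\mu^*=\Gamma^\T\W$ is justified and shown to inherit the required assumptions, the verification of (U) and (G2) and the algebraic reduction of $V$ to $\vjcal$ are routine consequences of the OLS normal equations, and the asymptotic statements are immediate applications of Theorem \ref{theo:universality}.
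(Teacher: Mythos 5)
Your proposal is correct and follows essentially the same route as the paper's own proof: establish $\hat\gamma_a \xrightarrow{p} \gamma_a = \Sigma_W^{-1}\cov(W, y_a)$ and show the calibrated working model inherits Assumptions 2--3 (resp.\ 2*), verify (U) and (G2) from the population normal equations (the regressor span containing every stratum indicator, with the intercept absorbed by the AIPW form's invariance to additive constants), and conclude via Theorem \ref{theo:universality} together with the algebraic identity $\cov\{\mu^*(W), Y\} = \Gamma^T \Sigma_W \Gamma$. The paper's appendix executes exactly the steps you flag as the main obstacle---consistency of the regression coefficients under CAR dependence via the same empirical-process machinery as in Theorem \ref{theo:asymptotic-normality}, and a covering-number construction transferring the Donsker condition to the composite class---so your outline matches its structure point for point.
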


An application of joint calibration and Theorem \ref{theo: double} is illustrated in the following example where a generalized linear model (GLM) is used as a working model.  

\begin{example}	
	Consider a GLM with a canonical link as a working model for $E(y_a \mid \X ) $,  $a=1,...,k$. In this case,
 $\hat\mu_a(\X) = g^{-1}  (\hat\alpha_a + \hat\bbeta_a^T \X) $, 
 where $\hat{\alpha}_a$ and $\hat \bbeta_a$ are fitted values using the GLM and 
 $g^{-1} ( \cdot )$ is a known differentiable and invertible function whose inverse $g$ is the canonical link function. The specific form of $g$ depends on the problem being considered.   
	For example, 
	if responses are binary, then $g^{-1} (t) = {\rm expit} (t) = \exp(t)/\{1+\exp(t)\}$ (logistic working model);
	if responses are non-negative integer valued, then $g^{-1}(t) = \exp(t)$ (Poisson working model); if responses are continuous, then $g^{-1} (t) = t$ (linear working model). Since the working model for each $a$ is a finite-dimensional parametric model, Assumptions \ref{assump: stability}-\ref{assump: simple} are satisfied. 
	The limit of $\hat\mu_a(\X) $ is
	$\mu_a(\X) = g^{-1}  (\alpha_a + \bbeta_a^{ T} \X) $,  
	where $\alpha_a$ and $ \bbeta_a $ are the limits of 
	$\hat\alpha_a$ and $\hat\bbeta_a$, respectively, satisfying 
	\begin{equation}\label{pscore}
	E \left[ \{ Y_{a} -  g^{-1}( \alpha_a + \bbeta_a ^{T}\X) \}  {\textstyle {1 \choose \X}}  \right] = 0, \qquad a=1,...,k.
	\end{equation}
	The first line of (\ref{pscore}) implies that $E \{  \mu (\X)\} = \btheta$. The sample analog of (6) implies that $\hat{\mu}_a(X)$ satisfies the prediction unbiasedness property in (\ref{PU}). {If all joint levels of the discrete strata variable $\Z$ are included in $\X$, then the second line of (\ref{pscore}) implies that} $	E [ \{ y_a-  \mu_a (\X)  \} \Z ] = 0$ and thus  $	E \{   y_a -  \mu_a (\X)   \mid \Z \} = 0$ for every $a$.  Consequently, condition (U) in Theorem 3 is satisfied. When the working model is misspecified and $g^{-1}$ is nonlinear, however, neither (G1) nor (G2) is satisfied. Thus, the AIPW estimator $\aipw$ that uses the GLM conditional mean estimate $\hat{\mu}_a(X)$ may not have guaranteed efficiency gain over the benchmark sample mean vector $\bar \Y$ 
\citep{Guo:2021ua,cohen2021noharm}.

 	To achieve both guaranteed efficiency gain and universal applicability, we can apply  the joint calibration and use $\jcal$. 
	With $\hat \mu_a(\X)$ replaced by $\hat\mu_a^*(\X) = \hat\gamma_a^T \hat W_i$ defined in  joint calibration, both (U) and (G2) in Theorem 4 hold and, hence, 
	$\jcal$ achieves both guaranteed efficiency gain over $\bar Y$ and universal applicability.
\end{example} 

\section{Robust Variance Estimation}
\label{sec: var}
The results in Theorems 1-5  allow us to make asymptotically valid inference about $ \theta$ such as constructing confidence intervals for functions of $ \theta$, provided that we can derive 
an estimator of $ V$ in (\ref{AN}) that is consistent and robust 
against misspecification of 	working models. 

 Let $s^2_{a} $ be the sample variance of $y_{a,i} $ for patients in treatment arm $a$, $\hat Q$ be the $k\times k$ matrix whose $(a,b)$th component  is the sample covariance of $y_{a,i}$ and $\hat \mu_b (\X_i)$ for patients in arm  $a$, 
	$\hat \Sigma$ be the sample covariance matrix of $(\hat \mu_1 (\X_i), \dots, \hat \mu_k (\X_i))^T$ based on the entire sample, $\hat Q_{aa}$ and $\hat \Sigma_{aa}$ be respectively the $(a,a)$th component of $\hat Q$ and $\hat \Sigma$, $\hat { R}_Y(\z) = {\rm diag}\big[ \pi_a^{-1}  \{\bar y_{a} ( z) - \hat{\btheta}_{a}\} ,  a=1,\dots, k \big]$, $\hat { R}_X(\z) = {\rm diag}\big[ \pi_a^{-1}  \{\bar \mu_{a} ( z) - \bar \mu_{a}\}, a=1,\dots, k \big]$, $\bar y_{a} ( z)$ be the sample mean of $y_{a,i}$ for patients in treatment arm $a$ and stratum $\z$, $\hat{\btheta}_{a}$ be the $a$th component in $\aipw$ or $\jcf$, 
	$\bar\mu_a (\z) $ be the sample mean of $\hat \mu_a (\X_i)$'s for patients in stratum $\z$, and  $\bar\mu_a  $ be the sample mean of $\hat \mu_a (\X_i)$'s for all patients. Then, $ V $ in (\ref{AN}) can be consistently estimated by 
	\begin{align*}
	 \hat { V} = & \
	{\rm diag}\left[ \pi_a^{-1} ( s_a^2 - 2 \hat Q_{aa} + \hat\Sigma_{aa}), \, a=1,...,k\right] + 
	\hat Q + \hat Q^T- \hat \Sigma \\
	& \
	- \sum_{\z } \frac{n(\z)}{n}  \{ \hat{ R}_Y (\z) - \hat{ R}_X(\z)\} \{  \Omega_{\rm SR} - \Omega(\z) \}   \{ \hat{ R}_Y (\z) - \hat{ R}_X(\z)\} , 
	\end{align*}
when the randomization scheme satisfies condition (B), regardless of whether the working model is misspecified or not. This estimator requires knowing $ \Omega(\z) $.
 When condition (U) is satisfied, $\hat { V} $ can be simplified by dropping the last term, without requiring condition (B) and knowing $ \Omega(\z) $. 
 
For the joint calibration estimator $\jcal$, $\vjcal$ in Theorem 5 can be consistently estimated  using the formula of $\hat V$, {but with the last term dropped, $\hat \mu_a $ replaced by $\hat \mu_a^*$, and $\hat\theta_a$ replaced by the $a$th component of $\jcal$}.

 }
 
 For estimation and inference about a differentiable function of $\theta$, a robust standard error can be obtained based on $\hat { V}$ and the delta method.

			\section{Simulations}
			\label{sec: simu}

First,  in a small simulation we show that the g-computation estimator may be biased when prediction unbiasedness property (\ref{PU}) does not hold, while the AIPW estimator is approximately unbiased. Figure \ref{fig:neg-bin} shows the density plot of point estimates for 1,000 simulated trials using g-computation and AIPW estimators with a negative binomial working model, which is commonly used to model count responses. We can see that the g-computation estimator has a positive bias in estimating the treatment effect $\theta_2 - \theta_1 =  3.25$, but the AIPW estimator corrects for this bias.

\begin{figure}[ht]
    \centering
    \includegraphics[scale=.7]{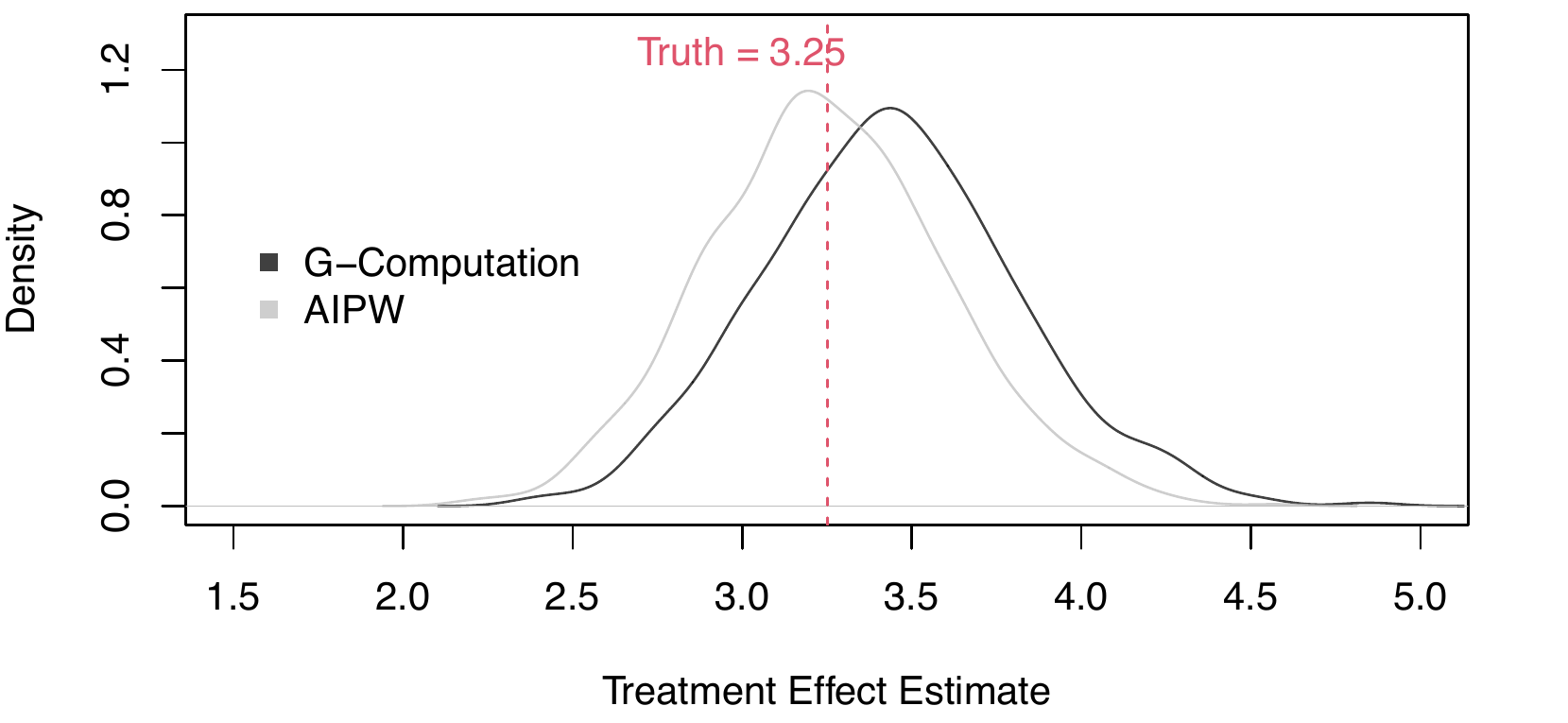}
    \caption{Treatment effect estimates for 1000 simulated two-arm trials with $n = 500$ based on a Poisson data generating process, under simple randomization with $\pi_1 = 1/3$. The response means conditional on  $X = (X_c, X_b)^T$  are  $E(y_1 |X) = \log(1 + X_{c} + 6 X_{c}^3 + X_{b})$ and $E(y_2|X) = X_{c} + 10 X_{c}^2 + X_{b}$, where
    $X_c$ is uniformly distributed on the interval $(-5,5)$, $X_b$ is binary with $P(X_b= 1) =0.5$, and $X_c$ and $X_b$ are independent. The working model  is negative binomial in covariate $X$ including treatment-by-covariate interactions, with an unknown dispersion parameter.}
    \label{fig:neg-bin}
\end{figure}

	Next, we perform a simulation study to assess and compare the finite-sample performance of all aforementioned estimators, in a two-arm trial ($k=2$) with binary responses. We consider the following two cases. \vspace{2mm}

 \noindent
 Case I. $X = (X_c, X_b)^T$, where $X_c$ is uniformly distributed on the interval  $(-5,5)$, $X_b$ is binary with $P(X_b= 1) =0.5$, and $X_c$ and $X_b$ are independent. Given $X$, the true models for potential binary responses $y_1$ and $y_2$ are 
            \begin{align*}
           E(y_1 \mid X) &= {\rm expit} \{0.5  + 0.5 X_{c} + 0.5 X_{b}- 0.2X_{c}^2 \} \\
                E(y_2 \mid X) &= {\rm expit}\{ 0.2 + 0.5 X_{c} + 0.5 X_{b} \}.
               \end{align*}
 We consider $\pi_1 = \pi_2 = 0.5$
 under three randomization methods: 
			simple randomization, stratified permuted block randomization with a block size 6, and Pocock and Simon's minimization. The $Z$ used in CAR has joint levels of  $X_{b}$ and $X_{c}$ dichotomized at its median. 
   
 \vspace{2mm}

 \noindent
 Case II. 
$\X = (X_{c1}, X_{c2}, X_{c3}, X_{b})^T$,  where $X_{c1}, X_{c2}$, and  $X_{c3}$ are  uniformly distributed  on $(-5, 5)$, $X_b$ is binary with $P(X_b= 1) =0.5$, and $X_{c1}$, $X_{c2}$, $X_{c3}$, and $ X_b$ are independent. Given $X$, the true models for potential binary responses $y_1$ and $y_2$ are 
               \begin{align*}
           E(y_1 \mid X) &= {\rm expit} \{0.2  - 0.5X_{c1} + 0.5 X_{c2} + X_{c3}+ 0.2 X_b  \\
                & \quad + X_{c1}(X_{c2} + X_{c3}) -0.2 X_{c1}^2 X_{b}   - 0.02 X_{c1}^2 (1-X_{b})   \}, \\
                E(y_2 \mid X) &=  1 -0.02  X_{c1}^2 - 0.02 X_{c2}^2 .
               \end{align*}
           We consider $\pi_1 = 2/3$ and $\pi_2 = 1/3$,  under three randomization methods given in Case I. The $Z$ used in CAR has joint levels of  $X_{c1}$, $X_{c2}$, and $X_{c3}$ dichotomized at their medians. 
\vspace{2mm}   
  
  The estimand of interest for both cases is the average treatment effect $\theta_2 - \theta_1 $, where $\theta_a = \E(y_a)$. 
    The following estimators of $\theta_2 - \theta_1$ are considered: First, the 
 sample mean $\bar \Y$. Second, the AIPW, linear calibration (LC), and joint calibration (JC) estimators, using  
    the logistic model described in Example 1 of Section 4 as a working model for each treatment arm, i.e.,  $\hat \mu_a (X) = {\rm expit} ( \hat\alpha_a + \hat\beta_a^T X)$ for each $a$. {In Case I, this logistic working model is correct for treatment arm 2 but incorrect for arm 1; in Case II, the working models are incorrect for both arms, with deviations more substantial than those in Case I.}  Third, the AIPW and JC estimators, implemented using the random forest method by the \textsf{SuperLearner R} package separately for each arm, with 5-fold cross-fitting. 
    Lastly, the AIPW estimator based on the true response model, which is an oracle estimator since it cannot be computed in practice when the true response model is  unknown, and is considered for benchmark purposes.  
			
			The simulation results with {$n= 1,000$ based on $5,000$ simulations} are reported in Table \ref{tab:simulation-main}. The results contain the simulation average of bias, simulation standard deviation (SD), simulation averages of the robust estimated standard error (SE)  developed in Section 5
			and the naive SE obtained by treating CAR as simple randomization, and simulation coverage probability (CP) of the 95\% confidence intervals based on asymptotic normality and the two SEs.

			The following is a summary for the results in Table \ref{tab:simulation-main}. \vspace{2mm}
   
			\begin{enumerate}[topsep=1pt]
				\item  All estimators have negligible biases compared to their SDs.
              Our proposed robust SEs are very accurate in all cases. As a result, the CPs of confidence intervals using correct SEs are all close to the nominal level 95\%. All these results support our asymptotic theory developed in Sections 3-4. 
				\item In terms of SD, the performance  of an estimator is primarily influenced  by the proximity of its working model to the true model regarding the conditional mean of the response given the covariates. 
    The oracle estimator performs the best since it uses the true model.  In Case II, where the working logistic model significantly deviates from the true model, estimators employing random forest clearly outperform the AIPW estimators that rely on the logistic working model. The estimators using the logistic working model and random forest are comparable in Case I, where the logistic working model is either identical to or not significantly different from the true model. Although the sample mean consistently performs the worst, it can sometimes be comparable to the AIPW estimators under poor working models. 
\item   The use of  CAR improves SD, but the effect is not as large as modeling since CAR uses dichotomized covariates. 
   Under CAR, if universal applicability does not hold, the naive SE is not the same as the correct SE and tends to overestimate, leading to an inflated CP. 
Thus, it is important to use the correct SE that accounts for CAR.  Additionally, if universal applicability does not hold, there is no theoretically valid SE and confidence interval under Pocock and Simon's minimization. 
\item  {The JC estimator  performs almost identically under simple randomization, stratified permuted block, and Pocock and Simon’s minimization, which confirms   
 our asymptotic theory on its universal applicability}. Although applying JC usually reduces SD, the magnitude of the reduction is not substantial when $Z$ is already used either in AIPW or in CAR, indicating that modeling and CAR are more useful than JC in efficiency gain.   

 \end{enumerate}
   
\section{Application to Clinical Trial Data}
			\label{sec: real}
			
			We analyze data from a phase 3 randomized clinical trial comparing a new basal insulin (insulin peglispro) to insulin glargine as the control in
			 insulin-naive patients with type-2 diabetes \citep{davies2016basal}. 
			A total of 1516 patients were randomized in a 2:1 ratio to insulin peglispro or insulin glargine using permuted block randomization (with a block size of 6), stratified according to {country (23 levels)}, baseline hemoglobin A1C (HbA1C) ($\leq$ 8.5 and $>$ 8.5 \%), low-density lipoprotein cholesterol (LDL; $<$100 and $\geq$100 mg/dL), {baseline sulfonylurea or meglitinide use}. The endpoint we consider is an indicator of having an average of more than two hypoglycemic events per 30 days, during weeks 52-78 of follow-up. The estimand that we are interested in is the linear contrast of treatment group means.
			
			We report the unadjusted analysis based on sample means. In addition, we consider two methods to adjust for the following covariates: age, sex, body mass index (BMI), country, baseline HbA1c, baseline LDL, baseline sulfonylurea or meglitinide use, {and baseline hypoglycemic events count}. As we did in the simulation section, the first method uses the logistic model described in Example 1 of Section 4 as a working model for each treatment arm.
			The second method fits a separate ensemble of random forests for each arm using the \textsf{SuperLearner R} package.  We also perform linear and joint calibration on the logistic working model, and we perform joint calibration on the random forest.  
			
		Our results in Table \ref{tb: BIAQ} show that patients receiving insulin peglispro were significantly less likely on average to experience more than two hypoglycemic events per 30 days within the 52-78 week follow-up period. All the point estimators are close to each other (with negligible differences compared to the standard errors). {Covariate adjustment 
  using the logistic working model has a 5.6\% variance reduction compared to the unadjusted sample mean, which roughly corresponds to saving 5.6\% sample size. 
  This saving is amplified to 23\% if the naive variance estimator is used for unadjusted sample mean, showing the importance of using correct variance estimators. The random forest method does not yield any efficiency gain over the use of logistic working model, which may be because the logistic model is not far away from the true model. }

{The joint calibration enjoys universality, and it is the only method for which the naive standard errors developed under simple randomization are correct. 
For all other methods, the naive standard errors are too conservative because stratified permuted block randomization was applied. 
Note that the unadjusted sample mean is actually based on stratified permuted block randomization and is much better than the one under simple randomization, which explains why it cannot be substantially further improved by covariate adjustment in the analysis stage}. 
			   
\section{Conclusion and Recommendation}

 From our theoretical and empirical findings, we draw the following conclusions and recommendations. When adjusting for baseline covariates, while AIPW with any misspecified working model can still lead to valid inference of the treatment effect, using a working model that closely mirrors the true model can significantly improve efficiency compared to methods that do not adjust for covariates. Machine learning methods, compared to parametric approaches, stand out for their flexibility and capability to closely approximate the true model.  Joint calibration provides a simple strategy that ensures universal applicability (allowing for the use of a universal inference procedure) and provides a guaranteed efficiency gain  over the unadjusted method. In all cases, it is critical to use variance estimators which are robust to model misspecification, and which take the covariate-adaptive randomization scheme into account.


\begin{table}[htbp!]
\small
\centering
\caption{Simulation results for estimating $\theta_2 -\theta_1 = 0.167$ under case I and $\theta_2 -\theta_1 = 0.164$ under case II, based on $n= 1000$
                              and 5000 simulation replicates. Entries have all been multiplied by 100. The naive SE and CP use the method ignoring CAR, with empty entries  under simple randomization, oracle, or JC. Under  minimization, some entries are marked with ``--'' because the correct SE and CP are not available. }\label{tab:simulation-main}
                            
\begin{tabular}{cccccccccc}
  \hline
& & & & & & \multicolumn{2}{c}{Correct} & \multicolumn{2}{c}{Naive} \\  \smallskip 
& CAR & Working model & Method & Bias & SD & SE  & CP  & SE  & CP  \\ \hline
Case I & Simple  & 
& Sample mean & 0.09 & 3.13 & 3.12 & 94.86 &  &  \\ \cline{4-8}
&& Oracle & AIPW & 0.07 & 2.66 & 2.67 & 95.00 &  & \\ \cline{4-8}
&& Logistic & AIPW  & 0.05 & 2.74 & 2.76 & 95.14 & &  \\ 
& & &LC    & 0.06 & 2.70 & 2.71 & 95.12 &  &  \\ 
&& & JC   & 0.06 & 2.67 & 2.67 & 95.04 & &  \\ \cline{4-8}
&& Random forest & CF  & 0.08  & 2.83 & 2.84 & 95.00 &  &  \\ 
&& & JC  & 0.07 & 2.68 & 2.69 & 95.04 &  &  \\ \cline{2-10}
& Permuted block & & Sample mean & 0.07 & 2.80 & 2.81 & 95.00 & 3.12& 97.36 \\ \cline{4-10}
&& Oracle & AIPW & 0.07 & 2.65 & 2.66 & 95.16 &  &  \\ 
\cline{4-10}
&& Logistic & AIPW  & 0.06 & 2.75 & 2.75 & 95.06 & 2.76 & 95.08 \\ 
&& & LC    & 0.06 & 2.69 & 2.69 & 95.04 & 2.71 & 95.18 \\ 
& & & JC    & 0.06 & 2.67 & 2.67 & 94.90 &  &  \\ 
 \cline{4-10}
&& Random forest& CF & 0.07 & 2.71 & 2.73 & 95.14 & 2.84 & 96.18 \\ 
&& & JC & 0.07 & 2.67 & 2.69 & 95.00 &  &   \\ \cline{2-10}
&Minimization & & Sample mean & 0.09 & 2.83 & -- & -- & 3.12 & 96.56 \\ 
 \cline{4-10}
&& Oracle & AIPW & 0.10 & 2.68 & 2.66 & 94.58 &  &  \\ 
  \cline{4-10}
& & Logistic & AIPW  & 0.08 & 2.76 & -- & -- & 2.76 & 94.68 \\ 
& & & LC   & 0.08 & 2.71 & -- & -- & 2.71 & 94.66 \\ 
& & & JC    & 0.09 & 2.69 & 2.67 & 94.58 & &  \\ \cline{4-10}
&& Random forest & CF  & 0.09 & 2.74 & -- & -- & 2.84 & 95.60 \\ 
&& & JC  & 0.09 & 2.70 & 2.69 & 94.82 & &  \\ 
 \hline
 Case II & Simple  & 
& Sample mean & -0.91 & 3.32 & 3.29 & 94.92 &  &  \\ \cline{4-8}
&& Oracle & AIPW & -1.03 & 2.59 & 2.55 & 94.36 &  & \\ \cline{4-8}
&& Logistic & AIPW  & -0.69 & 3.29 & 3.25 & 94.68 & &  \\ 
& & &LC    & -0.68 & 3.29 & 3.25 & 94.64 &  &  \\ 
&& & JC   & -1.08 & 3.02 & 2.96 & 94.44 & &  \\ \cline{4-8}
&& Random forest & CF  & -1.11  & 2.79 & 2.80 & 95.06 &  &  \\ 
&& & JC  & -1.15 & 2.75 & 2.74 & 94.64 &  &  \\ \cline{2-10}
& Permuted block & & Sample mean & 0.05 & 2.96 & 2.97 & 94.74 & 3.29 & 96.68 \\ \cline{4-10}
&& Oracle & AIPW & -0.15 & 2.59 & 2.54 & 94.68 &  &  \\ 
\cline{4-10}
&& Logistic & AIPW  & 0.12 & 2.96 & 2.96 & 94.74 & 3.25 & 96.36 \\ 
&& & LC    & 0.13 & 2.96 & 2.96 & 94.68 & 3.25 & 96.34 \\ 
& & & JC    & 0.10 & 2.96 & 2.95 & 94.62 &  &  \\ 
 \cline{4-10}
&& Random forest& CF & -0.01 & 2.74 & 2.75 & 95.12 & 2.80 & 95.26 \\ 
&& & JC & -0.05 & 2.74 & 2.74 & 95.12 &  &   \\ \cline{2-10}
&Minimization & & Sample mean & -0.59 & 3.01 & -- & -- & 3.28 & 96.76 \\ 
 \cline{4-10}
&& Oracle & AIPW & -0.49 & 2.62 & 2.54 & 94.38 &  &  \\ 
  \cline{4-10}
& & Logistic & AIPW  & -0.53 & 3.01 & -- & -- & 3.24 & 96.48 \\ 
& & & LC   & -0.53 & 3.01 & -- & -- & 3.24 & 96.48 \\ 
& & & JC    & -0.58 & 3.00 & 2.95 & 94.68 & &  \\ \cline{4-10}
&& Random forest & CF  & -0.52 & 2.78 & -- & -- & 2.80 & 95.10 \\ 
&& & JC  & -0.48 & 2.77 & 2.74 & 94.68 & &  \\ 
 \hline
 \end{tabular}
\end{table}
	
\begin{table}[htbp!]
				\centering
				\caption{Analysis results in the real data example. Entries (except for the p-values) have all been multiplied by 100. 
    Permuted block randomization is applied, stratified according
to country,  baseline HbA1C,  LDL, and sulfonylurea or meglitinide use.
The endpoint is an indicator of having an average of more than two hypoglycemia events in 30 days. Adjusted 	covariates in the analysis stage are age, sex, BMI, country, baseline HbA1C, LDL cholesterol,  sulphonylurea or meglitinide use, and baseline hypoglycemic event count. The naive SE and p-value use the method under simple randomization. 
					The entry under naive SE or p-value is empty when naive = correct (i.e., when JC is applied).}\label{tb: BIAQ} 
 \begin{tabular}{cccccccc}
					\toprule
     & & & \multicolumn{2}{c}{Correct} & &\multicolumn{2}{c}{Naive} \\ 
					Working model & Method & Estimate  & SE  & p-value  && SE   &  p-value  \\ 
					\hline 
					& Sample mean & -8.58   & 2.46       & $<$0.001        &  & 2.73     & 0.002       \\
					\hline 
					 Logistic & AIPW  & -8.75   & 2.39       & $<$0.001        &  & 2.50     & $<$0.001      \\
					& LC          & -8.73   & 2.39       & $<$0.001        &  & 2.50     & $<$0.001      \\
					& JC          & -8.94   & 2.39       & $<$0.001        &  &        &          \\
					\hline  
					Random forest & CF        & -8.51   & 2.45       & 0.001        &  & 2.70      & 0.002      \\
					& JC          & -8.26   & 2.42      & 0.001        &  &      &          \\ 
					\bottomrule
				\end{tabular}
			\end{table}

\newpage
\bibliographystyle{apalike}
\bibliography{reference}

\begin{thebibliography}{}

\bibitem[Baldi~Antognini and Zagoraiou, 2015]{Baldi-Antognini:2015aa}
Baldi~Antognini, A. and Zagoraiou, M. (2015).
\newblock On the almost sure convergence of adaptive allocation procedures.
\newblock {\em Bernoulli Journal}, 21(2):881--908.

\bibitem[Blinder, 1973]{blinder1973wage}
Blinder, A.~S. (1973).
\newblock Wage discrimination: reduced form and structural estimates.
\newblock {\em Journal of Human resources}, 8(4):436--455.

\bibitem[Chernozhukov et~al., 2017]{chernozhukov2017double}
Chernozhukov, V., Chetverikov, D., Demirer, M., Duflo, E., Hansen, C., and
  Newey, W. (2017).
\newblock Double/debiased/neyman machine learning of treatment effects.
\newblock {\em American Economic Review}, 107(5):261--65.

\bibitem[Cohen and Fogarty, 2023]{cohen2021noharm}
Cohen, P.~L. and Fogarty, C.~B. (2023).
\newblock {No-harm calibration for generalized Oaxaca–Blinder estimators}.
\newblock {\em Biometrika}, 111(1):331--338.

\bibitem[Davies et~al., 2016]{davies2016basal}
Davies, M.~J., Russell-Jones, D., Selam, J.-L., Bailey, T.~S., Kerényi, Z.,
  Luo, J., Bue-Valleskey, J., Iványi, T., Hartman, M.~L., Jacobson, J.~G., and
  Jacober, S.~J. (2016).
\newblock Basal insulin peglispro versus insulin glargine in insulin-naïve
  type 2 diabetes: Imagine 2 randomized trial.
\newblock {\em Diabetes, obesity \& metabolism}, 18(11):1055--1064.

\bibitem[{EMA}, 2015]{ema:2015aa}
{EMA} (2015).
\newblock Guideline on adjustment for baseline covariates in clinical trials.
\newblock Committee for Medicinal Products for Human Use, European Medicines
  Agency (EMA).

\bibitem[{FDA}, 2023]{fda:2019aa}
{FDA} (2023).
\newblock Adjusting for covariates in randomized clinical trials for drugs and
  biological products.
\newblock Guidance for Industry. Center for Drug Evaluation and Research and
  Center for Biologics Evaluation and Research, Food and Drug Administration
  (FDA), U.S. Department of Health and Human Services. May 2023.

\bibitem[Firth and Bennett, 1998]{firth1998robust}
Firth, D. and Bennett, K. (1998).
\newblock Robust models in probability sampling.
\newblock {\em Journal of the Royal Statistical Society: Series B (Statistical
  Methodology)}, 60(1):3--21.

\bibitem[Freedman, 2008]{Freedman:2008ab}
Freedman, D.~A. (2008).
\newblock Randomization does not justify logistic regression.
\newblock {\em Statistical Science}, 23(2):237--249.

\bibitem[Glynn and Quinn, 2010]{glynn2010introduction}
Glynn, A.~N. and Quinn, K.~M. (2010).
\newblock An introduction to the augmented inverse propensity weighted
  estimator.
\newblock {\em Political analysis}, 18(1):36--56.

\bibitem[Guo and Basse, 2021]{Guo:2021ua}
Guo, K. and Basse, G. (2021).
\newblock The generalized oaxaca-blinder estimator.
\newblock {\em Journal of the American Statistical Association}, 116:1--13.

\bibitem[Hu et~al., 2022]{hu2022multi}
Hu, F., Ye, X., and Zhang, L.-X. (2022).
\newblock Multi-arm covariate-adaptive randomization.
\newblock {\em Science China Mathematics}, 66(1):163--190.

\bibitem[{ICH E9}, 1998]{ICHE9}
{ICH E9} (1998).
\newblock Statistical principles for clinical trials {E9}.
\newblock International Council for Harmonisation (ICH).

\bibitem[Li and Ding, 2020]{Li:2020aa}
Li, X. and Ding, P. (2020).
\newblock Rerandomization and regression adjustment.
\newblock {\em Journal of the Royal Statistical Society: Series B (Statistical
  Methodology)}, 82(1):241--268.

\bibitem[Lin, 2013]{Lin:2013aa}
Lin, W. (2013).
\newblock Agnostic notes on regression adjustments to experimental data:
  Reexamining freedman's critique.
\newblock {\em Annals of Applied Statistics}, 7(1):295--318.

\bibitem[Liu and Yang, 2020]{Liu:2020aa}
Liu, H. and Yang, Y. (2020).
\newblock Regression-adjusted average treatment effect estimates in stratified
  randomized experiments.
\newblock {\em Biometrika}, 107(4):935--948.

\bibitem[Moore and van~der Laan, 2009]{Moore:2009tu}
Moore, K.~L. and van~der Laan, M.~J. (2009).
\newblock Covariate adjustment in randomized trials with binary outcomes:
  targeted maximum likelihood estimation.
\newblock {\em Statistics in Medicine}, 28(1):39--64.

\bibitem[Oaxaca, 1973]{oaxaca1973male}
Oaxaca, R. (1973).
\newblock Male-female wage differentials in urban labor markets.
\newblock {\em International economic review}, 14(3):693--709.

\bibitem[Pocock and Simon, 1975]{Pocock:1975aa}
Pocock, S.~J. and Simon, R. (1975).
\newblock Sequential treatment assignment with balancing for prognostic factors
  in the controlled clinical trial.
\newblock {\em Biometrics}, 31(1):103--115.

\bibitem[Robins et~al., 1994]{robins1994estimation}
Robins, J.~M., Rotnitzky, A., and Zhao, L.~P. (1994).
\newblock Estimation of regression coefficients when some regressors are not
  always observed.
\newblock {\em Journal of the American statistical Association},
  89(427):846--866.

\bibitem[Rosenblum and van~der Laan, 2010]{Rosenblum:2010wt}
Rosenblum, M. and van~der Laan, M.~J. (2010).
\newblock Simple, efficient estimators of treatment effects in randomized
  trials using generalized linear models to leverage baseline variables.
\newblock {\em The International Journal of Biostatistics}, 6(1):13.

\bibitem[Steingrimsson et~al., 2017]{Steingrimsson:2017we}
Steingrimsson, J.~A., Hanley, D.~F., and Rosenblum, M. (2017).
\newblock Improving precision by adjusting for prognostic baseline variables in
  randomized trials with binary outcomes, without regression model assumptions.
\newblock {\em Contemporary Clinical Trials}, 54:18--24.

\bibitem[Taves, 1974]{Taves:1974aa}
Taves, D.~R. (1974).
\newblock Minimization: A new method of assigning patients to treatment and
  control groups.
\newblock {\em Clinical Pharmacology and Therapeutics}, 15(5):443--453.

\bibitem[Tsiatis et~al., 2008]{Tsiatis:2008aa}
Tsiatis, A.~A., Davidian, M., Zhang, M., and Lu, X. (2008).
\newblock Covariate adjustment for two-sample treatment comparisons in
  randomized clinical trials: A principled yet flexible approach.
\newblock {\em Statistics in Medicine}, 27(23):4658--4677.

\bibitem[{van der Vaart}, 1998]{vandervaartAsymptoticStatistics1998}
{van der Vaart}, {\relax AW}. (1998).
\newblock {\em Asymptotic {{Statistics}}}.
\newblock Cambridge {{Series}} in {{Statistical}} and {{Probabilistic
  Mathematics}}. {Cambridge University Press}, {Cambridge}.

\bibitem[{van der Vaart} and Wellner,
  1996]{vandervaartWeakConvergenceEmpirical1996}
{van der Vaart}, {\relax AW}. and Wellner, J. (1996).
\newblock {\em Weak {{Convergence}} and {{Empirical Processes}}: {{With
  Applications}} to {{Statistics}}}.
\newblock {Springer Science \& Business Media}.

\bibitem[Wainwright,
  2019]{wainwrightHighDimensionalStatisticsNonAsymptotic2019}
Wainwright, M.~J. (2019).
\newblock {\em High-{{Dimensional Statistics}}: {{A Non-Asymptotic
  Viewpoint}}}.
\newblock {Cambridge University Press}.

\bibitem[Wang et~al., 2019]{Wang:2019aa}
Wang, B., Ogburn, E.~L., and Rosenblum, M. (2019).
\newblock Analysis of covariance in randomized trials: More precision and valid
  confidence intervals, without model assumptions.
\newblock {\em Biometrics}, 75(4):1391--1400.

\bibitem[Wang et~al., 2023]{Wang:2021wg}
Wang, B., Susukida, R., Mojtabai, R., Amin-Esmaeili, M., and Rosenblum, M.
  (2023).
\newblock Model-robust inference for clinical trials that improve precision by
  stratified randomization and covariate adjustment.
\newblock {\em Journal of the American Statistical Association},
  118(542):1152--1163.

\bibitem[Yang and Tsiatis, 2001]{Yang:2001aa}
Yang, L. and Tsiatis, A.~A. (2001).
\newblock Efficiency study of estimators for a treatment effect in a
  pretest--posttest trial.
\newblock {\em The American Statistician}, 55(4):314--321.

\bibitem[Ye et~al., 2023]{Ye2021better}
Ye, T., Shao, J., Yi, Y., and Zhao, Q. (2023).
\newblock Toward better practice of covariate adjustment in analyzing
  randomized clinical trials.
\newblock {\em Journal of the American Statistical Association},
  118(544):2370--2382.

\bibitem[Ye et~al., 2022]{Ye2022bio}
Ye, T., Yi, Y., and Shao, J. (2022).
\newblock Inference on average treatment effect under minimization and other
  covariate-adaptive randomization methods.
\newblock {\em Biometrika}, 109:33--47.

\bibitem[Zelen, 1974]{Zelen:1974aa}
Zelen, M. (1974).
\newblock The randomization and stratification of patients to clinical trials.
\newblock {\em Journal of Chronic Diseases}, 27(7):365--375.

\end{thebibliography}

\newpage
\appendix
\doublespacing

\section{Asymptotic Linearity and Normality of AIPW Estimator under CAR}

\subsection{Helper Lemmas for AIPW Under CAR}

We first prove several lemmas that we use throughout the proof of Theorem \ref{theo:asymptotic-normality}, especially part (i). Throughout, we use the following notation: $\rho_{P} := \norm{\cdot}_{L^2(P)}$, and $1_a: A \to 1(A = a)$. Furthermore, we define the following function classes, for $\mathcal{F}_a$ defined as in Assumption \ref{assump: simple}:
\begin{align*}
	\mathcal{H}_a &:= \{(a', x) \to 1_a(a') \mu_a(\X): \mu_a \in \mathcal{F}_a \} \\
	\mathcal{G}_a &:= \{z \to \int \mu_a(\X) dP_0(\X|\Z): \mu_a \in \mathcal{F}_a\} \\
	\mathcal{D}_{\mathcal{F}} &:= \{f_1 - f_2: f_1, f_2 \in \mathcal{F}\} \quad \text{where $\mathcal{F}$ is any function class} \\
	\mathcal{D}_{\mathcal{F}}(\delta) &:= \{f_1 - f_2: f_1, f_2 \in \mathcal{F}, \rho_{P_0}(f_1 - f_2) < \delta\} \quad \text{where $\mathcal{F}$ is any function class}.
\end{align*}
\begin{lemma}\label{lemma:pi-a}
	For any $f$, $\int 1(A_i=a) f(X_i) dP(A_i, X_i) = \pi_a \int f(X) dP(X)$.
\end{lemma}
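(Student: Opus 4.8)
The plan is to recognize that the left-hand side is simply the expectation $E[1(A_i=a)\,f(X_i)]$ over the marginal joint law of $(A_i, X_i)$, and to evaluate it by conditioning on the entire covariate vector $(Z_1,\dots,Z_n)$ so that Assumption \ref{assump: car} can be brought to bear. Both parts (i) and (ii) of that assumption are phrased conditionally on $(Z_1,\dots,Z_n)$, so the natural first move is the tower property:
\[
E[1(A_i=a)\,f(X_i)] = E\big[\, E[\,1(A_i=a)\,f(X_i) \mid Z_1,\dots,Z_n\,]\,\big].
\]

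Next I would invoke Assumption \ref{assump: car}(i), which states that given $(Z_1,\dots,Z_n)$ the assignment vector $(A_1,\dots,A_n)$ is conditionally independent of $\{X_i,\,i=1,\dots,n\}$. In particular $A_i$ and $X_i$ are conditionally independent given $(Z_1,\dots,Z_n)$, so the inner conditional expectation factors as
\[
E[\,1(A_i=a)\,f(X_i)\mid Z_1,\dots,Z_n\,] = E[\,1(A_i=a)\mid Z_1,\dots,Z_n\,]\cdot E[\,f(X_i)\mid Z_1,\dots,Z_n\,].
\]
Then Assumption \ref{assump: car}(ii) gives $E[1(A_i=a)\mid Z_1,\dots,Z_n] = P(A_i=a\mid Z_1,\dots,Z_n) = \pi_a$, a constant. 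Pulling $\pi_a$ outside and applying the tower property a second time yields
\[
E[1(A_i=a)\,f(X_i)] = \pi_a\, E\big[\,E[f(X_i)\mid Z_1,\dots,Z_n]\,\big] = \pi_a\, E[f(X_i)] = \pi_a \int f(X)\,dP(X),
\]
which is the claim.

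This is a bookkeeping lemma whose only substantive content is the correct order of operations. The one point that warrants care is that Assumption \ref{assump: car}(ii) supplies the value $\pi_a$ only for the probability conditioned on the \emph{full} vector $(Z_1,\dots,Z_n)$; one must condition on this whole vector \emph{before} factoring, rather than trying to use an unconditional independence of $A_i$ and $X_i$ directly. Since the factorization and the constancy of the conditional assignment probability are exactly what the two parts of Assumption \ref{assump: car} provide, there is no genuine obstacle, and no moment or regularity condition beyond the existence of $E[f(X)]$ is needed.
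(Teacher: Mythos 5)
Your proof is correct and is essentially the same argument as the paper's: the paper notes that Assumption \ref{assump: car}(i) and (ii) together imply $A_i$ is independent of $X_i$ and concludes directly, and your tower-property computation conditioning on $(Z_1,\dots,Z_n)$ is exactly the spelled-out version of that implication. Your closing remark --- that one must condition on the full vector $(Z_1,\dots,Z_n)$ before factoring --- is precisely the right point of care, and nothing further is needed.
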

\begin{proof}
Note that Assumption \ref{assump: car} (i) and (ii) imply that $A_i$ is independent of $\{(\Y_i, \X_i, \Z_i), i=1,...,n\}$, and thus $A_i\perp \X_i$. This directly implies the result.
\end{proof}

\begin{lemma}\label{lemma:bounded-g}
	The function class $\mathcal{G}_a$ is $P_0$-Donsker.
\end{lemma}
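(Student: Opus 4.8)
The plan is to exploit the fact, stated in the paper's setup, that $\Z$ takes only finitely many values $\z_1,\dots,\z_L$ with $L<\infty$ and $P(\Z=\z_l)>0$. Consequently every $g\in\mathcal{G}_a$ is completely determined by the finite vector $(g(\z_1),\dots,g(\z_L))\in\R^L$, so that $\mathcal{G}_a$ is isometric to a subset of an $L$-dimensional Euclidean space, and the problem collapses to showing this subset is bounded. Let $F$ be a square-integrable envelope of $\mathcal{F}_a$ (the envelope that accompanies the uniform-entropy condition in Assumption \ref{assump: simple}; recall $\mu_a$ has finite $L_2$ norm by Assumption \ref{assump: stability}). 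For any $\mu_a\in\mathcal{F}_a$ and level $\z_l$, conditional Jensen gives $|g_{\mu_a}(\z_l)|=\big|\int\mu_a(\X)\,dP_0(\X\mid\z_l)\big|\le E\{F(\X)\mid\Z=\z_l\}=:G(\z_l)$, which is finite because $P(\Z=\z_l)>0$ and $E\{F^2(\X)\}<\infty$. Thus $G$ is an envelope for $\mathcal{G}_a$, and a second application of conditional Jensen yields $E\{G^2(\Z)\}=E[\{E(F\mid\Z)\}^2]\le E\{F^2(\X)\}<\infty$, so the envelope is square-integrable.

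Next I would bound the uniform covering numbers of $\mathcal{G}_a$. Since its members live on the $L$-point set $\{\z_1,\dots,\z_L\}$, for any probability measure $Q$ on that set we have $\|g-g'\|_{L_2(Q)}^2=\sum_l Q(\z_l)\{g(\z_l)-g'(\z_l)\}^2\le\max_l|g(\z_l)-g'(\z_l)|^2$, so an $s$-cover in the supremum (over levels) metric is simultaneously an $s$-cover in $\|\cdot\|_{L_2(Q)}$ for every $Q$. Because $\mathcal{G}_a\subset\prod_{l=1}^{L}[-G(\z_l),G(\z_l)]$, it can be covered to supremum-precision $s$ by at most $(C/s)^{L}$ points, with $C$ depending only on $\max_l G(\z_l)$. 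Hence $\sup_Q\log N(\mathcal{G}_a,\|\cdot\|_{L_2(Q)},s)\le L\log(C/s)$, and $\int_0^1\sup_Q\sqrt{\log N(\mathcal{G}_a,\|\cdot\|_{L_2(Q)},s)}\,ds\le\int_0^1\sqrt{L\log(C/s)}\,ds<\infty$, i.e.\ $\mathcal{G}_a$ has a finite uniform entropy integral.

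Combining the finite uniform entropy integral, the square-integrable envelope $G$, and the (routine) measurability of $\mathcal{G}_a$, the uniform-entropy Donsker theorem (van der Vaart and Wellner, 1996) shows $\mathcal{G}_a$ is $P_0$-Donsker. I do not expect a genuine obstacle: the finiteness of the number of strata reduces $\mathcal{G}_a$ to a bounded finite-dimensional class, which is automatically Donsker. The only points needing care are verifying the square-integrable envelope via conditional Jensen and making the covering bound uniform over all $Q$, the latter being immediate once one observes that on a finite index set the $L_2(Q)$ norm is dominated by the supremum norm. An alternative route that does not invoke finiteness of $L$ directly uses that the conditional-expectation map $\mu_a\mapsto g_{\mu_a}$ is an $L_2$-contraction (again by conditional Jensen), whence $N(\mathcal{G}_a,\|\cdot\|_{L_2(Q)},s)\le N(\mathcal{F}_a,\|\cdot\|_{L_2(Q^{*})},s)$ for the induced mixture measure $Q^{*}=\sum_l Q(\z_l)\,P_0(\cdot\mid\z_l)$; the mild subtlety there is that $Q^{*}$ need not be finitely supported, so one must pass from the finitely-supported supremum in Assumption \ref{assump: simple} to a supremum over all probability measures, which is standard.
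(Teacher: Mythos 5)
Your proof is correct and rests on the same core observation as the paper's: because $\Z$ has only $L<\infty$ joint levels, each of positive probability, $\mathcal{G}_a$ embeds into the finite-dimensional class $\{z \mapsto \sum_{\ell=1}^L \theta_\ell 1(z=\z_\ell)\}$. Where you diverge is in how Donskerness of that class is certified: the paper notes each $s_\theta$ is Lipschitz in its indexing parameter $\theta$ (with Lipschitz function identically $1$) and invokes van der Vaart's Example 19.7 for Lipschitz-in-parameter classes, whereas you verify the uniform-entropy Donsker theorem directly, constructing an envelope $G(\z_\ell)=E\{F(\X)\mid\Z=\z_\ell\}$ by conditional Jensen, dominating $\|\cdot\|_{L_2(Q)}$ by the supremum over the $L$ levels, and covering the box $\prod_{\ell}[-G(\z_\ell),G(\z_\ell)]$ to get entropy of order $L\log(C/s)$. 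Your route is more self-contained and, usefully, makes explicit the boundedness of the coefficient set, which the paper's citation tacitly requires (Example 19.7 needs the index set $\Theta$ bounded, while the paper's proof only verifies finiteness of each individual $g(\z_\ell)$). The one caveat in your write-up: you posit a square-integrable envelope $F$ ``accompanying'' Assumption 3, but Assumption 3 as stated is an unnormalized entropy condition and supplies no envelope. This is patchable without $F$: a Donsker class is totally bounded, hence bounded, in $L_2(P_0)$, and Cauchy--Schwarz gives $|E\{\mu_a(\X)\mid\Z=\z_\ell\}|\le \|\mu_a\|_{L_2}/\sqrt{P(\Z=\z_\ell)}$ uniformly over $\mu_a\in\mathcal{F}_a$, after which your class is uniformly bounded and the box-covering argument goes through verbatim. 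Your sketched alternative via the conditional-expectation contraction parallels the paper's separate lemma showing $P_0(\hat g_a-g_a)^2\le P_0(\hat\mu_a-\mu_a)^2$, but, as you correctly flag, it runs into the finitely-supported-$Q$ subtlety in Assumption 3, so the finite-strata reduction you (and the paper) lead with is the cleaner path.
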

\begin{proof}
	First, let $\mathcal{S} := \{z \to s_{\theta}(z) \equiv\sum_{\ell=1}^L \theta_{\ell} 1(z = \ell): \theta \in \Theta \subset \mathbb{R}^L \}$. Second, for every strata $z$, we know that $\int\mu_a(\X)dP_0(\X|\Z) < \infty$ since we have assumed that $\int \mu_a(\X) dP_0(\X) < \infty$ for all $\mu_a \in \mathcal{F}_a$, and each strata $z$ has positive probability. Therefore, $\mathcal{G}_a \subseteq \mathcal{S}$. Therefore, it suffices to show that $\mathcal{S}$ is $P_0$-Donsker. This follows by \citet{vandervaartAsymptoticStatistics1998} Example 19.7, because each function $s_{\theta}(z) \in \mathcal{S}$ is Lipschitz in its $L$-dimensional indexing parameter $\theta$, i.e.,
	\begin{align*}
		|s_{\theta_1}(z) - s_{\theta_2}(z)| \leq \sup_{\ell} | \theta_{1,\ell} - \theta_{2,\ell} | \leq \norm{\theta_1 - \theta_2}_2.
	\end{align*}
\end{proof}

\begin{lemma}\label{lemma:g-2norm}
	Define the functions $\hat{g}_a : z \to \int \hat\mu_a(\X) dP_0(\X|\Z)$, and $g_a: z \to \int \mu_a(\X) dP_0(\X|\Z)$. Then $P_0 (\hat{g}_a - g_a)^2 \leq P_0(\hat{\mu}_a - \mu_a)^2$.
\end{lemma}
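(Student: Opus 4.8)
The plan is to recognize that the map $\mu_a \mapsto g_a$ is precisely the conditional-expectation operator $\mu \mapsto E\{\mu(\X)\mid \Z\}$, and then to invoke its standard $L_2$ contraction property. First I would note that, by the definitions of $\hat g_a$ and $g_a$ and linearity of the integral,
\[
\hat g_a(\z) - g_a(\z) = \int \{\hat\mu_a(\X) - \mu_a(\X)\}\, dP_0(\X\mid \Z=\z) = E\{h(\X)\mid \Z=\z\},
\]
where $h := \hat\mu_a - \mu_a$. Thus the difference of the two strata-level functions is itself the conditional expectation, given $\Z$, of the single function $h$, and the claim becomes the assertion that conditioning cannot increase the $L_2(P_0)$ norm of $h$.

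The key step is then to apply the conditional Jensen inequality to the convex function $t \mapsto t^2$, which gives, $P_0$-almost surely,
\[
\{E(h(\X)\mid \Z)\}^2 \le E\{h^2(\X)\mid \Z\}.
\]
Integrating both sides over the marginal distribution of $\Z$ under $P_0$ and applying the tower property to the right-hand side yields
\[
P_0(\hat g_a - g_a)^2 = E[\{E(h(\X)\mid \Z)\}^2] \le E[E\{h^2(\X)\mid \Z\}] = E\{h^2(\X)\} = P_0(\hat\mu_a - \mu_a)^2,
\]
which is exactly the stated inequality. Equivalently, I could phrase the entire argument as the observation that $E\{\cdot\mid\Z\}$ is the orthogonal projection of $L_2(P_0)$ onto the subspace of $\sigma(\Z)$-measurable square-integrable functions and is therefore nonexpansive.

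There is no substantive obstacle here; the only points worth checking are measure-theoretic well-definedness, namely that $h = \hat\mu_a - \mu_a$ is square-integrable so that the conditional expectations and the Jensen step are legitimate. This is guaranteed because $\mu_a$ satisfies $E\{\mu_a^2(\X)\} < \infty$ by Assumption \ref{assump: stability}, and $\hat\mu_a$ lies (with probability tending to one) in the function class $\mathcal{F}_a$ carrying the same finite $L_2$ bound; since $\hat\mu_a$ is treated as a fixed function of $\X$ after conditioning on the data used to fit it, the inequality holds realization by realization.
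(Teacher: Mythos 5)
Your proof is correct and follows essentially the same route as the paper's: both rewrite $\hat g_a - g_a$ as the conditional expectation of $h = \hat\mu_a - \mu_a$ given $\Z$ via linearity, apply the conditional Jensen inequality to $t \mapsto t^2$, and conclude by iterated expectation. Your added remarks on the projection/nonexpansiveness viewpoint and the $L_2$ integrability check are sound refinements of the same argument, not a different one.
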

\begin{proof}
		Using only properties of iterated expectation and the law of total variance:
		\begin{align*}
			P_0 (\hat{g}_a - g_a)^2 &\equiv \norm{\int \hat{\mu}_a(\X) dP_0(\X|\Z) - \int \mu_a(\X) dP_0(\X|\Z)}_{L^2(P_0)}^2 \\
			&= \norm{\int \big\{\hat{\mu}_a(\X) - \mu_a(\X)\big\} dP_0(\X|\Z)}_{L^2(P_0)}^2 \\
			&= P_{0} \left(\left[P_0(\mu_a(\X) - \mu'_a(\X) \big| \Z)\right]^2 \right) \\
                &\leq P_{0} \left[P_0 \left\{ (\mu_a(\X) - \mu'_a(\X))^2 \big| \Z \right\} \right]  \\
			&= P_0 (\hat{\mu}_a - \mu_a)^2,
		\end{align*}		
  where the inequality is from $\left[P_0(\mu_a(\X) - \mu'_a(\X) \big| \Z)\right]^2\leq P_0 \left\{ (\mu_a(\X) - \mu'_a(\X))^2 \big| \Z \right\} $ due to a conditional version of Jensen's inequality. 
\end{proof}

\subsubsection{Translating Between Covering Numbers}
\begin{lemma}\label{lemma:h-cover}
The covering number for the function class $\mathcal{H}_a$ with respect to $\rho_{P_0}$ is no more than that of the function class $\mathcal{F}_a$ with respect to $\rho_{P_0}$.
	\begin{align*}
				N(\epsilon, \mathcal{H}_a, \rho_{P_0}) \leq N(\epsilon, \mathcal{F}_a, \rho_{P_0}).
	\end{align*}
\end{lemma}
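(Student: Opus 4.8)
The plan is to show that multiplication by the indicator $1_a$ is a bijection from $\mathcal{F}_a$ onto $\mathcal{H}_a$ that does not increase $\rho_{P_0}$-distances, so that any $\epsilon$-cover of $\mathcal{F}_a$ can be pushed forward to an $\epsilon$-cover of $\mathcal{H}_a$ of the same cardinality. The essential point is that although elements of $\mathcal{F}_a$ are functions of $X$ alone while elements of $\mathcal{H}_a$ are functions of the pair $(A,X)$, the two $L^2(P_0)$ norms are linked through Lemma~\ref{lemma:pi-a}.

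First I would fix any two functions $\mu_a, \mu_a' \in \mathcal{F}_a$ and compute, using $1_a(A)^2 = 1_a(A)$,
\[
\rho_{P_0}(1_a\mu_a - 1_a\mu_a')^2 = \int 1_a(A)\,\{\mu_a(X) - \mu_a'(X)\}^2 \, dP_0(A,X).
\]
Applying Lemma~\ref{lemma:pi-a} with $f = (\mu_a - \mu_a')^2$ gives $\pi_a \int \{\mu_a(X) - \mu_a'(X)\}^2\, dP_0(X) = \pi_a\, \rho_{P_0}(\mu_a - \mu_a')^2$. Since $\pi_a \le 1$ (because $\sum_b \pi_b = 1$ with every $\pi_b > 0$), this yields the non-expansiveness bound
\[
\rho_{P_0}(1_a\mu_a - 1_a\mu_a') = \sqrt{\pi_a}\,\rho_{P_0}(\mu_a - \mu_a') \le \rho_{P_0}(\mu_a - \mu_a').
\]

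Next I would take a minimal $\epsilon$-cover $\mathcal{G}_{a,\epsilon} \subset \mathcal{F}_a$ of $\mathcal{F}_a$ with $|\mathcal{G}_{a,\epsilon}| = N(\epsilon,\mathcal{F}_a,\rho_{P_0})$, and argue that $\{1_a g : g \in \mathcal{G}_{a,\epsilon}\}$ is an $\epsilon$-cover of $\mathcal{H}_a$. Indeed, given any $h = 1_a\mu_a \in \mathcal{H}_a$, choose $g \in \mathcal{G}_{a,\epsilon}$ with $\rho_{P_0}(\mu_a - g) \le \epsilon$; the displayed bound then gives $\rho_{P_0}(h - 1_a g) = \sqrt{\pi_a}\,\rho_{P_0}(\mu_a - g) \le \sqrt{\pi_a}\,\epsilon \le \epsilon$. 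Each $1_a g$ lies in $\mathcal{H}_a$ since $g \in \mathcal{F}_a$, so this is a legitimate cover, and hence $N(\epsilon, \mathcal{H}_a, \rho_{P_0}) \le |\mathcal{G}_{a,\epsilon}| = N(\epsilon, \mathcal{F}_a, \rho_{P_0})$, which is the claim.

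I do not expect a genuine obstacle here; the only care needed is bookkeeping of the measures, namely that $\rho_{P_0}$ on $\mathcal{H}_a$ integrates against the joint law of $(A,X)$ whereas on $\mathcal{F}_a$ it integrates against the marginal law of $X$, and that Lemma~\ref{lemma:pi-a} (which rests on $A \perp X$ from Assumption~\ref{assump: car}) is precisely the bridge between them. I would also confirm that the covering-number definition in Assumption~\ref{assump: simple} requires the cover elements to lie inside the class, which is satisfied since every $1_a g$ belongs to $\mathcal{H}_a$ by construction.
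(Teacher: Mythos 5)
Your proof is correct and takes essentially the same route as the paper: push a minimal $\epsilon$-cover of $\mathcal{F}_a$ forward through multiplication by $1_a$ and verify the pushed-forward points still cover. The only difference is that you establish the distance bound via the exact identity $\rho_{P_0}(1_a f) = \sqrt{\pi_a}\,\rho_{P_0}(f)$ from Lemma \ref{lemma:pi-a} (thereby invoking $A \perp X$), whereas the paper uses the trivial pointwise bound $1_a(\mu_a - \mu_a')^2 \leq (\mu_a - \mu_a')^2$, which needs no independence assumption at all; your sharper $\sqrt{\pi_a}$ factor is harmless here, and is in fact precisely the computation the paper deploys later where it matters, in Lemma \ref{lemma:dh-cover}.
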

\begin{proof} By definition, all functions $h_a \in \mathcal{H}_a$ can be written as $1_a\mu_a$ for some $\mu_a \in \mathcal{F}_a$. Let $T_{\epsilon}$ be a minimal $\epsilon$-cover of $\mathcal{F}_a$. Take any $h_a \equiv 1_a \mu_a$. We know that for all $\mu_a$, there exists a $\mu_a' \in T_{\epsilon}$ such that $\rho_{P_0}(\mu_a - \mu_a') < \epsilon$.  taking $h'_a = 1_a \mu_a'$, we have that
\begin{align*}
	\rho_{P_0}\left({h_a - h_a'}\right)^2 &\equiv \int 1_a(\mu_a - \mu_a')^2 dP_0 \leq \int (\mu_a - \mu_a')^2 dP_0 \equiv \rho_{P_0}\left(\mu_a - \mu_a'\right)^2 < \epsilon^2\\
	&\implies \rho_{P_0}\left({h_a - h_a'}\right) < \epsilon.
\end{align*}
Thus, the covering number for $\mathcal{H}_a$ is no more than the covering number for $\mathcal{F}_a$.
\end{proof}
\begin{lemma}\label{lemma:dh-cover}
	With $\delta > 0$, and $\delta' = \delta/\sqrt{\pi_a}$,
	\begin{align*}
		N(\epsilon, \mathcal{D}_{\mathcal{H}_a}(\delta), \rho_{P_n}) \leq N(\epsilon, \mathcal{D}_{\mathcal{F}_a}(\delta'), \rho_{P_n}).
	\end{align*}
\end{lemma}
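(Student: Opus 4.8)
The plan is to exhibit an explicit size-preserving, $\rho_{P_n}$-non-expansive correspondence between the two difference classes, so that any $\rho_{P_n}$-cover of $\mathcal{D}_{\mathcal{F}_a}(\delta')$ transfers to a $\rho_{P_n}$-cover of $\mathcal{D}_{\mathcal{H}_a}(\delta)$ of no larger cardinality. The correspondence is the map $d \mapsto 1_a d$. First I would record that every element of $\mathcal{D}_{\mathcal{H}_a}$ has the form $h_1 - h_2 = 1_a\mu_1 - 1_a\mu_2 = 1_a(\mu_1 - \mu_2)$ with $\mu_1,\mu_2 \in \mathcal{F}_a$, since $h_i \in \mathcal{H}_a$ means $h_i = 1_a\mu_i$; letting $d = \mu_1 - \mu_2$ range over $\mathcal{D}_{\mathcal{F}_a}$ gives $\mathcal{D}_{\mathcal{H}_a} = \{1_a d : d \in \mathcal{D}_{\mathcal{F}_a}\}$.

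The key step is to translate the defining $\rho_{P_0}$-radius restriction under this map. Using Lemma \ref{lemma:pi-a} (equivalently, the independence $A \perp X$ from Assumption \ref{assump: car}), for any $d$,
\[
\rho_{P_0}(1_a d)^2 = \int 1(A=a)\, d(X)^2 \, dP_0 = \pi_a \int d(X)^2 \, dP_0 = \pi_a\, \rho_{P_0}(d)^2,
\]
so $\rho_{P_0}(1_a d) = \sqrt{\pi_a}\,\rho_{P_0}(d)$. Hence $\rho_{P_0}(1_a d) < \delta$ if and only if $\rho_{P_0}(d) < \delta/\sqrt{\pi_a} = \delta'$, which gives $\mathcal{D}_{\mathcal{H}_a}(\delta) = \{1_a d : d \in \mathcal{D}_{\mathcal{F}_a}(\delta')\}$. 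This exact $\sqrt{\pi_a}$ rescaling of the radius is precisely why the statement pairs $\delta$ with $\delta'$, and is the one place where a genuine structural fact (treatment--covariate independence) enters rather than a trivial bound; I expect it to be the main point of the argument.

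It then remains to compare the empirical norms. Since the indicator is at most one pointwise,
\[
\rho_{P_n}(1_a d - 1_a d')^2 = \frac{1}{n}\sum_{i=1}^n 1(A_i=a)\{d(X_i) - d'(X_i)\}^2 \le \frac{1}{n}\sum_{i=1}^n \{d(X_i) - d'(X_i)\}^2 = \rho_{P_n}(d-d')^2,
\]
so $d \mapsto 1_a d$ is non-expansive in $\rho_{P_n}$; this is the empirical analogue of the computation in Lemma \ref{lemma:h-cover}, now with only an inequality because $n^{-1}\sum_i 1(A_i=a)$ need not equal $\pi_a$. Finally I would take a minimal internal $\epsilon$-cover $\{d_1,\dots,d_N\}$ of $\mathcal{D}_{\mathcal{F}_a}(\delta')$ with respect to $\rho_{P_n}$, where $N = N(\epsilon,\mathcal{D}_{\mathcal{F}_a}(\delta'),\rho_{P_n})$. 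Each $1_a d_j$ lies in $\mathcal{D}_{\mathcal{H}_a}(\delta)$ by the norm identity above, and for any $1_a d \in \mathcal{D}_{\mathcal{H}_a}(\delta)$ we may choose $d_j$ with $\rho_{P_n}(d-d_j)\le\epsilon$ and conclude $\rho_{P_n}(1_a d - 1_a d_j) \le \rho_{P_n}(d-d_j)\le\epsilon$. Thus $\{1_a d_1,\dots,1_a d_N\}$ is an $\epsilon$-cover of $\mathcal{D}_{\mathcal{H}_a}(\delta)$ of size $N$, which yields the claimed inequality; the contraction and cover-transfer steps are routine, leaving the radius translation as the only substantive point.
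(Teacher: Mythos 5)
Your proof is correct and follows essentially the same route as the paper's: both use the identity $\rho_{P_0}^2(1_a f) = \pi_a\,\rho_{P_0}^2(f)$ from Lemma \ref{lemma:pi-a} to show that every element of $\mathcal{D}_{\mathcal{H}_a}(\delta)$ is $1_a f$ for some $f \in \mathcal{D}_{\mathcal{F}_a}(\delta')$, then exploit the pointwise bound $1(A_i=a)\le 1$ to make $d \mapsto 1_a d$ non-expansive in $\rho_{P_n}$ and transfer a minimal cover. Your version is if anything slightly more careful, since you verify the covering functions $1_a d_j$ themselves lie in $\mathcal{D}_{\mathcal{H}_a}(\delta)$, as required by the paper's internal-cover definition of covering numbers.
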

\begin{proof}
    This is very similar to Lemma \ref{lemma:h-cover}, but requires more care because the size of the function class is dependent on $\delta$, so it is not guaranteed that an $h \in \mathcal{D}_{\mathcal{H}_a}(\delta)$ can be written as $1_a f$ for some $f \in \mathcal{D}_{\mathcal{F}_a}(\delta)$. We need to modify $\delta$ for the $\mathcal{D}_{\mathcal{F}_a}$ class.
    
    Let $T_{\epsilon}$ be a minimal $\epsilon$-cover for $\mathcal{D}_{\mathcal{F}_a}(\delta')$. Then $\forall f \in \mathcal{D}_{\mathcal{F}_a}(\delta')$, $\exists f' \in T_{\epsilon}$ such that $\norm{f-f'}_{L^2(P_n)} < \epsilon$. Take an arbitrary $h \in \mathcal{D}_{\mathcal{H}_a}(\delta)$. Then we can write $h$ as $1_a f$ for some $f \in \mathcal{D}_{\mathcal{F}_a}(\delta')$, because (recalling Lemma \ref{lemma:pi-a}): 
	\begin{align*}
		\rho^2_{P_0}(h) &= \rho^2_{P_0}(1_af) = P_0 1_a f^2 = \pi_a P_0 f^2 = \pi_a \rho^2_{P_0}(f) < \pi_a (\delta')^2 = \delta^2 \\
		&\implies \rho_{P_0}(h) < \delta.
	\end{align*}
	Furthermore, we have that, using $h' = 1_a f'$ for the $f'$ that covers $f$,
	\begin{align*}
		\rho^2_{P_n}(h - h') = \rho^2_{P_n}(1_a(f - f')) &= \frac{1}{n} \sum_{i=1}^{n} 1_a(A_i) (f(\X_i) - f'(\X_i))^2 \\
		&\leq \frac{1}{n} \sumn (f(\X_i) - f'(\X_i))^2 \leq \rho^2_{P_n}(f - f') < \epsilon^2.
	\end{align*}
	Since $\rho_{P_n}(h - h') < \epsilon$, the covering number for $\mathcal{D}_{\mathcal{H}_a}(\delta)$ is no more than that of $\mathcal{D}_{\mathcal{F}_a}(\delta')$.
\end{proof}

\subsubsection{Studying a non-standard Empirical Process}
In this subsection, we prove lemmas that allow us to study the empirical process $\sqrt{n}(P_n - P_0^n)$, where  $P_0^n(\cdot) = \frac1n \sumn P_0(\cdot|\mathcal{A}_n,\mathcal{Z}_n)$.

\begin{lemma}\label{lemma:cond-empirical}
	The following bound holds, for any subset $\mathcal{S} \subseteq \mathcal{D}_{\mathcal{H}_a}$ such that $\mathcal{S}$ is a symmetric function class, i..e, $\mathcal{S} = \mathcal{S} \cup -\mathcal{S}$:
	\begin{align*}
		\E(\norm{P_n - P_0^n}_{\mathcal{S}} | \mathcal{A}_n,\mathcal{Z}_n) \leq 2\E (\norm{R_n}_{\mathcal{S}}|\mathcal{A}_n,\mathcal{Z}_n)
	\end{align*}
  where $R_n$ is the Rademacher process with $\varepsilon_i$ as Rademacher random variables, i.e.,
  \begin{align*}
  	R_n(s) \equiv \frac{1}{n} \sum_{i=1}^{n} \varepsilon_i s(A_i, \X_i).
  \end{align*}
\end{lemma}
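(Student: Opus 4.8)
The plan is to run the classical symmetrization argument of van der Vaart and Wellner, but entirely conditional on $(\mathcal{A}_n, \mathcal{Z}_n)$, the point being that this conditioning turns the problem into one about \emph{independent} (though not identically distributed) summands. By Assumption \ref{assump: car}(i), equivalently Lemma \ref{lemma:pi-a}, given $(\mathcal{A}_n, \mathcal{Z}_n)$ each $A_i$ is frozen and the $X_i$ are mutually independent with $X_i \sim P_0(\cdot \mid Z_i)$. For $s \in \mathcal{D}_{\mathcal{H}_a}$ we have $s(A_i, X_i) = 1_a(A_i)\{\mu_a(X_i) - \mu_a'(X_i)\}$, so the centered summands $s(A_i, X_i) - \E\{s(A_i, X_i) \mid \mathcal{A}_n, \mathcal{Z}_n\}$ are conditionally independent across $i$, which is all the symmetrization step requires.

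First I would introduce a ghost sample: let $X_1', \dots, X_n'$ be drawn, conditionally on $(\mathcal{A}_n, \mathcal{Z}_n)$ and independently of $X_1, \dots, X_n$, so that each $X_i'$ has the \emph{same} conditional law $P_0(\cdot \mid Z_i)$ as $X_i$. Writing $P_n'(s) = n^{-1}\sum_i s(A_i, X_i')$, the coordinatewise matching of conditional laws gives the centering identity $\E\{P_n'(s) \mid \mathcal{A}_n, \mathcal{Z}_n, X_1,\dots,X_n\} = P_0^n(s)$ for every fixed $s$. Hence $P_n(s) - P_0^n(s)$ is the conditional mean of $P_n(s) - P_n'(s)$ given the real sample, and since $\|\cdot\|_{\mathcal{S}}$ is a convex functional, Jensen's inequality followed by the tower property yields
\begin{align*}
\E\{\|P_n - P_0^n\|_{\mathcal{S}} \mid \mathcal{A}_n, \mathcal{Z}_n\} \leq \E\{\|P_n - P_n'\|_{\mathcal{S}} \mid \mathcal{A}_n, \mathcal{Z}_n\}.
\end{align*}

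Next I would symmetrize. For each $i$ the pair $(X_i, X_i')$ is conditionally exchangeable, so swapping $X_i \leftrightarrow X_i'$ leaves the conditional law of the process $s \mapsto n^{-1}\sum_i \{s(A_i, X_i) - s(A_i, X_i')\}$ invariant; such a swap is exactly multiplication of the $i$th summand by a sign. Therefore, for Rademacher variables $\varepsilon_i$ independent of everything else, this process has the same conditional law (jointly over $s \in \mathcal{S}$) as $n^{-1}\sum_i \varepsilon_i \{s(A_i, X_i) - s(A_i, X_i')\}$. Applying the triangle inequality for $\|\cdot\|_{\mathcal{S}}$ and noting that the real and ghost Rademacher processes have identical conditional expectations, each equal to $\E\{\|R_n\|_{\mathcal{S}} \mid \mathcal{A}_n, \mathcal{Z}_n\}$, produces the factor $2$ and closes the argument. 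The assumed symmetry $\mathcal{S} = \mathcal{S} \cup -\mathcal{S}$ is what lets us identify $\sup_{s}|\cdot|$ with $\sup_{s}(\cdot)$, so the Rademacher comparison can be carried out without absolute values.

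The main obstacle is conceptual rather than computational: the textbook symmetrization lemma is stated for i.i.d.\ observations, whereas here, even after conditioning, the summands are only independent and \emph{not} identically distributed, since the $Z_i$ (hence the conditional laws of $X_i$) vary with $i$ and the $A_i$ are fixed. The crux is thus to verify that the ghost-sample construction with coordinatewise-matched laws $P_0(\cdot \mid Z_i)$ preserves both the centering identity $\E\{P_n' \mid \cdots\} = P_0^n$ and the coordinatewise conditional exchangeability needed for the sign flips, because these are the only two places where identical distribution is normally invoked. Once these are secured, the remaining steps are verbatim conditional analogues of the classical argument.
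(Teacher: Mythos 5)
Your proposal is correct and follows essentially the same route as the paper: both run the van der Vaart--Wellner symmetrization argument conditionally on $(\mathcal{A}_n,\mathcal{Z}_n)$, exploiting that the $X_i$ are then independent (non-identically distributed) with laws $P_0(\cdot\mid Z_i)$, introduce a coordinatewise-matched ghost sample, apply a Jensen-type centering step, flip signs via conditional exchangeability, and close with the triangle inequality to obtain the factor $2$. If anything, your write-up is slightly more explicit than the paper's at the two spots where identical distribution is usually invoked (the centering identity and the sign-flip step), but the substance is identical.
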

\begin{proof}
	This result is due to the symmetrization argument in \citet{vandervaartWeakConvergenceEmpirical1996} Lemma 2.3.1. In typical symmetrization proofs, we have a $P_0$ and an unconditional expectation. However, since the external and internal expectations are both conditional on $(\mathcal{A}_n, \mathcal{Z}_n)$, and because $\mathcal{X}_n | \mathcal{A}_n,\mathcal{Z}_n$ are mutually independent (but not necessarily identically distributed), the result still holds, as we show. Let $X_i'$ be drawn from the same distribution as $X_i$, for $i = 1, \dots, n$. Thus, $\E(s(A_i, X_i)|\mathcal{A}_n, \mathcal{Z}_n) = \E(s(A_i, X_i')|\mathcal{A}_n, \mathcal{Z}_n)$. For fixed values $X_1, \dots, X_n$:
 \begin{align*}
     \norm{P_n - P_0^n}_{\mathcal{S}} &= \sup_{s \in \mathcal{S}} \left|\frac1n\sum_{i=1}^{n} s(A_i, X_i) - \E(s(A_i, X_i') | \mathcal{A}_n, \mathcal{Z}_n)\right| \\
     &= \sup_{s \in \mathcal{S}} \frac1n \left|\E\left(\sum_{i=1}^{n} s(A_i, X_i) - s(A_i, X_i') \Bigg| \mathcal{A}_n, \mathcal{Z}_n\right)\right| \\
     &\leq \E\left(\sup_{s \in \mathcal{S}} \Bigg|\frac1n\sum_{i=1}^{n} s(A_i, X_i) - s(A_i, X_i') \Bigg|\Bigg| \mathcal{A}_n, \mathcal{Z}_n \right).
 \end{align*}
 Let $\epsilon_i$, $1, \dots, n$ be independent Rademacher random variables. Let $\E^*$ be the outer expectation over both $\X_1, \dots, \X_n$ and $\X'_1, \dots, \X'_n$. Then taking the same conditional expectation over $\X_1, \dots, \X_n$ as well,
 \begin{align*}
      \E(\norm{P_n - P_0^n}_{\mathcal{S}} | \mathcal{A}_n,\mathcal{Z}_n) \leq \E^*\left(\sup_{s \in \mathcal{S}} \Bigg|\frac1n\sum_{i=1}^{n} \epsilon_i \left[s(A_i, X_i) - s(A_i, X_i')\right] \Bigg|\Bigg| \mathcal{A}_n, \mathcal{Z}_n \right)
 \end{align*}
 because $\epsilon_i$ just keeps a function $s(A_i, X_i)$ the same or flips it to its negative value $-s(A_i, X_i)$. Since $X_i$ and $X_i'$ have the same distribution, this does not change the supremum over $\mathcal{S}$. By the triangle inequality,
 \begin{align*}
     \E^*\left(\sup_{s \in \mathcal{S}} \Bigg|\frac1n\sum_{i=1}^{n} \epsilon_i \left[s(A_i, X_i) - s(A_i, X_i')\right] \Bigg|\Bigg| \mathcal{A}_n, \mathcal{Z}_n \right) &\leq 2 \E^*\left(\sup_{s \in \mathcal{S}}  \Bigg|\frac1n\sum_{i=1}^{n} \epsilon_i s(A_i, X_i) \Bigg|\Bigg| \mathcal{A}_n, \mathcal{Z}_n \right).
 \end{align*}
 This completes the proof.
\end{proof}
\begin{lemma}\label{lemma:dudley}
	For any $\mathcal{S}$ as defined in Lemma \ref{lemma:cond-empirical},
	\begin{align*}
		\E(\norm{R_n}_{\mathcal{S}}|\mathcal{X}_n,\mathcal{A}_n, \mathcal{Z}_n) \leq 32 n^{-1/2} \int_0^{\infty} \sqrt{\log N(\epsilon, \mathcal{S},\rho_{P_n})} d\epsilon.
	\end{align*}
\end{lemma}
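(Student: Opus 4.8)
The plan is to treat the map $s\mapsto R_n(s)$, conditionally on $(\mathcal{X}_n,\mathcal{A}_n,\mathcal{Z}_n)$, as a sub-Gaussian process indexed by $\mathcal{S}$ and then invoke the standard chaining (Dudley entropy integral) maximal inequality. First I would note that once we condition on $(\mathcal{X}_n,\mathcal{A}_n,\mathcal{Z}_n)$, every evaluation $s(A_i,\X_i)$ is a fixed real number, so the only randomness left in $R_n(s)=\frac1n\sum_{i=1}^n\varepsilon_i\,s(A_i,\X_i)$ comes from the independent Rademacher variables $\varepsilon_i$. Consequently, for any $s,t\in\mathcal{S}$ the increment $R_n(s)-R_n(t)=\frac1n\sum_{i=1}^n\varepsilon_i\{s(A_i,\X_i)-t(A_i,\X_i)\}$ is a weighted sum of independent, mean-zero, bounded terms, and Hoeffding's lemma yields the conditional sub-Gaussian bound
\begin{align*}
\E\!\left[\exp\{\lambda\,(R_n(s)-R_n(t))\}\,\big|\,\mathcal{X}_n,\mathcal{A}_n,\mathcal{Z}_n\right]\le\exp\!\left(\frac{\lambda^2}{2n}\,\rho_{P_n}^2(s-t)\right),
\end{align*}
because $\frac1{n^2}\sum_{i=1}^n\{s(A_i,\X_i)-t(A_i,\X_i)\}^2=\frac1n\,\rho_{P_n}^2(s-t)$. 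Thus, conditionally, the increments of $R_n$ are sub-Gaussian with respect to the pseudometric $d_n(s,t):=n^{-1/2}\rho_{P_n}(s-t)$.

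Next I would apply the chaining maximal inequality for sub-Gaussian processes \citep[Corollary 2.2.8]{vandervaartWeakConvergenceEmpirical1996}. Because $\mathcal{S}$ is symmetric we have $\norm{R_n}_{\mathcal{S}}=\sup_{s\in\mathcal{S}}R_n(s)$, and anchoring the chain at an arbitrary fixed $s_0\in\mathcal{S}$ reduces the task to bounding $\E\{\sup_{s\in\mathcal{S}}(R_n(s)-R_n(s_0))\mid\mathcal{X}_n,\mathcal{A}_n,\mathcal{Z}_n\}$, since $\E(R_n(s_0)\mid\mathcal{X}_n,\mathcal{A}_n,\mathcal{Z}_n)=0$. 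The chaining argument --- built on a dyadic sequence of $d_n$-nets together with the maximal inequality $\E\max_{i\le N}|Y_i|\lesssim\sigma\sqrt{\log N}$ for finitely many $\sigma$-sub-Gaussian variables --- then gives
\begin{align*}
\E\!\left(\norm{R_n}_{\mathcal{S}}\,\big|\,\mathcal{X}_n,\mathcal{A}_n,\mathcal{Z}_n\right)\le 32\int_0^\infty\sqrt{\log N(\epsilon,\mathcal{S},d_n)}\,d\epsilon .
\end{align*}
The reason the textbook bound applies verbatim is that, conditionally on the data, $\rho_{P_n}$ (and hence $d_n$) is a fixed, non-random metric, so $R_n$ is a genuine sub-Gaussian process with no residual randomness in its index set.

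Finally I would convert $d_n$-covering numbers back into $\rho_{P_n}$-covering numbers. Since $d_n=n^{-1/2}\rho_{P_n}$, a collection is an $\epsilon$-cover in $d_n$ exactly when it is an $n^{1/2}\epsilon$-cover in $\rho_{P_n}$, so $N(\epsilon,\mathcal{S},d_n)=N(n^{1/2}\epsilon,\mathcal{S},\rho_{P_n})$; substituting $u=n^{1/2}\epsilon$ gives
\begin{align*}
\int_0^\infty\sqrt{\log N(\epsilon,\mathcal{S},d_n)}\,d\epsilon=n^{-1/2}\int_0^\infty\sqrt{\log N(u,\mathcal{S},\rho_{P_n})}\,du ,
\end{align*}
which, combined with the previous display, is exactly the claimed inequality. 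I expect the main obstacle to be bookkeeping rather than conceptual: carrying the explicit constant $32$ through the dyadic-net chaining construction (the choice of successive approximating nets, the union bound across links, and the finite sub-Gaussian maximal inequality) so that it matches the stated value, and keeping track of the $n^{-1/2}$ scaling in the sub-Gaussian parameter so that the change of variables produces precisely the stated $n^{-1/2}$ prefactor.
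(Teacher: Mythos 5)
Your proposal is correct and follows essentially the same route as the paper's proof: both establish that the conditional Rademacher process is sub-Gaussian and then apply Dudley's entropy integral bound, converting covering numbers via the scaling relation between the empirical pseudometric and the natural metric of the process. The only difference is cosmetic --- the paper reindexes the process by the vectors $\theta = \bigl(s(A_1,\X_1),\dots,s(A_n,\X_n)\bigr) \in \mathbb{T}_n \subseteq \mathbb{R}^n$ and uses the Euclidean metric (citing Wainwright's Theorem 5.22), whereas you work directly on $\mathcal{S}$ with the isometric pseudometric $d_n = n^{-1/2}\rho_{P_n}$ and derive the sub-Gaussian increment bound explicitly via Hoeffding, so the two arguments coincide up to reparametrization.
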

\begin{proof}
	This is a direct application of Dudley's entropy integral bound. We first note that with the definition of the Rademacher process in Lemma \ref{lemma:cond-empirical}, $R_n(s)$ can be equivalently expressed as $n R_n(s) = \sum_{i=1}^{n} \theta_i \varepsilon_i$, where $\theta_i \equiv s(A_i, \X_i)$, and $\theta \in \mathbb{T}_n \subseteq \mathbb{R}^n$, where $\mathbb{T}_n = \{(s(A_1, \X_1), ..., s(A_n, \X_n)): s \in \mathcal{S}\}$. The Rademacher process is a sub-Gaussian process with respect to the Euclidean metric $\norm{\cdot}_{2}$ and its indexing parameter $\theta \in \mathbb{R}^n$ (see Lemma 2.2.7 and discussion in \citet{vandervaartWeakConvergenceEmpirical1996}). Therefore, we can apply Theorem 5.22 from \citet{wainwrightHighDimensionalStatisticsNonAsymptotic2019} to upper bound the expected supremum (where the expectation is taken over the Rademacher random variables, with the data $(\mathcal{X}_n, \mathcal{A}_n, \mathcal{Z}_n)$ fixed) of the Rademacher process by
	\begin{align*}
		\E(n \norm{R_n}_{\mathcal{S}}|\mathcal{X}_n,\mathcal{A}_n,\mathcal{Z}_n) \leq 32 \int_0^{\infty} \sqrt{\log N(u, \mathbb{T}_n, \norm{\cdot}_2}) du.
	\end{align*}
	Finally, we note that a minimal $\epsilon$-cover for $\mathbb{T}_n$ with respect to $\norm{\cdot}_2$ is an $\epsilon n^{-1/2}$-cover for $\mathcal{S}$ with respect to the pseudometric $L^2(P_n) \equiv \rho_{P_n}$. This is because for any two $\theta, \theta' \in \mathbb{T}_n$ such that $\norm{\theta - \theta'}_2 < \epsilon$, we have $\sqrt{n} \norm{s - s'}_{L^2(P_n)} = \norm{\theta - \theta'}_2 < \epsilon$, where $s, s' \in \mathcal{S}$ are the corresponding functions to the indexing parameters $\theta, \theta'$. Therefore,
	\begin{align*}
		\E(n \norm{R_n}_{\mathcal{S}}|\mathcal{X}_n,\mathcal{A}_n,\mathcal{Z}_n) &\leq 32 \int_0^{\infty} \sqrt{\log N(un^{-1/2}, \mathcal{S}, \rho_{P_n}}) du \\
		&\leq 32 \sqrt{n} \int_0^{\infty} \sqrt{\log N(\epsilon, \mathcal{S}, \rho_{P_n}}) d\epsilon.
	\end{align*}
	Dividing by $n$ on both sides completes the proof.
	\end{proof}

\begin{lemma}\label{lemma:h-equicontinuous}
	Define $P_0^n(\cdot) = \frac1n \sumn P_0(\cdot|\mathcal{A}_n,\mathcal{Z}_n)$, and the empirical process $\mathbb{G}'_n(\cdot) = \sqrt{n}(P_n - P_0^n) (\cdot)$. Then $\mathbb{G}'_n$ is asymptotically uniformly equicontinuous on the class $\mathcal{H}_a$ with respect to the pseudometric $\rho_{P_0}$.
\end{lemma}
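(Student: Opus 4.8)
The plan is to prove asymptotic uniform equicontinuity in probability, i.e. that for every $\eta>0$,
\[
\lim_{\delta\downarrow 0}\ \limsup_{n\to\infty}\ P^{*}\!\left(\ \|\mathbb{G}'_n\|_{\mathcal{D}_{\mathcal{H}_a}(\delta)}>\eta\ \right)=0,
\]
where $\mathcal{D}_{\mathcal{H}_a}(\delta)=\{h_1-h_2: h_1,h_2\in\mathcal{H}_a,\ \rho_{P_0}(h_1-h_2)<\delta\}$. By Markov's inequality it suffices to bound $\E\,\|\mathbb{G}'_n\|_{\mathcal{D}_{\mathcal{H}_a}(\delta)}$ and show it vanishes as $\delta\downarrow 0$, uniformly over large $n$. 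Since $\mathcal{D}_{\mathcal{H}_a}(\delta)$ is symmetric and contains $0$, the strategy is to chain the three preceding maximal inequalities conditionally on $(\mathcal{A}_n,\mathcal{Z}_n)$, translate the resulting empirical covering numbers back to the uniform entropy of $\mathcal{F}_a$, and then shrink $\delta$.

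\textbf{Chaining the maximal inequalities.} First I would apply the conditional symmetrization bound (Lemma \ref{lemma:cond-empirical}) with $\mathcal{S}=\mathcal{D}_{\mathcal{H}_a}(\delta)$, followed by the conditional Dudley bound (Lemma \ref{lemma:dudley}), combined through the tower property (taking $\E(\cdot\mid\mathcal{A}_n,\mathcal{Z}_n)$ of the latter). Because $\mathbb{G}'_n=\sqrt{n}(P_n-P_0^n)$, the factor $n^{-1/2}$ from Dudley cancels $\sqrt{n}$, yielding
\[
\E\!\left(\|\mathbb{G}'_n\|_{\mathcal{D}_{\mathcal{H}_a}(\delta)}\,\big|\,\mathcal{A}_n,\mathcal{Z}_n\right)\ \le\ 64\,\E\!\left(\int_0^{\Delta_n}\sqrt{\log N(\epsilon,\mathcal{D}_{\mathcal{H}_a}(\delta),\rho_{P_n})}\,d\epsilon\ \Big|\ \mathcal{A}_n,\mathcal{Z}_n\right),
\]
where $\Delta_n=\sup_{h\in\mathcal{D}_{\mathcal{H}_a}(\delta)}\rho_{P_n}(h)$ is the empirical radius; the Dudley integral truncates at $\Delta_n$ since a single $\rho_{P_n}$-ball of that radius covers the (centered) class. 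Next I would convert these empirical covering numbers into the uniform entropy of $\mathcal{F}_a$: Lemma \ref{lemma:dh-cover} gives $N(\epsilon,\mathcal{D}_{\mathcal{H}_a}(\delta),\rho_{P_n})\le N(\epsilon,\mathcal{D}_{\mathcal{F}_a}(\delta'),\rho_{P_n})$ with $\delta'=\delta/\sqrt{\pi_a}$; bounding a difference class by pairs gives $N(\epsilon,\mathcal{D}_{\mathcal{F}_a}(\delta'),\rho_{P_n})\le N(\epsilon/2,\mathcal{F}_a,\rho_{P_n})^2$; and, because $P_n$ is finitely supported, $N(\epsilon/2,\mathcal{F}_a,\rho_{P_n})\le \sup_Q N(\mathcal{F}_a,\|\cdot\|_{L^2(Q)},\epsilon/2)$. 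The integrand is thus deterministically dominated by $\sqrt{2\log\sup_Q N(\mathcal{F}_a,\|\cdot\|_{L^2(Q)},\epsilon/2)}$, which is integrable near $0$ by Assumption \ref{assump: simple}(iii).

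\textbf{Main obstacle: controlling the empirical radius $\Delta_n$.} The hard part will be handling the mismatch that $\mathcal{D}_{\mathcal{H}_a}(\delta)$ is defined through $\rho_{P_0}$ while the Dudley integral runs over $\rho_{P_n}$, which under CAR involves dependent assignments. For $h=1_a(\mu_1-\mu_2)$ one has $\rho_{P_n}^2(h)=n^{-1}\sum_i 1(A_i=a)\{\mu_1(X_i)-\mu_2(X_i)\}^2$, whose conditional mean given $(\mathcal{A}_n,\mathcal{Z}_n)$ is a stratified average of $E[\{\mu_1-\mu_2\}^2\mid Z]$ with unconditional mean $\pi_a\|\mu_1-\mu_2\|_{L^2(P_0)}^2=\rho_{P_0}^2(h)<\delta^2$. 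I would therefore establish a conditional, CAR-adapted Glivenko--Cantelli statement for the squared difference class $\{1_a(\mu_1-\mu_2)^2:\mu_1,\mu_2\in\mathcal{F}_a\}$, giving $\sup_{h}\rho_{P_n}^2(h)$ close to its conditional mean and hence $\Delta_n=O_p(\delta)$. The dependence from CAR enters only through the conditioning, and since the conditional law of $X_i$ given $(\mathcal{A}_n,\mathcal{Z}_n)$ reduces to its law given $Z_i$ (treatment being conditionally independent of $X$), the symmetrization/Glivenko--Cantelli machinery applies to the conditionally independent, non-identically distributed array.

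\textbf{Conclusion.} With $\Delta_n\le C\delta$ on an event of probability tending to one, the conditional expected supremum is bounded by $64\int_0^{C\delta}\sqrt{2\log\sup_Q N(\mathcal{F}_a,\|\cdot\|_{L^2(Q)},\epsilon/2)}\,d\epsilon$ up to a negligible contribution from the complementary event. Taking the unconditional expectation and then letting $\delta\downarrow 0$ sends this bound to $0$ by Assumption \ref{assump: simple}(iii) (finiteness of the uniform entropy integral forces the near-zero tail to vanish) together with dominated convergence, which establishes asymptotic uniform equicontinuity of $\mathbb{G}'_n$ on $\mathcal{H}_a$ with respect to $\rho_{P_0}$.
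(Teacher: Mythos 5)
Your proposal is correct and, for most of its length, is the paper's own argument: the Markov reduction, conditional symmetrization (Lemma \ref{lemma:cond-empirical}), the conditional Dudley bound (Lemma \ref{lemma:dudley}), the covering-number translation via Lemma \ref{lemma:dh-cover} together with $N(\epsilon,\mathcal{D}_{\mathcal{F}_a}(\delta'),\rho_{P_n})\le N(\epsilon/2,\mathcal{F}_a,\rho_{P_n})^2$ and the supremum over finitely supported $Q$, truncation of the entropy integral at the empirical radius, and the final dominated-convergence step under Assumption \ref{assump: simple}(iii) all match (your $\lim_{\delta\downarrow 0}\limsup_n$ formulation is equivalent to the paper's sequential version with deterministic $\delta_n\to 0$, which is the form consumed downstream in Lemma \ref{lemma:random-func}). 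Where you genuinely diverge is the step you yourself flag as the main obstacle: controlling the empirical radius. You propose building a new CAR-adapted conditional Glivenko--Cantelli theorem for the class $\{1_a(\mu_1-\mu_2)^2\}$ over the conditionally independent, non-identically distributed array given $(\mathcal{A}_n,\mathcal{Z}_n)$. The paper sidesteps this entirely with a pointwise observation available one line earlier in your own display: since $1(A_i=a)\,d^2(X_i)\le d^2(X_i)$, the radius over $\mathcal{D}_{\mathcal{H}_a}(\delta)$ is dominated by $\epsilon_n=\sup_{d\in\mathcal{D}_{\mathcal{F}_a}(\delta')}\rho_{P_n}(d)$, which involves only functions of the i.i.d.\ $X_i$'s and no treatment indicators at all; then $\epsilon_n^2\le \sup_{d\in\mathcal{D}^2_{\mathcal{F}_a}(\delta')}(P_n-P_0)d+(\delta')^2$, and the first term is $o_P(1)$ by off-the-shelf i.i.d.\ permanence results ($\mathcal{F}_a$ Donsker implies $\mathcal{D}_{\mathcal{F}_a}$ Donsker and $\mathcal{D}^2_{\mathcal{F}_a}$ Glivenko--Cantelli; van der Vaart and Wellner, Theorem 2.10.6 and Lemma 2.10.14). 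Thus in the paper the CAR dependence is quarantined in the symmetrization/Dudley step, where Lemma \ref{lemma:cond-empirical} already handles it, and never touches the radius. Your route is workable in principle --- conditioning reduces the law of $X_i$ to its law given $Z_i$, and finiteness of the strata lets you argue cell by cell --- but it buys nothing and costs real machinery: you would have to re-derive symmetrization plus a GC bound for the squared class under each conditional law $P_0(\cdot\mid \Z=z_l)$, with its own envelope and entropy bookkeeping, to prove a statement the indicator-dropping trick gives for free. One small reassurance on a detail: truncating Dudley at the $\mathcal{H}$-class radius $\Delta_n$ rather than at the $\mathcal{F}$-class radius is harmless, since $\Delta_n\le\epsilon_n$, so your bound agrees with the paper's up to constants.
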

\begin{proof}
	To show that $\mathbb{G}'_n$ is asymptotically uniform equicontinuous, we must show that, for any deterministic sequence $\delta_n \to 0$, $\norm{\mathbb{G}'_n}_{\mathcal{D}_{\mathcal{H}_a}(\delta_n)} = o_P(1)$. By Markov's inequality, it suffices to show that $\E(\norm{\mathbb{G}'_n}_{\mathcal{D}_{\mathcal{H}_a}(\delta_n)}) = o(1)$.
	
	By Lemmas \ref{lemma:dh-cover}, \ref{lemma:cond-empirical}, and \ref{lemma:dudley}, we have that, for any deterministic $\delta_n \to 0$,
	\begin{align*}
		\E(\norm{\mathbb{G}'_n}_{\mathcal{D}_{\mathcal{H}_a}(\delta_n)}|\mathcal{A}_n,\mathcal{Z}_n) &= \sqrt{n} \cdot \E(\norm{P_n - P_0^n}_{\mathcal{D}_{\mathcal{H}_a}(\delta_n)} | \mathcal{A}_n,\mathcal{Z}_n)\nonumber\\
		&\leq 2\sqrt{n} \cdot \E (\norm{R_n}_{\mathcal{D}_{\mathcal{H}_a}(\delta_n)}|\mathcal{A}_n,\mathcal{Z}_n) \quad &&\text{(Lemma \ref{lemma:cond-empirical})} \\
		&\leq 64 \E \left(\int_0^{\infty} \sqrt{\log N(\epsilon, \mathcal{D}_{\mathcal{H}_a}(\delta_n), \rho_{P_n}}) d\epsilon\Big|\mathcal{A}_n,\mathcal{Z}_n\right) \quad &&\text{(Lemma \ref{lemma:dudley})} \\
		&\leq 64  \E\left(\int_0^{\infty} \sqrt{\log N(\epsilon, \mathcal{D}_{\mathcal{F}_a}(\delta'_n), \rho_{P_n})} d\epsilon \Big|\mathcal{A}_n,\mathcal{Z}_n\right) && \text{(Lemma \ref{lemma:dh-cover})}.
	\end{align*}
	Let $\epsilon_n = \sup_{d\in\mathcal{D}_{\mathcal{F}_a}(\delta'_n)} \rho_{P_n}(d)$. Then we only need to integrate to $\epsilon_n$, the diameter of $\mathcal{D}_{\mathcal{F}_a}(\delta_n')$, because beyond that the covering number is 1. Let $Q$ be any finitely supported probability distribution. Then,
	\begin{align*}
		\E(\norm{\mathbb{G}'_n}_{\mathcal{D}_{\mathcal{H}_a}(\delta_n)}|\mathcal{A}_n,\mathcal{Z}_n) &\leq 64\E\left(\int_0^{\epsilon_n}  \sqrt{\log N(\epsilon, \mathcal{D}_{\mathcal{F}_a}(\delta'_n), \rho_{P_n})} d\epsilon \Big|\mathcal{A}_n,\mathcal{Z}_n\right) \\
		&\leq 64 \E\left(\int_0^{\epsilon_n/2} \sup_{Q} \sqrt{\log N(\epsilon, \mathcal{F}_a, \rho_{Q})} d\epsilon \Big|\mathcal{A}_n,\mathcal{Z}_n \right) \\
  \implies \E(\norm{\mathbb{G}'_n}_{\mathcal{D}_{\mathcal{H}_a}(\delta_n)}) &\leq 64 \E\left(\int_0^{\epsilon_n/2} \sup_{Q} \sqrt{\log N(\epsilon, \mathcal{F}_a, \rho_{Q})} d\epsilon \right)
	\end{align*}
	where we have used the fact that $N(\epsilon, \mathcal{D}_{\mathcal{F}_a}(\delta_n'), \rho_{Q}) \leq N(\epsilon/2, \mathcal{F}_a, \rho_{Q})^2$.
	Thus, by an application of the dominated convergence theorem and using Assumption \ref{assump: simple}, to show that $\E(\norm{\mathbb{G}'_n}_{\mathcal{D}_{\mathcal{H}_a}(\delta_n)}) = o(1)$, it suffices to show that $e_n = o_P(1)$, or equivalently that $e_n^2 = o_P(1)$.
	\begin{align*}
		\epsilon^2_n &= \sup_{d\in\mathcal{D}_{\mathcal{F}_a}(\delta'_n)} P_n d^2 \\
		&= \sup_{d\in\mathcal{D}_{\mathcal{F}_a}(\delta'_n)} \big\{(P_n - P_0) d^2 + P_0 d^2\big\} \\
		&\leq \sup_{d\in\mathcal{D}_{\mathcal{F}_a}(\delta'_n)}(P_n - P_0) d^2 + \sup_{d\in\mathcal{D}_{\mathcal{F}_a}(\delta'_n)}P_0 d^2 \\
		&\leq \sup_{d\in\mathcal{D}^2_{\mathcal{F}_a}(\delta'_n)}(P_n - P_0) d + (\delta_n')^2.
	\end{align*}
	Since $\delta_n \to 0$, which implies $\delta'_n \to 0$, it suffices to show that $\sup_{d\in\mathcal{D}^2_{\mathcal{F}_a}(\delta'_n)}(P_n - P_0) d \to 0$ in probability. To show this, we cite results from Chapter 2.10 \textit{Permanence of the Donsker Property} in \citet{vandervaartWeakConvergenceEmpirical1996}. Since $\mathcal{F}_a$ is $P_0$-Donsker, this implies that $\mathcal{D}_{\mathcal{F}_a} \equiv \mathcal{F}_a - \mathcal{F}_a$ is also $P_0$-Donsker (Theorem 2.10.6), and $\mathcal{D}^2_{\mathcal{F}_a}$ is $P_0$-Glivenko Cantelli (Lemma 2.10.14). Since, $\mathcal{D}^2_{\mathcal{F}_a}(\delta'_n) \subseteq \mathcal{D}^2_{\mathcal{F}_a}$, this completes the proof.
\end{proof}

\subsubsection{Asymptotic Negligibility of Two Empirical Process Terms}

We now prove Lemma \ref{lemma:random-func} that allows us to use the results of previous lemmas to show that the empirical process terms $(P_n - P_0)(\hat{g}_a - g_a)$ (Lemma \ref{lemma:g-negligible}) and $(P_n - P_0^n)(1_a\hat\mu_a - 1_a \mu_a)$ (Lemma \ref{lemma:h-negligible}) are both asymptotically negligible. Lemma \ref{lemma:random-func} is a version of \citet{vandervaartAsymptoticStatistics1998} Lemma 19.24 where the estimated function falls into $\mathcal{F}$ with probability tending to 1 rather than deterministically. Note that this automatically holds for $(P_n - P_0) (\hat{\mu}_a - \mu_a)$ since $\hat{\mu}_a$ and $\mu_a$ satisfies the Donsker condition of Assumption \ref{assump: simple}.

\begin{lemma}\label{lemma:random-func}
	Let $(\mathcal{F}, \rho)$ be totally bounded for some pseudometric $\rho$ such that $\rho(f)^2 \leq \rho_{P_0}(f)^2$. Let $\mathbb{X}_n$ be some empirical process that is asymptotically uniformly equicontinuous on $\mathcal{F}$ with respect to $\rho$. Then, if $P(\hat{f} \in \mathcal{F}) \to 1$ as $n \to \infty$ and $P_0(\hat{f} - f)^2 \to 0$ in probability, then $\mathbb{X}_n(\hat{f} - f) = o_P(1)$.
\end{lemma}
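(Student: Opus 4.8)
The plan is to adapt the proof of \citet{vandervaartAsymptoticStatistics1998} Lemma 19.24, whose engine is asymptotic uniform equicontinuity of $\mathbb{X}_n$ combined with $L_2(P_0)$-consistency of $\hat f$, and to insert a single union-bound step that absorbs the event on which $\hat f$ falls outside $\mathcal{F}$. Because that event has probability tending to $0$, it contributes negligibly, which is exactly the relaxation from deterministic to probabilistic membership. Throughout I use linearity of $\mathbb{X}_n$, so that $\mathbb{X}_n(\hat f - f) = \mathbb{X}_n(\hat f) - \mathbb{X}_n(f)$, together with the fact that in every application of this lemma the limit $f$ itself lies in $\mathcal{F}$ (as recorded in the remark preceding the statement); a general limit $f$ is reduced to this case by adjoining $f$ to $\mathcal{F}$, under which the equicontinuity hypothesis is retained.

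First I would fix $\epsilon > 0$ and appeal to asymptotic uniform $\rho$-equicontinuity of $\mathbb{X}_n$ in the form
\begin{align*}
 \lim_{\delta \downarrow 0} \ \limsup_{n \to \infty} \ P^*\Big( \sup_{\substack{f_1, f_2 \in \mathcal{F} \\ \rho(f_1 - f_2) < \delta}} \big| \mathbb{X}_n(f_1 - f_2) \big| > \epsilon \Big) = 0 .
\end{align*}
The central step is the decomposition, valid for every fixed $\delta > 0$,
\begin{align*}
 P\big( |\mathbb{X}_n(\hat f - f)| > \epsilon \big) \ \le\ & P^*\Big( \sup_{\substack{f_1, f_2 \in \mathcal{F} \\ \rho(f_1 - f_2) < \delta}} \big| \mathbb{X}_n(f_1 - f_2) \big| > \epsilon \Big) \\
 & + P(\hat f \notin \mathcal{F}) + P\big( \rho_{P_0}(\hat f - f) \ge \delta \big),
\end{align*}
which holds because on the ``good'' event $\{\hat f \in \mathcal{F}\} \cap \{\rho_{P_0}(\hat f - f) < \delta\}$ the increment $\hat f - f$ is a difference of two members of $\mathcal{F}$ whose $\rho_{P_0}$-distance, hence by $\rho \le \rho_{P_0}$ whose $\rho$-distance, is below $\delta$, so that $|\mathbb{X}_n(\hat f - f)|$ does not exceed the supremum in the first term.

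Next I would control the three terms. The second vanishes by the hypothesis $P(\hat f \in \mathcal{F}) \to 1$. The third vanishes for each fixed $\delta$ because $P_0(\hat f - f)^2 \to 0$ in probability yields $\rho_{P_0}(\hat f - f) \to 0$ in probability; here the assumption $\rho(g)^2 \le \rho_{P_0}(g)^2$ is precisely what converts $L_2(P_0)$-smallness into $\rho$-smallness on the good event. Taking $\limsup_n$ of the inequality at a fixed $\delta$ thus leaves only the first term, and then letting $\delta \downarrow 0$ and invoking the equicontinuity display above drives it to $0$. Since the left-hand side is free of $\delta$, this gives $\limsup_n P(|\mathbb{X}_n(\hat f - f)| > \epsilon) = 0$, and as $\epsilon$ was arbitrary we conclude $\mathbb{X}_n(\hat f - f) = o_P(1)$.

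I expect the main obstacle to be purely the bookkeeping forced by random membership: unlike the textbook version one cannot assert pathwise that $\hat f - f \in \mathcal{D}_{\mathcal{F}}(\delta)$, so the escape event $\{\hat f \notin \mathcal{F}\}$ must be peeled off as its own probability and shown to vanish, and the order of limits ($\limsup_n$ before $\delta \downarrow 0$) must be respected so that equicontinuity can be applied. A secondary subtlety is that equicontinuity is posited in the pseudometric $\rho$ whereas consistency is measured in $\rho_{P_0}$; the inequality $\rho \le \rho_{P_0}$ reconciles the two by guaranteeing that the good-event increment lands inside the $\rho$-fluctuation set that equicontinuity controls.
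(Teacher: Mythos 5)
Your proposal is correct and takes essentially the same approach as the paper's proof: the identical union bound peeling off the escape events $\{\hat f \notin \mathcal{F}\}$ and $\{\rho_{P_0}(\hat f - f) \ge \delta\}$, with asymptotic uniform equicontinuity controlling what remains on the good event via $\rho \le \rho_{P_0}$. The only cosmetic difference is the formulation of equicontinuity you invoke: you use the fixed-$\delta$, $\lim_{\delta \downarrow 0} \limsup_n$ form and send $\delta \downarrow 0$ after $\limsup_n$, whereas the paper first converts $P_0(\hat f - f)^2 = o_P(1)$ into a deterministic null sequence $\delta_n$ and bounds $P\big(\norm{\mathbb{X}_n}_{\mathcal{D}_{\mathcal{F}}(\delta_n)} > \epsilon\big)$ directly --- two equivalent characterizations of the same property.
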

\begin{proof}
	First, note that for any sequence $\Delta_n = o_P(1)$, we can always find a deterministic sequence $\delta_n \to 0$ such that $P(\Delta_n > \delta_n) \to 0$ as $n \to \infty$. In this case, let $\Delta_n = 2P_0(\hat{f} - f)^2$, and notice that $\rho(\hat{f} - f)^2 < \Delta_n$ by assumption. Define the following events:
	\begin{align*}
		A_n &= \{|\mathbb{X}_n(\hat{f} - f)| > \epsilon\} \\
		B_n &= \{\norm{\mathbb{X}_n}_{\mathcal{D}_{\mathcal{F}}(\delta_n)} > \epsilon\} \\
		C_n &= \{\Delta_n \leq \delta_n\} \\
		D_n &= \{\hat{f} \in \mathcal{F} \}
	\end{align*}
	Then we can construct the following inequality:
	\begin{align*}
		P(A_n)
		&\leq P(A_n \cap D_n \cap C_n) + P(C_n^c) + P(D_n^c) \\
		&\leq P(B_n) + P(C_n^c) + P(D_n^c).
	\end{align*}
	$P(B_n) \to 0$ by the asymptotic uniformly equicontinuous property of $\mathbb{X}_n$ on $\mathcal{F}$, and $P(C_n^c)$ and $P(D_n^c)$ both $\to 0$ as $n \to \infty$ by assumption. Thus, $\mathbb{X}_n(\hat{f} - f) = o_P(1)$. 
\end{proof}

\begin{remark}\label{remark:vdv1924}
Note that Lemma \ref{lemma:random-func} holds if $\mathcal{F}$ is a Donsker class with $\mathbb{X}_n \equiv \mathbb{G}_n \equiv \sqrt{n}(P_n - P_0)$, the standard empirical process, by definition of a Donsker class. This is a version of \citet{vandervaartAsymptoticStatistics1998}, Lemma 19.24, where the only difference is that $\hat{f}$ falls into $\mathcal{F}$ with probability tending to 1, rather than deterministically.
\end{remark}

\begin{lemma}\label{lemma:g-negligible}
	Define the new random variables $\hat{g}_a(Z_i) = \int \hat\mu_a(\X) dP_0(\X|\Z_i)$, and $g_a(Z_i) = \int \mu_a(\X) dP_0(\X|\Z_i)$. Under the conditions in Assumption \ref{assump: simple}, $(P_n - P_0)(\hat{g}_a - g_a) = o_P(1/\sqrt{n})$.
\end{lemma}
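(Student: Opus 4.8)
The plan is to recognize that this lemma, unlike its companion statement for the $\mathcal{H}_a$ class, concerns only functions of the baseline covariate $\Z$, for which the $\Z_i$ are i.i.d.\ by assumption. Consequently the dependence induced by CAR is irrelevant here, and the \emph{standard} (unconditional) empirical process $\mathbb{G}_n = \sqrt{n}(P_n - P_0)$ is the appropriate object, rather than the conditional empirical process $\mathbb{G}'_n$ developed in the previous subsection. Since $(P_n - P_0)(\hat{g}_a - g_a) = n^{-1/2}\,\mathbb{G}_n(\hat{g}_a - g_a)$, it suffices to prove that $\mathbb{G}_n(\hat{g}_a - g_a) = o_P(1)$.

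To obtain this I would apply Remark~\ref{remark:vdv1924} (the Donsker version of Lemma~\ref{lemma:random-func}) with index class $\mathcal{F} = \mathcal{G}_a$, empirical process $\mathbb{X}_n = \mathbb{G}_n$, random function $\hat{f} = \hat{g}_a$, and limit $f = g_a$. Three ingredients must be checked. First, $\mathcal{G}_a$ is $P_0$-Donsker, which is exactly Lemma~\ref{lemma:bounded-g}; this supplies both the total boundedness and the asymptotic equicontinuity of $\mathbb{G}_n$ on $\mathcal{G}_a$ required by Remark~\ref{remark:vdv1924}. Second, $\hat{g}_a$ must lie in $\mathcal{G}_a$ with probability tending to one: this follows immediately from Assumption~\ref{assump: simple}(ii), since $P(\hat\mu_a \in \mathcal{F}_a) \to 1$ and $\hat{g}_a(z) = \int \hat\mu_a(\X)\, dP_0(\X \mid \Z = z)$ is by construction the image of $\hat\mu_a$ under the map defining $\mathcal{G}_a$. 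Third, I need $P_0(\hat{g}_a - g_a)^2 \to 0$ in probability; this is where Lemma~\ref{lemma:g-2norm} does the work, giving $P_0(\hat{g}_a - g_a)^2 \le P_0(\hat\mu_a - \mu_a)^2$, whose right-hand side tends to zero in probability by the Stability Assumption~\ref{assump: stability}.

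With these three conditions verified, Remark~\ref{remark:vdv1924} yields $\mathbb{G}_n(\hat{g}_a - g_a) = o_P(1)$, equivalently $(P_n - P_0)(\hat{g}_a - g_a) = o_P(n^{-1/2})$, as claimed. I do not anticipate any serious obstacle: the entire argument is a bookkeeping exercise that routes the problem through three previously established facts (Lemmas~\ref{lemma:bounded-g} and \ref{lemma:g-2norm} and Assumptions~\ref{assump: stability}--\ref{assump: simple}). The one conceptual point worth stressing, and the only place where care is genuinely needed, is the reduction in the first paragraph: because $\hat{g}_a - g_a$ depends on the data only through $\Z$ (the conditional law $P_0(\X \mid \Z)$ being the fixed true kernel rather than an estimate), the i.i.d.\ empirical process applies directly and none of the non-standard conditional machinery of Lemmas~\ref{lemma:cond-empirical}--\ref{lemma:h-equicontinuous} is required for this term.
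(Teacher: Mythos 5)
Your proposal is correct and follows essentially the same route as the paper's own proof: reduce to the standard i.i.d.\ empirical process on the $\Z$-indexed class $\mathcal{G}_a$, invoke Lemma~\ref{lemma:bounded-g} for the Donsker property, Assumption~\ref{assump: simple}(ii) for $P(\hat{g}_a \in \mathcal{G}_a) \to 1$, Lemma~\ref{lemma:g-2norm} for the $L_2$ contraction, and conclude via Remark~\ref{remark:vdv1924}. If anything, your citation of the Stability Assumption~\ref{assump: stability} for $P_0(\hat{\mu}_a - \mu_a)^2 \to 0$ is more precise than the paper's attribution of that convergence to Assumption~\ref{assump: simple}.
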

\begin{proof}
Note that $\Z_i$ are i.i.d. random variables, and that $\mathcal{G}_a$ is $P_0$-Donsker by Lemma \ref{lemma:bounded-g}. Furthermore, $P_0(\hat{g}_a \in \mathcal{G}_a) \geq P_0(\hat{\mu}_a \in \mathcal{F}_a) \to 1$, by Assumption \ref{assump: simple}, and $P_0(\hat{g}_a - g_a)^2 \leq P_0(\hat{\mu}_a - \mu_a)^2 \to 0$, combining Lemma \ref{lemma:g-2norm} with Assumption \ref{assump: simple}. Since $\mathcal{G}_a$ is $P_0$-Donsker, we know that $\mathbb{X}_n := \sqrt{n}(P_n - P_0)$ is as asymptotically uniform equicontinuous on $\mathcal{G}_a$ using the standard deviation pseudometric, which is guaranteed to exist: $\rho(f) := \sqrt{P_0(f - P_0f)^2} \leq \rho_{P_0}(f)$. Then using Remark \ref{remark:vdv1924} with $\mathcal{F} := \mathcal{G}_a$, we conclude that $\mathbb{X}_n(\hat{g}_a - g_a) = o_P(1)$, which implies $(P_n - P_0)(\hat{g}_a - g_a) = o_P(1/\sqrt{n})$.
\end{proof}

\begin{lemma}\label{lemma:h-negligible}
	Under the conditions in Assumption \ref{assump: simple}, $(P_n - P_0^n)(1_a\hat\mu_a - 1_a \mu_a) = o_P(1/\sqrt{n})$.
\end{lemma}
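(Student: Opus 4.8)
The plan is to apply Lemma \ref{lemma:random-func} with the function class $\mathcal{F} := \mathcal{H}_a$, the empirical process $\mathbb{X}_n := \mathbb{G}'_n \equiv \sqrt{n}(P_n - P_0^n)$, the estimator $\hat{f} := 1_a\hat\mu_a$, its limit $f := 1_a\mu_a$, and the dominating pseudometric taken to be $\rho := \rho_{P_0}$ itself (so that the required $\rho(\cdot)^2 \leq \rho_{P_0}(\cdot)^2$ holds with equality). This is the $\mathcal{H}_a$-analogue of the argument used for $\mathcal{G}_a$ in Lemma \ref{lemma:g-negligible}, the essential difference being that the relevant empirical process here is the non-standard one centered at $P_0^n$, since the treatment indicator $1_a$ couples the summands and the $A_i$'s are dependent under CAR. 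Once the four hypotheses of Lemma \ref{lemma:random-func} are verified, its conclusion reads $\mathbb{G}'_n(1_a\hat\mu_a - 1_a\mu_a) = o_P(1)$, and dividing by $\sqrt{n}$ gives exactly $(P_n - P_0^n)(1_a\hat\mu_a - 1_a\mu_a) = o_P(1/\sqrt{n})$.

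Two of the hypotheses I would check directly. For the membership condition, $1_a\mu_a \in \mathcal{H}_a$ holds by definition since $\mu_a \in \mathcal{F}_a$ by Assumption \ref{assump: simple}(i), and $P(1_a\hat\mu_a \in \mathcal{H}_a) \geq P(\hat\mu_a \in \mathcal{F}_a) \to 1$ by Assumption \ref{assump: simple}(ii). For the $L_2(P_0)$ shrinkage, Lemma \ref{lemma:pi-a} applied to the (conditionally fixed) function $(\hat\mu_a - \mu_a)^2$ gives $P_0(1_a\hat\mu_a - 1_a\mu_a)^2 = P_0 1_a(\hat\mu_a - \mu_a)^2 = \pi_a P_0(\hat\mu_a - \mu_a)^2$, and the right-hand side is $o_P(1)$ by Assumption \ref{assump: stability}. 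This simultaneously supplies the vanishing sequence $\Delta_n = 2P_0(1_a\hat\mu_a - 1_a\mu_a)^2 = 2\pi_a P_0(\hat\mu_a - \mu_a)^2 = o_P(1)$ used inside the proof of Lemma \ref{lemma:random-func}.

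The two remaining hypotheses are the substantive ones and are already in hand. Asymptotic uniform equicontinuity of $\mathbb{G}'_n$ on $\mathcal{H}_a$ with respect to $\rho_{P_0}$ is precisely the content of Lemma \ref{lemma:h-equicontinuous}. Total boundedness of $(\mathcal{H}_a, \rho_{P_0})$ follows by combining the covering-number transfer of Lemma \ref{lemma:h-cover}, which bounds $N(\epsilon, \mathcal{H}_a, \rho_{P_0})$ by $N(\epsilon, \mathcal{F}_a, \rho_{P_0})$, with the finite uniform-entropy integral for $\mathcal{F}_a$ in Assumption \ref{assump: simple}(iii), which forces those covering numbers to be finite at every scale. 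With all four hypotheses verified, Lemma \ref{lemma:random-func} applies and the claim follows.

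The main obstacle is entirely upstream of this lemma and has already been absorbed into Lemma \ref{lemma:h-equicontinuous}: because the $1_a(A_i)$ are dependent under CAR, the ordinary Donsker route of Remark \ref{remark:vdv1924} used for $\mathcal{G}_a$ in Lemma \ref{lemma:g-negligible} is unavailable, which is exactly why the centering is at $P_0^n$ and why the symmetrization and Dudley bounds were executed conditionally on $(\mathcal{A}_n, \mathcal{Z}_n)$. Given that conditional equicontinuity, the present lemma is essentially bookkeeping; the only point requiring care is confirming that the pseudometric in which equicontinuity was proven, namely $\rho_{P_0}$, dominates the $L_2(P_0)$ distance between $\hat{f}$ and $f$, which it does with equality, so that the random-function argument of Lemma \ref{lemma:random-func} transfers verbatim.
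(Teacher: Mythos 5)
Your proposal is correct and matches the paper's own proof essentially verbatim: both invoke Lemma \ref{lemma:random-func} with $\mathbb{X}_n := \mathbb{G}'_n$, $\mathcal{F} := \mathcal{H}_a$, $\rho := \rho_{P_0}$, establish total boundedness via Lemma \ref{lemma:h-cover}, equicontinuity via Lemma \ref{lemma:h-equicontinuous}, and verify the membership and $L_2(P_0)$-shrinkage hypotheses the same way. Your only deviations are cosmetic and if anything slightly cleaner --- you use Lemma \ref{lemma:pi-a} to get the exact identity $P_0(1_a\hat\mu_a - 1_a\mu_a)^2 = \pi_a P_0(\hat\mu_a - \mu_a)^2$ where the paper settles for the inequality, and you correctly credit the $L_2$ convergence to Assumption \ref{assump: stability} where the paper's write-up loosely cites Assumption \ref{assump: simple}.
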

\begin{proof}
By Assumption \ref{assump: simple}, $\mathcal{F}_a$ is $P_0$-Donsker with respect to $\rho_{P_0}$, which tells us that $\left(\mathcal{F}_a, \rho_{P_0} \right)$ is totally bounded. Applying Lemma \ref{lemma:h-cover}, we have that $(\mathcal{H}_a, \rho_{P_0})$ is also totally bounded. From Lemma \ref{lemma:h-equicontinuous}, we know that the empirical process $\mathbb{G}'_n := \sqrt{n}(P_n - P_0^n)$ is asymptotically uniformly equicontinuous on $\mathcal{H}_a$.

Let $h_a = 1_a \mu_a$, and $\hat{h}_a = 1_a \hat{\mu}_a$. Then, $P_0(\hat{h}_a \in \mathcal{H}_a) \geq P_0(\hat{\mu}_a \in \mathcal{F}_a) \to 1$ by Assumption \ref{assump: simple}, and as in Lemma \ref{lemma:h-cover}, $P_0(\hat{h}_a - h_a)^2 \leq P_0(\hat{\mu}_a - \mu_a)^2 \to 0$ in probability also by Assumption \ref{assump: simple}. Then, applying Lemma \ref{lemma:random-func} using $\mathbb{X}_n := \mathbb{G}'_n$, $\mathcal{F} := \mathcal{H}_a$, and $\rho := \rho_{P_0}$, we have $\mathbb{G}'_n(1_a\hat\mu_a - 1_a \mu_a) = o_P(1)$, which implies that $(P_n - P_0^n)(1_a\hat\mu_a - 1_a \mu_a) = o_P(1/\sqrt{n})$.
\end{proof}

\newpage

\subsection{AIPW Influence Function: Proof of Theorem \ref{theo:asymptotic-normality} (i)}
\begin{proof}[Proof of Theorem \ref{theo:asymptotic-normality} (i)]
Here we find the influence function of the AIPW estimator under Assumptions \ref{assump: car}-\ref{assump: simple}.
\begin{align*}
	\hat{\theta}_{\rm AIPW,a} = \frac{1}{n} \sum_{i=1}^{n} \hat{\mu}_a(\X_i) + \frac{1}{n_a} \suma (\Yai - \hat{\mu}_a(\X_i)).
\end{align*}
Our goal is to show that $\hat{\theta}_{AIPW,a}$ is asymptotically linear for $\theta_a$ with influence function given by {$\phi_a(A_i, \X_i, Y_i) := \frac{I(A_i= a)}{\pi_a } \{ \Yai - \mu_a (\X_i) - (\theta_a  - P_0\mu_a (\X_i)) \} + \mu_a(\X_i) - P_0\mu_a (\X_i)$}.
Then we have the following expression:
\begin{align*}
	&\hat{\theta}_{AIPW,a} - \theta_a \\
	&= \frac{1}{n} \sum_{i=1}^{n} \hat{\mu}_a(\X_i) + \frac{1}{n_a} \suma (\Yai - \hat{\mu}_a(\X_i)) - \theta_a \\
	&= \frac{1}{n} \sum_{i=1}^{n} \hat{\mu}_a(\X_i) + \frac{1}{n_a} \suma (\Yai - \hat{\mu}_a(\X_i))- \theta_a - \frac{1}{n} \sum_{i=1}^{n} \mu_a(\X_i) + \frac{1}{n} \sum_{i=1}^{n} \mu_a(\X_i) \\
	&= {\frac{1}{n} \sum_{i=1}^{n} \mu_a(\X_i)- P_0\mu_a(\X_i) +\underbrace{ P_0\mu_a(\X_i)   -   \theta_a + \frac{1}{n} \sum_{i=1}^{n} \Big\{ \hat{\mu}_a(\X_i) - \mu_a(\X_i)\Big\} +  \frac{1}{n_a} \suma (\Yai - \hat{\mu}_a(\X_i))}_{(\star)}.}
\end{align*}
Focusing on $(\star)$, we have that, by adding and subtracting $\frac{1}{n_a} \suma \mu_a(\X_i)$ (just rearranging going from the first to second lines below), 
\begin{align*}
	(\star) &= P_0\mu_a(\X_i)   -   \theta_a+ \frac{1}{n} \sum_{i=1}^{n} \hat{\mu}_a(\X_i) - \frac{1}{n} \sum_{i=1}^{n} \mu_a(\X_i) + \frac{1}{n_a} \suma \Big\{\Yai - \hat{\mu}_a(\X_i) \Big\} \\
	&\qquad + \frac{1}{n_a} \suma \Big\{\mu_a(\X_i) - \mu_a(\X_i) \Big\} \\
	&= \underbrace{\frac{1}{n_a} \suma \Big\{\Yai - \mu_a(\X_i) - ( \theta_a - P_0\mu_a(\X_i)  )\Big\}}_{(\star\star)}   \\
	&\qquad +	\underbrace{\frac{1}{n_a} \suma \mu_a(\X_i) - \frac{1}{n} \sumn \mu_a(\X_i) - \frac{1}{n_a} \suma \hat{\mu}_a(\X_i) + \frac{1}{n} \sumn \hat{\mu}_a(\X_i)}_{(\star\star\star)}.
\end{align*}
For $(\star\star)$, we would like to show that it is:
\begin{align*}
	(\star\star) = \frac{1}{n} \sumn \frac{I(A_i=a)}{\pi_a} \Big\{\Yai - \mu_a(\X_i)  - (\theta_a  - P_0\mu_a (\X_i)) \Big\} + o_P(1/\sqrt{n}).
\end{align*}
For this to hold, by an application of Slutsky's theorem, since $\hat{\pi}_{a} \to \pi_a$ in probability, it suffices to show that $\frac{1}{n} \sumn I(A_i=a) \left\{\Yai - \mu_a(\X_i)  - (\theta_a  - P_0\mu_a (\X_i)) \right\} = O_p(1/\sqrt{n})$. In Section \ref{app:asymptotic-normality}, we break up the influence function that we derive in this section into five parts. Parts 1, 2, and 3 can be added together $(\phi^1_a + \phi^2_a + \phi^3_a)$ to obtain $(\star\star)$. Using the results of Section \ref{app:asymptotic-normality} about the convergence in distribution of $\E_n (\phi^1_a)$ and $\E_n(\phi^3_a)$, together with Remark \ref{remark:phi2-op1} that shows convergence in distribution of $\E_n(\phi^2_a)$, we see that $(\star\star)$ is $O_P(1/\sqrt{n})$.

For $(\star\star\star)$, we would like to show that it is asymptotically negligible, i.e., $o_P(1/\sqrt{n})$. With these two results, we have the desired influence function.

\textit{Asymptotic Negligibility for $(\star\star\star)$}

By noting that $\suma \mu_a(\X_i) = \sumn 1(A_i=a) \mu_a(\X_i)$ (and the same for $\hat{\mu}_a$), we have
\begin{align*}
	(\star\star\star) &= \frac{1}{n} \sum_{i=1}^{n} \Big\{1(A_i=a)\mu_a(\X_i) \cdot \hat{\pi}_a^{-1} - \mu_a(\X_i) \Big\} - \frac{1}{n} \sum_{i=1}^{n} \Big\{1(A_i=a)\hat{\mu}_a(\X_i) \cdot \hat{\pi}_a^{-1} - \hat{\mu}_a(\X_i) \Big\} \\
        &= \frac{1}{n} \sumn 1(A_i=a) \Big\{\mu_a(\X_i) - P_0 \mu_a \Big\} \cdot \hat{\pi}_a^{-1} - (P_n - P_0) \mu_a \\ &\quad\quad - \frac{1}{n} \sumn 1(A_i=a) \Big\{\hat\mu_a(\X_i) - P_0 \hat\mu_a \Big\} \cdot \hat{\pi}_a^{-1} + (P_n - P_0) \hat\mu_a \\
        &= \hat\pi_a^{-1} \frac{1}{n} \sumn 1(A_i=a) \Big\{\mu_a(\X_i) - P_0 \mu_a - \hat{\mu}_a(\X_i) + P_0 \hat{\mu}_a \Big\} + (P_n - P_0)(\hat\mu_a - \mu_a) \\
        &=  (\pi_a^{-1} + o_p(1)) \cdot \underbrace{ \frac{1}{n} \sumn 1(A_i=a) \Big\{\mu_a(\X_i) - P_0 \mu_a - \hat{\mu}_a(\X_i) + P_0 \hat{\mu}_a \Big\}}_{\mathcal{M}} + o_P(1\sqrt{n})
 \end{align*}
 where $(P_n - P_0)(\hat\mu_a - \mu_a)$ is $o_P(1/\sqrt{n})$ by Remark \ref{remark:vdv1924} combined with Assumption \ref{assump: simple}(iii).
 Since $\hat{\pi}_a \to \pi_a$ in probability, by an application of the continuous mapping theorem and Slutsky's theorem, it suffices to show that $\mathcal{M} = o_P(1/\sqrt{n})$.
 
 $\mathcal{M}$ can be broken into three parts. We represent a sequence of random variables $T_1, .., T_n$ with the notation $T_1^n$. Throughout, we use the fact that for any function $f$, $ P_0 f(\X_i) | \mathcal{A}_n, \mathcal{Z}_n = P_0 f(\X_i)|\Z_i$, and that $(\X_i, \Z_i),i=1,\dots, n$ are i.i.d. random variables. We use the notation $P_0^n(\cdot)$ to represent the expectation conditional on $\mathcal{A}_n,\mathcal{Z}_n$.
 \begin{align*}
 	\mathcal{M} &=  \frac{1}{n} \sumn 1(A_i=a) \Big\{\mu_a(\X_i) - P_0 \mu_a - \hat{\mu}_a(\X_i) + P_0 \hat{\mu}_a \Big\} \\
 	&\quad\quad + \frac1n \sumn [1(A_i=a) - \pi_a] \big(P_0 \mu_a(\X_i) | \Z_i - P_0 \hat{\mu}_a(\X_i)|\Z_i \big) \\
 	&\quad\quad - \frac1n \sumn [1(A_i=a) - \pi_a] \big(P_0 \mu_a(\X_i) | \Z_i - P_0 \hat{\mu}_a(\X_i)|\Z_i \big) \\
 	&\quad\quad + \pi_a \big(P_0\hat\mu_a - P_0 \mu_Aa/\big) - \pi_a \big(P_0\hat\mu_a - P_0 \mu_a \big) \\
 	&= \underbrace{\frac1n \sumn 1(A_i=a) \big(\mu_a(\X_i) - P_0 \mu_a(\X_i)|\Z_i - \hat\mu_a(\X_i) + P_0 \hat{\mu}_a(\X_i)|\Z_i \big)}_{\mathcal{M}_1} \\
 	&\quad\quad + \underbrace{\frac1n \sumn [1(A_i=a) - \pi_a] \big(P_0 \mu_a(\X_i)|\Z_i - P_0 \hat{\mu}_a(\X_i)|\Z_i - P_0 \mu_a + P_0\hat{\mu}_a \big)}_{\mathcal{M}_2} \\
 	&\quad\quad + \underbrace{\frac1n \sumn \pi_a \big(P_0 \mu_a(\X_i)|\Z_i - P_0 \hat{\mu}_a(\X_i) |\Z_i + P_0 \hat{\mu}_a - P_0 \mu_a \big)}_{\mathcal{M}_3}.
 \end{align*}
 For term $\mathcal{M}_1$, note that for a generic $f(X_i)$,
 \begin{align*}
 	\frac1n \sumn 1(A_i=a)P_0f(\X_i)|\Z_i &= \frac1n \sumn 1(A_i=a) P_0 f(\X_i)|\mathcal{A}_n,\mathcal{Z}_n \\
 	&= \frac1n \sumn P_0 1_a(A_i) f(X_i)|\mathcal{A}_n,\mathcal{Z}_n.
 \end{align*}
 We write the short-hand notation for this empirical measure as $P_0^n(\cdot) = \frac1n \sumn P_0(\cdot|\mathcal{A}_n,\mathcal{Z}_n)$. Furthermore, $\frac1n \sumn P_0(f(X_i)|\mathcal{A}_n,\mathcal{Z}_n) = \frac1n \sum_{i=1}^{n} P_0(f(\X_i)|\Z_i)$.
 \begin{align*}
 	\mathcal{M}_1 &= \frac1n \sumn 1(A_i=a) \big(\mu_a(\X_i) - P_0 \mu_a(\X_i)|\Z_i - \hat\mu_a(\X_i) + P_0 \hat{\mu}_a(\X_i)|\Z_i \big) \\
 	&= - (P_n - P_0^n)(1_a\hat{\mu}_a - 1_a\mu_a).
 \end{align*}
 By Lemma \ref{lemma:h-negligible}, we know that this term is $o_P(1/\sqrt{n})$.
 
 For the second term $\mathcal{M}_2$, we will exploit the fact that we have assumed that there are finitely many strata levels $\l \in \{1, ..., L\}$, $L < \infty$. We have:
 \begin{align*}
 	\mathcal{M}_2 &= \frac1n \sumn [1(A_i=a) - \pi_a] \big(P_0 \mu_a(\X_i)|\Z_i - P_0 \hat{\mu}_a(\X_i)|\Z_i - P_0 \mu_a + P_0\hat{\mu}_a \big) \\
 	&= \frac1n \sum_{l = 1}^{L} \sum_{i:Z_i=l} [1(A_i=a) - \pi_a] \big((P_0 \mu_a(\X_i)|\Z_i=l) - (P_0 \hat{\mu}_a(\X_i)|\Z_i=l) - P_0 \mu_a + P_0\hat{\mu}_a \big).
 \end{align*}
 Note that because we have broken this up by strata level $l$, we no longer have dependence of $P_0 \mu_a(\X_i) | \Z_i$ and $P_0 \hat{\mu}_a(\X_i)|\Z_i$ on the index $i$. Therefore, it can be written as:
 \begin{align*}
 	\mathcal{M}_2 &=  \bigg(\sum_{l = 1}^{L} \sum_{i:Z_i=l} \frac1n[1(A_i=a) - \pi_a]\bigg) \big(P_0 (\hat{\mu}_a - \mu_a) - P_0 (\hat{\mu}_a(\X_i) - \mu_a(\X_i))|\Z_i=l\big).
 \end{align*}
 By assumption, we have $P_0 (\hat{\mu}_a - \mu_a)^2 \to 0$ in probability. Applying $L(P)$ norm inequalities, $P_0 (\hat{\mu}_a - \mu_a) \leq \norm{\hat{\mu}_a - \mu_a}_{L^1(P_0)} \leq \norm{\hat{\mu}_a - \mu_a}_{L^2(P_0)} \to 0$ in probability. We have a similar term conditional on $\Z_i = l$. Here, we note that using iterated expectation, we have
 \begin{align*}
 	P_0(\hat{\mu}_a - \mu_a)^2 = \sum_{l=1}^{L} P(\Z_i=l)(P_0(\hat{\mu}_a - \mu_a)^2 | \Z_i = l).
 \end{align*}
 Since the LHS is $o_P(1)$, and since $P(\Z_i=l) > 0$ for all $l$, and $(P_0(\hat{\mu}_a - \mu_a)^2 | \Z_i = l)$ is a non-negative term, we know that within each strata level $l$, that $P_0(\hat{\mu}_a - \mu_a)^2 | \Z_i = l$ must also be $o_P(1)$. Otherwise we would contradict the LHS begin $o_P(1)$. Therefore, we have
 \begin{align}
 	\mathcal{M}_2 &= o_P(1)\bigg(\sum_{l = 1}^{L} \sum_{i:Z_i=l} \frac1n[1(A_i=a) - \pi_a]\bigg)\nonumber \\
 	&= o_P(1) \bigg(\sum_{l = 1}^{L} [n_a(l)/n- \pi_a(n_l/n)]\bigg).\label{strata-ai}
 \end{align}
For each strata level $l$, we have assumed that $n_a(l)/n_l - \pi_a = O_p(1/\sqrt{n})$. Since there are finitely many $L$, we have 
 	$\mathcal{M}_2 = o_P(1) L O_p(1/\sqrt{n}) = o_P(1/\sqrt{n})$.
 
 For $\mathcal{M}_3$, we have
 \begin{align*}
 	\mathcal{M}_3 &= \pi_a \frac1n \sumn \bigg\{P_0 \mu_a(\X_i)|\Z_i - P_0 \hat{\mu}_a(\X_i) |\Z_i\bigg\} + \pi_a P_0 (\hat{\mu}_a - \mu_a).
 \end{align*}
Note that in the sum above, we have only i.i.d. terms, because $(X_i, Z_i)$ are i.i.d. random variables. Define the new random variable $\hat{g}_a(Z_i) = \int \hat\mu_a(\X_i) dP_0(\X_i|\Z_i)$, and $g_a(Z_i) = \int \mu_a(\X_i) dP_0(\X_i|\Z_i)$, as in Lemma \ref{lemma:g-negligible}. Using iterated expectation, we have that $P_0 \hat{g}_a = P_0 \hat{\mu}_a$, and $P_0 g_a = P_0 \mu_a$. Then $\mathcal{M}_3$ can be written as $\pi_a (P_n - P_0)(\hat{g}_a - g_a)$, which is $o_P(1/\sqrt{n})$ by Lemma \ref{lemma:g-negligible}.

Putting terms $\mathcal{M}_1, \mathcal{M}_2$ and $\mathcal{M}_3$ together we see that $\mathcal{M} = o_P(1/\sqrt{n})$, which means $(\star\star\star) = o_P(1/\sqrt{n})$.
\end{proof}

\newpage
\subsection{Asymptotic Variance Decomposition}\label{app:asymptotic-normality}

To show asymptotic normality of (possibly cross-fitted) AIPW estimator, we use the influence function from Theorem \ref{theo:asymptotic-normality} (i) and compute its asymptotic variance under condition (B). The derivations in this section will be used not only for understanding the asymptotic behavior under (B), but also under (U), and deriving conditions for guaranteed efficiency gain. When $\mu_a(\X) $ is linear in $x$, the variance agrees with \citet{Ye2021better}'s Theorem 3.

First, we rewrite $\phi_a$ as the sum of five components. Recall that our expression for $\phi_a$ was
\begin{align*}
	\phi_a(A_i, Y_i, \X_i) = \frac{1_a(A_i)}{\pi_a } \{ \Yai - \tilde{\mu}_a (\X_i) \} + \tilde{\mu}_a(\X_i) - \theta_a.
\end{align*}
where $\tilde{\mu}_a(\X_i) = \mu_a(\X_i) + \theta_a - P_0\mu_a(\X_i)$. We can split up this influence function into five parts that will make it easier to derive the asymptotic distribution of $\hat{\theta}_{\rm AIPW}$. Let
\begin{align*}
	\phi_a^1: (A_i,X_i,Y_i,Z_i) &\to \frac{1_a(A_i)}{\pi_a} \left[\Yai - \tilde{\mu}_a(\X_i) - \E\left\{\Yai - \tilde{\mu}_a(\X_i) \big|\Z_i \right\}\right] \\
	\phi_a^2: (A_i,X_i,Y_i,Z_i) &\to \frac{\left(1_a(A_i) - \pi_a\right)}{\pi_a} \E \left\{\Yai - \tilde{\mu}_a(\X_i) \big|\Z_i \right\} \\
	\phi_a^3: (X_i,Y_i,Z_i) &\to \E \left\{\Yai - \tilde{\mu}_a(\X_i) \big|\Z_i \right\} \\
	\phi_a^4: (X_i,Z_i) &\to \tilde{\mu}_a(\X_i) - \E\left\{\tilde{\mu}_a(\X_i) |\Z_i\right\} \\
	\phi_a^5: (X_i,Z_i) &\to \E\left\{\tilde{\mu}_a(\X_i) |\Z_i\right\} - \theta_a.
\end{align*}

Then by construction, $\phi_a = \phi_a^1 + \dots + \phi_a^5$. Let $\E_n\{M\} = (1/n) \sum_{i=1}^{n} M_i$, and let $\bm \phi^j = (\phi_1^j, ..., \phi_k^j)^T$ be the vector of the $j$th component of $\phi_a$ across all treatment groups $a = 1, ..., k$. Similarly, let $Y = (Y_{1}, ..., Y_{k})^T$, and $\tilde{\mu}(\X) = (\tilde{\mu}_1(\X), ..., \tilde{\mu}_k(\X))^T$. Conditional on $(\mathcal{A}_n,\mathcal{Z}_n)$, we have that $\E(\bm \phi^1 + \bm \phi^4|\mathcal{A}_n,\mathcal{Z}_n) = \bm 0$. Using the Lindeberg-Feller central limit theorem (see Lemma \ref{lemma:lfclt}),
\begin{align*}
	\sqrt{n} \E_n(\bm \phi^1 + \bm \phi^4) \big|\mathcal{A}_n,\mathcal{Z}_n \xrightarrow{d} N\left(0, \bm V^{(1,4)}\right)
\end{align*}
where $\bm V^{(1,4)}$ is a matrix such that $\Var(\bm \phi^1 + \bm \phi^4 |\mathcal{A}_n,\mathcal{Z}_n) \to \bm V^{(1,4)}$ in probability. The variance of two random vectors (since $\bm \phi^1$ and $\bm \phi^4$ are conditionally mean zero), is given by
\begin{align*}
	\Var(\bm \phi^1 + \bm \phi^4 | \mathcal{A}_n,\mathcal{Z}_n) &= \E\left\{(\bm \phi^1 + \bm \phi^4)(\bm \phi^1 + \bm \phi^4 )^T \big| \mathcal{A}_n,\mathcal{Z}_n \right\} \\
	&= \E\left\{\bm \phi^1 (\bm \phi^1)^T\big| \mathcal{A}_n,\mathcal{Z}_n \right\} + \E\left\{\bm \phi^1 (\bm \phi^4)^T\big| \mathcal{A}_n,\mathcal{Z}_n \right\} \\
	&\quad\quad + \E\left\{\bm \phi^4 (\bm \phi^1)^T\big| \mathcal{A}_n,\mathcal{Z}_n \right\} + \E\left\{\bm \phi^4 (\bm \phi^4)^T\big| \mathcal{A}_n,\mathcal{Z}_n \right\} \\
	&= \Var(\bm \phi^1|\mathcal{A}_n,\mathcal{Z}_n) + \Var(\bm \phi^4|\mathcal{A}_n,\mathcal{Z}_n) + \\
	&\quad\quad \Cov(\bm \phi_1, \bm \phi_4|\mathcal{A}_n,\mathcal{Z}_n) + \Cov(\bm \phi_4, \bm \phi_1|\mathcal{A}_n,\mathcal{Z}_n).
\end{align*}
Since conditional on $(\mathcal{A}_n,\mathcal{Z}_n)$, the product of $1_j(A_i)$ and $1_k(A_i)$ is zero for $j \neq k$, then $\Var(\bm \phi^1|\mathcal{A}_n,\mathcal{Z}_n)$ is a diagonal matrix with $(a,a)$th entry given by
\begin{align*}
	\pi_a^{-2} 1_a(A_i) \Var(\Yai - \tilde{\mu}_a(\X_i)|\Z_i) \to \pi_a^{-1} \E(\Var(\Ya - \tilde{\mu}_a(\X)|\Z)).
\end{align*}
Similarly, $\Var(\bm \phi^4 | \mathcal{A}_n,\mathcal{Z}_n) \to \E(\Var(\tilde{\mu}(\X)|\Z))$, $\Cov(\bm \phi^1, \bm \phi^4 | \mathcal{A}_n,\mathcal{Z}_n) \to \E(\Cov(Y - \tilde{\mu}(\X), \tilde{\mu}(\X) | \Z))$. Therefore,
\begin{align*}
	\bm V^{(1,4)} &= \text{diag}\{\pi_a^{-1} \E(\Var(\Ya - \tilde{\mu}_a(\X)|\Z))\} + \E(\Var(\tilde{\mu}(\X)|\Z)) \\
	&\quad\quad + \E(\Cov(Y - \tilde{\mu}(\X), \tilde{\mu}(\X) | \Z)) + \E(\Cov(\tilde{\mu}(\X), Y - \tilde{\mu}(\X) | \Z)).
\end{align*}
Furthermore, we have that, since $Z_i$ are i.i.d. random variables, we have
\begin{align*}
	\sqrt{n} \E_n(\bm \phi^3 + \bm \phi^5) \xrightarrow{d} N\left(0, \bm V^{(3,5)}\right)
\end{align*}
where, similarly to the steps for $\bm V^{(1,4)}$ (though there is no need to condition on $(\mathcal{A}_n,\mathcal{Z}_n)$),
\begin{align*}
	\bm V^{(3, 5)} &= \Var(\bm \phi^3) + \Var(\bm \phi^5) + \Cov(\bm \phi^3, \bm \phi^5) + \Cov(\bm \phi^5, \bm \phi^3) \\
	&= \E(\E(Y - \tilde{\mu}|\Z)\E(Y - \tilde{\mu}|\Z)^T) + \Var(\E(\tilde{\mu}|\Z)) \\
	&\quad\quad + \Cov(\E(Y - \tilde{\mu}|\Z), \E(\tilde{\mu}|\Z)) + \Cov(\E(\tilde{\mu}|\Z), \E(Y - \tilde{\mu}|\Z)) \\
	&= \E(R(\Z) \pi \pi^T R(\Z)) + \Var(\E(\tilde{\mu}|\Z)) \\
	&\quad\quad + \Cov(\E(Y - \tilde{\mu}|\Z), \E(\tilde{\mu}|\Z)) + \Cov(\E(\tilde{\mu}|\Z), \E(Y - \tilde{\mu}|\Z))
\end{align*}
where $R(\Z) = \text{diag}\{\pi_a^{-1} \E(\Ya - \tilde{\mu}_a(\X)|\Z); a = 1, ..., k\}$.

\newpage
\subsection{Helper Lemmas for Asymptotic Normality}

We will use the following lemma that uses asymptotic independence of the various components of the influence function that we have defined in order to claim asymptotic normality under condition (B) (or (U) later on).
\begin{lemma}[Asymptotic Independence]\label{lemma:asymptotic-independence}
    Let $Q_n$, $R_n$, and $S_n$ be random variables such that $Q_n | \mathcal{A}_n, \mathcal{Z}_n \xrightarrow{d} N(\bm 0, \bm I)$, $R_n | \mathcal{Z}_n \xrightarrow{d} N(\bm 0, \bm I)$, and $S_n \xrightarrow{d} N(\bm 0, \bm I)$. Then jointly, $(Q_n, R_n, S_n) \xrightarrow{d} (Q, R, S)$, where $Q$, $R$, and $S$ are independent, standard multivariate normal random variables. (Also see \citet{Ye2021better} Proof of Theorem 3).
\end{lemma}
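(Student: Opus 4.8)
The plan is to establish joint convergence via the Lévy continuity theorem, by showing that the joint characteristic function factorizes in the limit into the product of three Gaussian characteristic functions. Write $\mathcal{G}_n = \sigma(\mathcal{A}_n, \mathcal{Z}_n)$ and $\mathcal{H}_n = \sigma(\mathcal{Z}_n)$, so that $\mathcal{H}_n \subseteq \mathcal{G}_n$. The structure that makes the lemma hold is this nesting of conditioning together with the natural measurability in the application: $S_n$ is $\mathcal{H}_n$-measurable (in our use $\bm\phi^3$ and $\bm\phi^5$ depend only on the $Z_i$'s), while $R_n$ is $\mathcal{G}_n$-measurable (it is built from the allocation imbalances through $\bm\phi^2$). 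I would state this nested-measurability setting explicitly at the outset, since the three conditional statements alone do not otherwise pin down the joint law.

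The crucial preliminary step is to translate each conditional convergence-in-distribution hypothesis into a statement about conditional characteristic functions. For fixed $t$, I read $Q_n \mid \mathcal{A}_n,\mathcal{Z}_n \xrightarrow{d} N(\bm 0, \bm I)$ as $\E(e^{\,i t^{T} Q_n} \mid \mathcal{G}_n) \to e^{-\|t\|^2/2}$ in probability, and likewise $\E(e^{\,i s^{T} R_n} \mid \mathcal{H}_n) \to e^{-\|s\|^2/2}$ in probability. Since these conditional characteristic functions are bounded in modulus by $1$ and converge to a deterministic limit, the bounded convergence theorem upgrades the in-probability convergence to convergence in $L_1$; this is precisely what lets me pull the deterministic Gaussian factor out of the expectation. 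I expect this $L_1$ upgrade to be the only genuinely delicate point of the argument.

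With that in hand, I would peel off the conditioning in three layers. First, because $R_n$ and $S_n$ are both $\mathcal{G}_n$-measurable,
\[
\E\bigl(e^{\,i(t^{T} Q_n + s^{T} R_n + u^{T} S_n)}\bigr)
= \E\bigl[e^{\,i(s^{T} R_n + u^{T} S_n)}\,\E\bigl(e^{\,i t^{T} Q_n}\mid \mathcal{G}_n\bigr)\bigr]
= e^{-\|t\|^2/2}\,\E\bigl(e^{\,i(s^{T} R_n + u^{T} S_n)}\bigr) + o(1),
\]
where the remainder is controlled by $\E\bigl|\E(e^{\,i t^{T} Q_n}\mid \mathcal{G}_n) - e^{-\|t\|^2/2}\bigr| \to 0$. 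Second, because $S_n$ is $\mathcal{H}_n$-measurable, conditioning on $\mathcal{H}_n$ and repeating the same argument gives $\E(e^{\,i(s^{T} R_n + u^{T} S_n)}) = e^{-\|s\|^2/2}\,\E(e^{\,i u^{T} S_n}) + o(1)$. Third, the marginal hypothesis $S_n \xrightarrow{d} N(\bm 0, \bm I)$ yields $\E(e^{\,i u^{T} S_n}) \to e^{-\|u\|^2/2}$.

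Combining the three layers, the joint characteristic function converges to $e^{-\|t\|^2/2}\,e^{-\|s\|^2/2}\,e^{-\|u\|^2/2}$ for every $(t,s,u)$, which is exactly the characteristic function of a triple of independent standard multivariate normal vectors. The Lévy continuity theorem then gives $(Q_n, R_n, S_n) \xrightarrow{d} (Q, R, S)$ with $Q$, $R$, $S$ independent and standard normal, completing the proof. As noted, the main obstacle is the $L_1$ upgrade of the conditional characteristic functions; once that is secured, the remainder is routine bookkeeping with iterated expectations.
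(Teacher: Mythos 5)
Your proof is correct, and at the level of strategy it mirrors the paper's: both arguments peel off the conditioning in the same three nested layers ($\sigma(\mathcal{A}_n,\mathcal{Z}_n)$ first, then $\sigma(\mathcal{Z}_n)$, then the unconditional limit). The difference is the technical vehicle, and yours is the more careful one. The paper works with conditional CDFs: it writes $P(Q_n \leq Q, R_n \leq R, S_n \leq S)$ as an iterated expectation, pulls $1(R_n \leq R)$ out of the inner conditional expectation and $1(S_n \leq S)$ out of the middle one, and then invokes dominated convergence to pass the limit inside --- a step that, as written, produces expressions like $\lim_{n\to\infty} 1(R_n \leq R)$, which do not exist pointwise; the exchange of limit and expectation there is heuristic. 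Your characteristic-function route sidesteps this: conditional characteristic functions are bounded by $1$ and converge in probability to a deterministic limit, so the bounded-convergence upgrade to $L_1$ is legitimate, each layer's error is controlled by $\E\bigl|\E(e^{\,i t^{T} Q_n}\mid \mathcal{G}_n) - e^{-\|t\|^2/2}\bigr| \to 0$, and L\'evy continuity closes the argument cleanly. You also make explicit a hypothesis that the lemma statement omits but the paper's proof silently uses when it pulls the indicators out of the conditional expectations: $R_n$ must be $\sigma(\mathcal{A}_n,\mathcal{Z}_n)$-measurable and $S_n$ must be $\sigma(\mathcal{Z}_n)$-measurable --- without this the lemma is false (take $R_n \equiv Q_n$, which satisfies both conditional hypotheses yet is perfectly correlated with $Q_n$) --- and you correctly verify it holds in the application, since $\E_n(\bm\phi^2)$ is a function of $(A_i, Z_i)$ only while $\E_n(\bm\phi^3 + \bm\phi^5)$ depends on the $Z_i$ alone. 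In short: same decomposition as the paper, but a different limiting device (characteristic functions plus L\'evy continuity rather than CDFs plus dominated convergence) that repairs the rigor gaps in the published proof.
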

\begin{proof}
    Applying iterated expectation, and the dominated convergence theorem to switch limit and expectation, we have
\begin{align*}
	\lim_{n\to\infty} P(Q_n \leq Q, R_n \leq R, S_n \leq S) &= \lim_{n\to\infty} \E\left\{\E\left\{\E\left\{1(Q_n \leq Q)1(R_n \leq R) 1(S_n \leq S) \big| \mathcal{A}_n,\mathcal{Z}_n\right\} | \mathcal{Z}_n \right\} \right\} \\
	&= \lim_{n\to\infty} \E\left\{\E\left\{\E\left\{1(Q_n \leq Q) \big| \mathcal{A}_n,\mathcal{Z}_n\right\} 1(R_n \leq R)  \big|\mathcal{Z}_n \right\}1(S_n \leq S) \right\} \\
	&= \E\left\{ \E\left\{\lim_{n\to\infty}\E\left\{1(Q_n \leq Q) \big| \mathcal{A}_n,\mathcal{Z}_n\right\} \lim_{n\to\infty}1(R_n \leq R)  \big|\mathcal{Z}_n \right\} \lim_{n\to\infty} 1(S_n \leq S) \right\} \\
	&= \Phi(Q) \E\left\{\E\left\{  \lim_{n\to\infty}1(R_n \leq R)  \big| \mathcal{Z}_n \right\} \lim_{n\to\infty} 1(S_n \leq s) \right\} \\
 &= \Phi(Q) \E\left\{\lim_{n\to\infty} \E\left\{  1(R_n \leq R)  \big| \mathcal{Z}_n \right\} \lim_{n\to\infty} 1(S_n \leq s) \right\} \\
  &= \Phi(Q) \Phi(R) \E\left\{ \lim_{n\to\infty} 1(S_n \leq s) \right\} \\
  &= \Phi(Q) \Phi(R) \Phi(S) \\
    &= P(Q \leq Q, R \leq R, S \leq S)
\end{align*}
Therefore $(Q_n, R_n, S_n) \xrightarrow{d} (Q, R, S)$, where $Q$, $R$, and $S$ are independent, standard multivariate normal random variables.
\end{proof}
\begin{lemma}[Verification of Lindeberg-Feller Condition]\label{lemma:lfclt}
    Define $\bm \phi_1$ and $\bm \phi_4$ as in Section \ref{app:asymptotic-normality}, that is $\bm \phi_1 = (\phi^1_1, ..., \phi^1_k)$, and $\bm \phi_4 = (\phi^4_1, , ..., \phi^4_k)$ with
    \begin{align*}
	\phi_a^1: (A_i,X_i,Y_i,Z_i) &\to \frac{1_a(A_i)}{\pi_a} \left[\Yai - \tilde{\mu}_a(\X_i) - \E\left\{\Yai - \tilde{\mu}_a(\X_i) \big|\Z_i \right\}\right] \\
	\phi_a^4: (X_i,Z_i) &\to \tilde{\mu}_a(\X_i) - \E\left\{\tilde{\mu}_a(\X_i) |\Z_i\right\}.
    \end{align*} Then
	$\sqrt{n} \E_n(\bm \phi^1 + \bm \phi^4) \big|\mathcal{A}_n,\mathcal{Z}_n \xrightarrow{d} N\left(0, \bm V^{(1,4)}\right)$ where $\bm V^{(1,4)}$ is a matrix such that $\Var(\bm \phi^1 + \bm \phi^4 |\mathcal{A}_n,\mathcal{Z}_n) \to \bm V^{(1,4)}$ in probability.
\end{lemma}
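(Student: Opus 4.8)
The plan is to apply the multivariate Lindeberg–Feller central limit theorem to the triangular array obtained after conditioning on $(\mathcal{A}_n,\mathcal{Z}_n)$. First I would reduce to the scalar case via the Cramér–Wold device: fix an arbitrary $\lambda \in \mathbb{R}^k$ and set $\xi_{ni} = n^{-1/2}\lambda^T(\bm\phi^1_i + \bm\phi^4_i)$, so that $\sqrt{n}\,\lambda^T \E_n(\bm\phi^1 + \bm\phi^4) = \sum_{i=1}^n \xi_{ni}$. Conditional on $(\mathcal{A}_n,\mathcal{Z}_n)$ the summands are independent across $i$: by Assumption~\ref{assump: car}(i) the assignment vector is conditionally independent of $\{Y_i,X_i\}$ given $(Z_1,\dots,Z_n)$, so the conditional law of $(X_i,Y_i)$ given $(\mathcal{A}_n,\mathcal{Z}_n)$ equals its law given $Z_i$ alone, and the triples $(X_i,Y_i,Z_i)$ are i.i.d. Each $\xi_{ni}$ has conditional mean zero, since both $\bm\phi^1$ and $\bm\phi^4$ are centered by their conditional expectation given $Z_i$.

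The central observation is that the conditional distribution of $\bm\phi^1_i + \bm\phi^4_i$ given $(\mathcal{A}_n,\mathcal{Z}_n)$ depends on $i$ only through the pair $(A_i,Z_i)$, which ranges over the \emph{finite} set $\{1,\dots,k\}\times\{z_1,\dots,z_L\}$. Writing $v(a,z) = \lambda^T \Var(\bm\phi^1 + \bm\phi^4 \mid A=a, Z=z)\lambda$, the sum of conditional variances becomes $\sum_{i=1}^n \Var(\xi_{ni}\mid \mathcal{A}_n,\mathcal{Z}_n) = \sum_{a=1}^k \sum_{l=1}^L \{n_a(z_l)/n\}\, v(a,z_l)$. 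By Assumption~\ref{assump: car}(iii) together with the law of large numbers applied to $n(z_l)/n$, we have $n_a(z_l)/n = \{n_a(z_l)/n(z_l)\}\{n(z_l)/n\} \to \pi_a P(Z=z_l)$ in probability; as there are finitely many terms, the sum converges in probability to $\sum_{a,l} \pi_a P(Z=z_l)\, v(a,z_l) = \lambda^T \bm V^{(1,4)}\lambda$, which matches the term-by-term computation of $\bm V^{(1,4)}$ in Section~\ref{app:asymptotic-normality}.

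It then remains to check the conditional Lindeberg condition: for every $\eta>0$, $\sum_{i=1}^n \E\{\xi_{ni}^2\, 1(|\xi_{ni}|>\eta)\mid \mathcal{A}_n,\mathcal{Z}_n\} \to 0$ in probability. Because each conditional summand is drawn from one of the finitely many fixed distributions indexed by $(a,z_l)$, each with finite second moment (from the finite-second-moment assumption on $(Y_i,X_i)$ and $\E\{\mu_a^2(X)\}<\infty$ in Assumption~\ref{assump: stability}), the per-stratum contribution $\E\{(\lambda^T(\bm\phi^1+\bm\phi^4))^2\, 1(|\lambda^T(\bm\phi^1+\bm\phi^4)|>\eta\sqrt{n})\mid A=a,Z=z_l\}$ tends to $0$ as $n\to\infty$ by dominated convergence. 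The full sum is a convex combination of these finitely many vanishing terms with weights $n_a(z_l)/n \le 1$, hence is $o_P(1)$. The Lindeberg–Feller theorem then yields $\sum_i \xi_{ni}\mid \mathcal{A}_n,\mathcal{Z}_n \xrightarrow{d} N(0,\lambda^T \bm V^{(1,4)}\lambda)$, and the Cramér–Wold device gives the stated multivariate convergence. The \emph{main obstacle} is the non-identically-distributed nature of the conditional array; both the variance convergence and the Lindeberg step are rendered tractable precisely by the finiteness $L<\infty$ of the strata, which reduces everything to a finite mixture of fixed distributions whose mixing weights $n_a(z_l)/n$ are controlled by Assumption~\ref{assump: car}(iii).
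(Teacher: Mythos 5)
Your proof is correct, and its skeleton --- Cram\'er--Wold reduction, conditional mutual independence of $(X_i, Y_i)$ given $(\mathcal{A}_n,\mathcal{Z}_n)$ via Assumption 1(i), conditional mean zero, then a conditional Lindeberg--Feller argument --- coincides with the paper's. Where you genuinely diverge is in how the Lindeberg condition is verified. The paper (following the proof of Theorem 2 in \citet{Ye2021better}) self-normalizes: it introduces the ratios $\tau_{i,n} = \Var(K_i \mid \mathcal{A}_n,\mathcal{Z}_n)/\Var(\sum_{i=1}^n K_i \mid \mathcal{A}_n,\mathcal{Z}_n)$, bounds $\max_i \tau_{i,n} = O(n^{-1})$ by sandwiching the per-observation quadratic forms between the maximum and minimum of finitely many constants $C_{z_l,a}$ indexed by the cells $(a, z_l)$, and then applies dominated convergence to the standardized variables $W_i = K_i/\sqrt{\Var(K_i\mid \mathcal{A}_n, \mathcal{Z}_n)}$. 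You instead partition the Lindeberg sum directly into the finitely many cells $(a, z_l)$, observe that each cell's truncated second moment $\E\{(\lambda^T(\bm\phi^1+\bm\phi^4))^2\, 1(|\lambda^T(\bm\phi^1+\bm\phi^4)| > \eta\sqrt n) \mid A=a, Z=z_l\}$ is a deterministic sequence vanishing by dominated convergence, and bound the mixing weights $n_a(z_l)/n$ by one. Your route is more elementary and slightly more robust: the paper's argument implicitly requires the per-cell conditional variances to be bounded below by positive constants (its $\min_{z_l, a} C_{z_l,a}$ appears in a denominator, which degenerates if some cell variance vanishes), whereas your cell-wise bound needs no such non-degeneracy. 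You also make explicit something the lemma's statement merely asserts and the paper computes only in the surrounding variance-decomposition section: the identification of the limit $\bm V^{(1,4)}$ via $n_a(z_l)/n \to \pi_a P(Z=z_l)$ (Assumption 1(iii) combined with the law of large numbers for $n(z_l)/n$), which indeed matches the entry-wise limits there. One cosmetic difference: the paper applies Cram\'er--Wold to the stacked $2k$-dimensional vector $(\bm\phi^1, \bm\phi^4)$ with $\bm c \in \mathbb{R}^{2k}$, proving a marginally stronger joint statement, while you work with $\lambda \in \mathbb{R}^k$ applied to the sum $\bm\phi^1 + \bm\phi^4$; both suffice for the lemma as stated.
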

\begin{proof}
    We aim to verify the Lindeberg condition for any linear combination of the $2k$ vector $\E_n(\phi^1_1, ..., \phi^1_k, \phi^4_1, ..., \phi^4_k)\bm c$ with $\bm c \in \mathbb{R}^{2k}$, conditional on $(\mathcal{A}_n, \mathcal{Z}_n)$. By the Cramer-Wold device, and the Continuous Mapping Theorem, this will complete the proof. For simplicity of notation, let $\bm W_i = (A_i, X_i, y_{1,i}, ..., y_{k,i}, Z_i)$. Let
    \begin{align*}
    K_i = \frac{1}{n} \left(\phi^1_1(\bm W_i), ..., \phi^1_k(\bm W_i), \phi^4_1(\bm W_i), ..., \phi^4_k(\bm W_i)\right)\bm c.
    \end{align*}
    We are interested in $\sum_{i=1}^{n} K_i$. We know that $\E(K_i|\mathcal{A}_n, \mathcal{Z}_n) = 0$, since each component $\phi^1_a(\bm W_i)$ and $\phi^4_a(\bm W_i)$ have conditional expectation zero. Furthermore,
    \begin{align*}
        \Var(K_i|\mathcal{A}_n, \mathcal{Z}_n) &= \frac{1}{n^2} \bm c^T \Var\left(\phi^1_1(\bm W_i), ..., \phi^1_k(\bm W_i), \phi^4_1(\bm W_i), ..., \phi^4_k(\bm W_i)|\mathcal{A}_n, \mathcal{Z}_n\right) \bm c.
    \end{align*}
    Lindeberg's condition for the random variables $K_i | (\mathcal{A}_n, \mathcal{Z}_n)$ holds if, for all $\epsilon > 0$, (following exactly the steps in \citet{Ye2021better} proof of their Theorem 2)
    \begin{align}\label{lindeberg}
        \sum_{i=1}^{n} \E \left[ \frac{K_i^2}{\Var(\sum_{i=1}^{n} K_i | \mathcal{A}_n, \mathcal{Z}_n)} I \left( \frac{K_i^2}{\Var(\sum_{i=1}^{n} K_i | \mathcal{A}_n, \mathcal{Z}_n)} > \epsilon \right)\middle| \mathcal{A}_n, \mathcal{Z}_n \right] \to 0
    \end{align}
    as $n \to \infty$. Let $\tau_{i,n} := \frac{\Var(K_i | \mathcal{A}_n, \mathcal{Z}_n)}{\Var(\sum_{i=1}^{n} K_i | \mathcal{A}_n, \mathcal{Z}_n)}$. By (conditional) independence, $\sum_{i=1}^{n} \tau_{i,n} = 1$, since the sum of the variances is the variance of the sums. Then we can rewrite \eqref{lindeberg} as:
    \begin{align*}
        \eqref{lindeberg} &= \sum_{i=1}^{n} (\tau_{i,n}) \E \left[ \frac{K_i^2}{\Var(K_i | \mathcal{A}_n, \mathcal{Z}_n)} I \left( \frac{K_i^2}{\Var(\sum_{i=1}^{n} K_i | \mathcal{A}_n, \mathcal{Z}_n)} > \epsilon \right)\middle|\mathcal{A}_n, \mathcal{Z}_n \right] \\
        &\leq \max_{i} \E \left[ \frac{K_i^2}{\Var(K_i | \mathcal{A}_n, \mathcal{Z}_n)} I \left( \frac{K_i^2}{\Var(\sum_{i=1}^{n} K_i | \mathcal{A}_n, \mathcal{Z}_n)} > \epsilon \right)\middle|\mathcal{A}_n, \mathcal{Z}_n \right] \\
        &= \max_{i} \E \left[ \frac{K_i^2}{\Var(K_i | \mathcal{A}_n, \mathcal{Z}_n)} I \left( \frac{K_i^2}{\Var(K_i|\mathcal{A}_n, \mathcal{Z}_n)} > \frac{\epsilon}{ \tau_{i,n}}\right)\middle|\mathcal{A}_n, \mathcal{Z}_n \right]
    \end{align*}
     Note that $W_i := \frac{K_i}{\sqrt{\Var(K_i | \mathcal{A}_n,    \mathcal{Z}_n)}}$ is a random variable which, conditional on $\mathcal{A}_n, \mathcal{Z}_n$, has mean zero and has unit variance. It is important to recall that we have finite strata levels $\{1, ..., L\}$. Therefore, quantities conditional on strata levels and treatment indicators have maximums that are achieved. For any $i$, we have that $\E(W_i^2 I(W_i^2 > \epsilon \max_i(\tau_{i,n}^{-1}))|\mathcal{A}_n, \mathcal{Z}_n) \to 0$ by the dominated convergence theorem. This is because point-wise, the function $w^2 I(w^2 > \epsilon \max_i(\tau_{i,n}^{-1})) \to 0$ as $n \to \infty$. Also, $w^2 I(w^2 > \epsilon \max_i (\tau_{i,n}^{-1})) \leq w^2$, and $\E(w^2|\mathcal{A}_n, \mathcal{Z}_n) = 1 < \infty$ is bounded, so we can switch limit and expectation. Therefore, $\lim_{n\to\infty} \E(W_i^2 I(W_i^2 > \epsilon \max_i(\tau_{i,n}^{-1}))|\mathcal{A}_n, \mathcal{Z}_n) =  \E(\lim_{n\to\infty} W_i^2 I(W_i^2 > \epsilon \max_i(\tau_{i,n}^{-1}))|\mathcal{A}_n, \mathcal{Z}_n) = 0$.
     
     The covariance elements of $\Var\left(\phi^1_1(\bm W_i), ..., \phi^1_k(\bm W_i), \phi^4_1(\bm W_i), ..., \phi^4_k(\bm W_i)|\mathcal{A}_n, \mathcal{Z}_n\right)$ are given by:
    \begin{align*}
        \Cov(\phi^1_a(\bm W_i), \phi^1_a(\bm W_i)|\mathcal{A}_n, \mathcal{Z}_n) &= \pi_a^{-2} 1_a(A_i) \Var(Y_{a,i} - \tilde{\mu}_a(X_i) | Z_i) \\
        \Cov(\phi^1_a(\bm W_i), \phi^1_b(\bm W_i)|\mathcal{A}_n, \mathcal{Z}_n) &= 0 \\
        \Cov(\phi^4_a(\bm W_i), \phi^4_b(\bm W_i)|\mathcal{A}_n, \mathcal{Z}_n) &= \Cov(\tilde{\mu}_a(X_i), \tilde{\mu}_b(X_i)|Z_i) \\
        \Cov(\phi^1_a(\bm W_i), \phi^4_b(\bm W_i)|\mathcal{A}_n, \mathcal{Z}_n) &= \frac{1}{\pi_a} 1_a(A_i) \Cov(Y_{a,i} - \tilde{\mu}_a(X_i), \tilde{\mu}_b(X_i)|Z_i)
    \end{align*}
    The elements are all functions of $(A_i, Z_i)$ which have finite levels. Therefore, the quadratic form of the covariance matrix, $\bm c^T \Var\left(\phi^1_1(\bm W_i), ..., \phi^1_k(\bm W_i), \phi^4_1(\bm W_i), ..., \phi^4_k(\bm W_i)|\mathcal{A}_n, \mathcal{Z}_n\right) \bm c\leq \bm C_{Z_i, A_i}$, which is positive semi-definite by nature of being a covariance matrix, can be bounded above and below by positive constants. Thus, $\max_i \Var(K_i | \mathcal{A}_n, \mathcal{Z}_n) = n^{-2} \max_{z_l; l\in \{1, ..., L\}, a \in \{1, ..., k\}} (C_{z_l, a})$. Proceeding in a similar fashion, 
    \begin{align*}
        \Var\left(\sum_{i=1}^{n} K_i | \mathcal{A}_n, \mathcal{Z}_n\right) &\geq n^{-2} \sum_{i=1}^{n} \bm c^T \Var\left(\phi^1_1(\bm W_i), ..., \phi^1_k(\bm W_i), \phi^4_1(\bm W_i), ..., \phi^4_k(\bm W_i)|\mathcal{A}_n, \mathcal{Z}_n\right) \\
        &\geq n^{-1} \min_{z_l; l\in \{1, ..., L\}, a \in \{1, ..., k\}} (C_{z_l, a}).
    \end{align*}
    Putting these pieces together,
    \begin{align*}
        \tau_{i,n} &\leq \max \tau_{i,n}
        \leq n^{-1} \left(\max_{z_l; l\in \{1, ..., L\}, a \in \{1, ..., k\}} (C_{z_l, a}) / \min_{z_l; l\in \{1, ..., L\}, a \in \{1, ..., k\}} (C_{z_l, a})\right) = o(1).
    \end{align*}.
\end{proof}

\newpage
\subsection{Asymptotic Normality  and Variance under Condition (B): Proof of Theorem \ref{theo:asymptotic-normality} (ii)}

We now apply the results of the previous sections to derive asymptotic normality and variance under condition (B). Under (B), we have that
\begin{align}\label{eq:B}
	\sqrt{n} \left(      {\textstyle \frac{n_a(z_l) }{n(z_l)}- \pi_a },  \, {}^{a=1,...,k}_{ \, l=1,..., L}  \right)^T \, \bigg| \, \mathcal{Z}_n \, \xrightarrow{d}  \,  N \left(0, {\rm diag} \left\{
\frac{\Omega(z_l)}{P(\Z=z_l)}, \, l=1,...,L \right\}
\right).
\end{align}
We first note that, with $\hat{p}_{z_l} = n(z_l) / n$,
\begin{align*}
	\frac{1}{n} \sum_{i=1}^{n} 1_a(A_i) \pi_a^{-1} \E(\Yai - \tilde{\mu}_a(\X_i) | \Z_i) &= \sum_{l=1}^{L}\frac{n_a(z_l)}{n(z_l)} \hat{p}_{z_l} R_a(z_l) \\
	\frac{1}{n} \sum_{i=1}^{n} \E(\Yai - \tilde{\mu}_a(\X_i) | \Z_i) &= \sum_{l=1}^{L} \pi_a  \hat{p}_{z_l} R_a(z_l).
\end{align*}
where $R_a(Z) = \pi_a^{-1} \E(\Yai - \tilde{\mu}_a(\X_i)|\Z)$.
Therefore, we can find the distribution of $\bm \phi_2 | \mathcal{Z}_n$ by performing a linear transformation on the normal distribution in \eqref{eq:B}. Define $R(\Z) = \text{diag}\{R_1(\Z), ..., R_k(\Z)\}$. Then, letting $\mathbb{L}_n := \left(      {\textstyle \frac{n_a(z_l) }{n(z_l)}- \pi_a },  \, {}^{a=1,...,k}_{ \, l=1,..., L}  \right)^T$, we have that
\begin{align*}
	\E_n(\bm \phi_2) = \left[
		\hat{p}_{z_1} R(z_1), ..., \hat{p}_{z_L} R(z_L)
	\right] \mathbb{L}_n.
\end{align*}
\begin{remark}\label{remark:phi2-op1}
    We need (B) to show asymptotic normality and derive its limiting variance, but we do not need (B) to show that $\E_n(\bm \phi_2)|\mathcal{Z}_n = O_P(1/\sqrt{n})$. This is apparent only from Assumption 1 (iii), which states that each element of the vector $\sqrt{n} \mathcal{L}_n = O_P(1)$, so the linear transformation of $\mathbb{L}_n$ above is $O_P(1/\sqrt{n})$.
\end{remark}

Therefore, $\sqrt{n} \E_n(\bm \phi_2) | \mathcal{Z}_n \xrightarrow{d} N(0, \bm V^{(2)})$ where $\bm V^{(2)}$ is the limit in probability of the variance-covariance matrix of the linear transformation conditioning on $\mathcal{Z}_n$:
\begin{align*}
	\bm V^{(2)}_{\mathcal{Z}_n} &= \left[\hat{p}_{z_1} R(z_1), ..., \hat{p}_{z_L} R(z_L)	\right]
	\text{diag}\left\{p_{z_l}^{-1} \Omega(z_l); l = 1, ... L \right\} \left[\hat{p}_{z_1} R(z_1), ..., \hat{p}_{z_L} R(z_L)	\right]^T \\
	&= \sum_{l=1}^{L} (\hat{p}_{z_l})^2 p_{z_l}^{-1} R(z_l) \Omega(z_l) R(z_l) \\
	&= \sum_{l=1}^{L} \hat{p}_{z_l} R(z_l) \Omega(z_l) R(z_l) + o_P(1) \xrightarrow{p} \E(R(\Z) \Omega(\Z) R(\Z)) \equiv \bm V^{(2)}.
\end{align*}

Define the following scaled random variables that are the empirical averages of the influence function components: $\bm M_n^{(1, 4)} := \left[\bm V^{(1,4)}\right]^{-1}\sqrt{n} \E_n(\bm \phi^1 + \bm \phi^4)$, $\bm M_n^{(2)} := \left[\bm V^{(2)}\right]^{-1}\sqrt{n} \E_n(\bm \phi^2)$, and $\bm M_n^{(3,5)} := \left[\bm V^{(3,5)}\right]^{-1}\sqrt{n} \E_n(\bm \phi^3 + \bm \phi^5)$. Then by the continuous mapping theorem, $\bm M_n^{(1, 4)} | \mathcal{A}_n,\mathcal{Z}_n \xrightarrow{d}  N(0, \bm I_k)$, $\bm M_n^{(2)} | \mathcal{Z}_n \xrightarrow{d}  N(0, \bm I_k)$, and $\bm M_n^{(3,5)} \xrightarrow{d}  N(0, \bm I_k)$. Furthermore, these three normal distributions are asymptotically independent. Applying Lemma \ref{lemma:asymptotic-independence},
$(\bm M_n^{(1, 4)}, \bm M_n^{(2)}, \bm M_n^{(3,5)}) \xrightarrow{d} (\bm M^{(1, 4)}, \bm M^{(2)}, \bm M^{(5)})$, where $\bm M^{(1,4)}$, $\bm M^{(2)}$, and $\bm M^{(3,5)}$ are independent multivariate normal random variables.
Therefore, by the continuous mapping theorem, we can scale each of them by their respective variance-covariance matrices, and add the three random variables together to get that under (B),
\begin{align*}
	\sqrt{n} (\hat{\theta}_{AIPW} - \theta) \equiv \sqrt{n} \E_n(\bm \phi^1 + \bm \phi^2 + \bm \phi^3 + \bm \phi^4 + \bm \phi^5) + o_P(1)\xrightarrow{d} N(0, \bm \Sigma_{(B)})
\end{align*}
where $\bm \Sigma_{(B)}$ is given by
\begin{align*}
\bm \Sigma_{(B)} &= \bm V^{(1,4)} + \bm V^{(2)} + \bm V^{(3,5)} \\
&= \text{diag}\{\pi_a^{-1} \E(\Var(\Ya - \tilde{\mu}_a(\X)|\Z))\} + \E(\Var(\tilde{\mu}(\X)|\Z)) \\
	&\quad\quad + \E(\Cov(Y - \tilde{\mu}(\X), \tilde{\mu}(\X) | \Z)) + \E(\Cov(\tilde{\mu}(\X), Y - \tilde{\mu}(\X) | \Z)) \\
	&\quad\quad + \E(R(\Z) \Omega(\Z) R(\Z)) + \E(R(\Z) \pi \pi^T R(\Z)) + \Var(\E(\tilde{\mu}|\Z)) \\
	&\quad\quad + \Cov(\E(Y - \tilde{\mu}(\X)|\Z), \E(\tilde{\mu}(\X)|\Z)) + \Cov(\E(\tilde{\mu}(\X)|\Z), \E(Y - \tilde{\mu}(\X)|\Z)) \\
	&= \text{diag}\{\pi_a^{-1} \E(\Var(\Ya - \tilde{\mu}_a(\X)|\Z))\} + \E(R(\Z) \Omega(\Z) R(\Z)) + \E(R(\Z) \pi \pi^T R(\Z))  \\
	&\quad\quad + \Var(\tilde{\mu}(\X)) + \Cov(Y - \tilde{\mu}(\X), \tilde{\mu}(\X)) + \Cov(\tilde{\mu}(\X), Y - \tilde{\mu}(\X)) \\
	&= \underbrace{\text{diag}\{\pi_a^{-1} \E(\Var(\Ya - \tilde{\mu}_a(\X)|\Z))\}}_{(I)} + \E(R(\Z) \Omega(\Z) R(\Z)) + \E(R(\Z) \pi \pi^T R(\Z))  \\
	&\quad\quad - \Var(\tilde{\mu}(\X)) + \Cov(Y, \tilde{\mu}(\X)) + \Cov(\tilde{\mu}(\X), Y)
\end{align*}
where we have simplified the expressions using the law of total variance and law of total covariance.
Under simple randomization $\Omega_{SR} = \text{diag}\{\pi_a; a = 1, ..., k\} - \pi \pi^T$. Then,
\begin{align*}
	(I) &= \text{diag}\{\pi_a^{-1} \Var(\Ya - \tilde{\mu}_a(\X))\} - \text{diag}\{\pi_a^{-1} \Var(\E(\Ya - \tilde{\mu}_a(\X)|\Z))\} \\
	&=  \text{diag}\{\pi_a^{-1} \Var(\Ya - \tilde{\mu}_a(\X))\} - \text{diag}\{\pi_a^{-1} \Var(R_a(\Z))\} \\
	&= \text{diag}\{\pi_a^{-1} \Var(\Ya - \tilde{\mu}_a(\X))\} - \E(R(\Z) \text{diag}\{\pi_a\} R(\Z))
\end{align*}
where the last equality is because $E(R(\Z)) = \bm 0$. Putting these together, we have
\begin{align*}
	\bm \Sigma_{(B)} &= \text{diag}\{\pi_a^{-1} \Var(\Ya - \tilde{\mu}_a(\X))\} \\
	&\quad\quad - \E(R(\Z) \text{diag}\{\pi_a\} R(\Z)) + \E(R(\Z) \Omega(\Z) R(\Z)) + \E(R(\Z) \pi \pi^T R(\Z))  \\
	&\quad\quad - \Var(\tilde{\mu}(\X)) + \Cov(Y, \tilde{\mu}(\X)) + \Cov(\tilde{\mu}(\X), Y) \\
	&= \text{diag}\{\pi_a^{-1} \Var(\Ya - \tilde{\mu}_a(\X))\} - \E(R(\Z) \{\Omega_{SR} - \Omega\} R(\Z)) \\
	&\quad\quad - \Var(\tilde{\mu}(\X)) + \Cov(Y, \tilde{\mu}(\X)) + \Cov(\tilde{\mu}(\X), Y).
\end{align*}
Note that $\tilde{\mu}_a$ can be replaced with $\mu_a$ in the above expression because $\Var(P_0 \mu_a(X) - \theta_a)$ is a constant.

\newpage
\section{Asymptotic Linearity and Normality of Cross-Fitted Estimator under CAR}

In this section, we show that all of our theoretical results also apply to AIPW estimators that use cross-fitting. Assumption 2* states that given a random subset $I_k$, for every $a$,  the nuisance parameter estimator $\hat{\mu}_{a,k}$ using the $I_k^c$ sample satisfies the following condition:  there exists a function $\mu_a$ with finite second order moment such that with probability $1-\Delta_n$, 
 $\|\hat \mu_{a,k}  - \mu_a \|_{L^2 (P_0)} \leq \delta_n $, where $\Delta_n=o(1)$ and $\delta_n = o(1)$. We implicitly assume that there exists an $\epsilon > 0$ such that $\epsilon < \pi_a < 1-\epsilon$, and $\epsilon < \hat \pi_{a,k} < 1-\epsilon$ almost surely. This is a very mild assumption: true and estimated treatment probabilities need to be bounded away from 0 and 1).

\subsection{Helper Lemmas for Cross-Fitting}

\begin{lemma}\label{lemma:mu-star}
	Let $\mu^*_a(\X) = \mu_a(\X) - P_0 \mu_a(\X) + \theta_a$. Let $$\mathcal{F}^*_{a,n} = \{x \to \mu(\X) - P_0 \mu + \theta_a: \mu \in \mathcal{F}_{a,n} \}.$$ Then $\sup_{\mu \in \mathcal{F}_{a,n}}\E([\mu(\X) - \mu_a(\X)]^2) = o_P(1)$ implies $\sup_{\mu^* \in \mathcal{F}^*_{a,n}} \E([{\mu}^*(\X) - \mu_a^*(\X)]^2) = o_P(1)$.\end{lemma}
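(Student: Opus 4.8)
The plan is to exploit the fact that the map $\mu \mapsto \mu^*$ defined by $\mu^*(\X) = \mu(\X) - P_0\mu + \theta_a$ is just a recentering, so that the difference $\mu^*(\X) - \mu_a^*(\X)$ is the \emph{centered} version of $\mu(\X) - \mu_a(\X)$, and centering can only shrink the $L^2$ norm. The key algebraic observation is that the additive constant $\theta_a$ cancels, leaving
\begin{align*}
	\mu^*(\X) - \mu_a^*(\X) &= \bigl[\mu(\X) - P_0\mu + \theta_a\bigr] - \bigl[\mu_a(\X) - P_0\mu_a + \theta_a\bigr] \\
	&= \bigl[\mu(\X) - \mu_a(\X)\bigr] - P_0\bigl[\mu - \mu_a\bigr].
\end{align*}
Writing $d(\X) := \mu(\X) - \mu_a(\X)$, this says $\mu^*(\X) - \mu_a^*(\X) = d(\X) - \E\{d(\X)\}$.

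The next step is to bound the second moment of the centered quantity by the second moment of the uncentered one. Since $\E([d(\X) - \E\{d(\X)\}]^2) = \Var\{d(\X)\} \le \E\{d(\X)^2\}$, we immediately obtain the pointwise (in $\mu$) inequality
\begin{align*}
	\E\bigl([\mu^*(\X) - \mu_a^*(\X)]^2\bigr) = \Var\{d(\X)\} \le \E\{d(\X)^2\} = \E\bigl([\mu(\X) - \mu_a(\X)]^2\bigr).
\end{align*}

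Finally, because the correspondence $\mu \leftrightarrow \mu^*$ between $\mathcal{F}_{a,n}$ and $\mathcal{F}^*_{a,n}$ is a bijection, I would take the supremum over the respective classes on both sides of the inequality above, yielding
\begin{align*}
	\sup_{\mu^* \in \mathcal{F}^*_{a,n}} \E\bigl([\mu^*(\X) - \mu_a^*(\X)]^2\bigr) \le \sup_{\mu \in \mathcal{F}_{a,n}} \E\bigl([\mu(\X) - \mu_a(\X)]^2\bigr) = o_P(1),
\end{align*}
which is exactly the claimed conclusion. There is no substantive obstacle here: the only point requiring a moment of care is recognizing that $\mu^* - \mu_a^*$ is the \emph{centered} difference (so that the variance-versus-second-moment inequality applies) and confirming that the constants $P_0\mu_a$ and $\theta_a$ drop out cleanly; once that structure is seen, the result is a one-line consequence of $\Var \le$ second moment and monotonicity of the supremum.
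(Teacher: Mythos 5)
Your proof is correct, and it is sharper than the paper's at the one step where the two arguments diverge. Both proofs begin with the same algebraic observation — the constant $\theta_a$ cancels and $\mu^*(\X) - \mu_a^*(\X) = d(\X) - P_0 d$ with $d = \mu - \mu_a$ — and both finish with the same passage to suprema over the two classes. The difference is in the middle: the paper bounds $\E\bigl([d(\X) - P_0 d]^2\bigr)$ by first applying a generalized mean inequality $(a+b)^2 \leq C(a^2 + b^2)$ to the two pieces $d(\X)$ and $P_0(\mu_a - \mu)$, and then Jensen's inequality to control $[P_0(\mu - \mu_a)]^2 \leq P_0\bigl[(\mu - \mu_a)^2\bigr]$, ending with the cruder bound $2C\,\E\bigl([\mu(\X) - \mu_a(\X)]^2\bigr)$. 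You instead recognize the difference as an exactly centered random variable and use the identity $\E\bigl([d - \E d]^2\bigr) = \Var(d) = \E(d^2) - (\E d)^2 \leq \E(d^2)$, which yields the same conclusion with constant $1$ and needs neither auxiliary inequality — since $o_P(1)$ absorbs constants this buys nothing asymptotically, but it is cleaner and more transparent. One small imprecision: the map $\mu \mapsto \mu^*$ is not a bijection — any two elements of $\mathcal{F}_{a,n}$ differing by an additive constant are sent to the same $\mu^*$ — but your supremum step only requires that the map be surjective onto $\mathcal{F}^*_{a,n}$ (every $\mu^*$ admits some preimage $\mu$ for which your pointwise inequality holds), which is true by the definition of $\mathcal{F}^*_{a,n}$; the paper's own proof glosses over the same point with its ``equivalent supremum'' remark, so this is a matter of wording rather than a gap.
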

\begin{proof}
	Let $P_{n,a}(\cdot) = \frac{1}{n_a} \sum_{i:A_i=a}(\cdot)$. We can write, for any $\mu^* \in \mathcal{F}^*_{a,n}$,
	\begin{align*}
		\E([{\mu}^*(\X) - \mu_a^*(\X)]^2) &= \E([\mu(\X) - \mu_a(\X) + P_0 (\mu_a - \mu)]^2) \\
		&\leq C \cdot \left\{ \E([\mu(\X) - \mu_a(\X)]^2) + \E([P_{0} \mu - P_0 \mu_a]^2)\right\} \\
		&= C \cdot \left\{ \E([\mu(\X) - \mu_a(\X)]^2) + [P_{0}(\mu - \mu_a)]^2\right\} \\
		&\leq 2C \cdot \left\{ \E([\mu(\X) - \mu_a(\X)]^2)\right\}
	\end{align*}
	for some $\mu \in \mathcal{F}_{a,n}$ and some constant $C$, where the last step is due to Jensen's Inequality. Taking a supremum over $\mu^* \in \mathcal{F}_{a,n}^*$ on the LHS is equivalent to taking a supremum over $\mu \in \mathcal{F}_{a,n}$ on the RHS. Thus, to show that $\sup_{\mu^* \in \mathcal{F}^*_{a,n}} \E([{\mu}^*(\X) - \mu_a^*(\X)]^2) = o_P(1)$, it suffices to show that $\sup_{\mu \in \mathcal{F}_{a,n}}$ over the RHS is $o_P(1)$, which is true by assumption.\end{proof}

\begin{lemma}\label{lemma:strata-contradiction}
	The condition that $\sup_{\mu^* \in \mathcal{F}^*_{a,n}}\E([\mu^*(\X) - \mu^*_a(\X)]^2) = o_P(1)$ implies that for each strata $Z$, $\sup_{\mu^* \in \mathcal{F}^*_{a,n}} \E([{\mu}^*(\X) - \mu_a^*(\X)]^2|\Z) = o_P(1)$.
\end{lemma}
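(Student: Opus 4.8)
The plan is to reduce the claim to the law of total expectation over the finitely many strata levels, and then to observe that the resulting bound is automatically uniform over the function class because the only strata-dependent factor is the fixed positive constant $P(\Z = z_l)$. This is the uniform-over-$\mathcal{F}^*_{a,n}$ analogue of the strata-level argument already used for the term $\mathcal{M}_2$ in the proof of Theorem \ref{theo:asymptotic-normality}(i).

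First I would fix an arbitrary $\mu^* \in \mathcal{F}^*_{a,n}$, write $d(\mu^*) := \mu^*(\X) - \mu^*_a(\X)$, and decompose the unconditional second moment by conditioning on $\Z$:
\begin{align*}
\E\left[d(\mu^*)^2\right] = \sum_{l=1}^{L} P(\Z = z_l)\, \E\left[d(\mu^*)^2 \,\middle|\, \Z = z_l\right].
\end{align*}
Since every summand on the right-hand side is nonnegative, each individual term is dominated by the full sum, so for each fixed level $z_l$,
\begin{align*}
P(\Z = z_l)\, \E\left[d(\mu^*)^2 \,\middle|\, \Z = z_l\right] \le \E\left[d(\mu^*)^2\right].
\end{align*}

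Next, because $P(\Z = z_l) > 0$ by the assumption on the strata in Section \ref{sec: general setup and estimators}, I would divide through and note that the constant $P(\Z = z_l)^{-1}$ does not depend on $\mu^*$. Hence the pointwise bound $\E[d(\mu^*)^2 \mid \Z = z_l] \le P(\Z = z_l)^{-1}\, \E[d(\mu^*)^2]$ holds for every $\mu^*$ with the same constant, and therefore survives taking the supremum over $\mu^* \in \mathcal{F}^*_{a,n}$ on both sides:
\begin{align*}
\sup_{\mu^* \in \mathcal{F}^*_{a,n}} \E\left[d(\mu^*)^2 \,\middle|\, \Z = z_l\right] \le P(\Z = z_l)^{-1} \sup_{\mu^* \in \mathcal{F}^*_{a,n}} \E\left[d(\mu^*)^2\right].
\end{align*}
By hypothesis the right-hand side equals $P(\Z = z_l)^{-1}\, o_P(1) = o_P(1)$, which is precisely the claim for level $z_l$; since $L < \infty$, the conclusion holds for each of the finitely many strata simultaneously.

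The step requiring the most care is the interchange of the supremum with the strata decomposition, and the key point that makes it clean is that the domination $\E[d(\mu^*)^2 \mid \Z = z_l] \le P(\Z = z_l)^{-1}\E[d(\mu^*)^2]$ holds for each $\mu^*$ separately with a constant free of $\mu^*$. Consequently one never needs to interchange a supremum with an infinite sum or with a limit, and there is no genuine uniformity obstruction; the finiteness of $L$ and the positivity of each $P(\Z = z_l)$ are all that is used.
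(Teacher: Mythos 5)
Your proof is correct, and it takes a genuinely different route from the paper's: the paper proves this lemma by contradiction, assuming that the conditional supremum fails to be $o_P(1)$ at some stratum $Z'$ and then using iterated expectation together with $P(\Z = z_l) > 0$ to contradict the hypothesis, whereas you run the same two ingredients forward as a direct, quantitative argument. Concretely, you isolate the deterministic pointwise bound $\E[d(\mu^*)^2 \mid \Z = z_l] \le P(\Z = z_l)^{-1}\,\E[d(\mu^*)^2]$, note that the constant is free of $\mu^*$, and pass it through the supremum, so the conclusion follows by multiplying an $o_P(1)$ quantity by the fixed constant $P(\Z = z_l)^{-1}$. This buys two things. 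First, it sidesteps a soft spot in the paper's contradiction argument: the negation of $o_P(1)$ is stated there as ``for all $\epsilon > 0$, $\lim_{n\to\infty} P(\cdot > \epsilon) > \delta$,'' which is stronger than the true negation (which only asserts $\limsup_n P(\cdot > \epsilon) > 0$ for \emph{some} $\epsilon$, with no guarantee the limit exists), so the published proof technically rules out only a stronger failure mode and would need minor repair, while your direct version has no such issue. Second, your argument yields the explicit uniform inequality $\sup_{\mu^* \in \mathcal{F}^*_{a,n}} \E[d(\mu^*)^2 \mid \Z = z_l] \le P(\Z = z_l)^{-1} \sup_{\mu^* \in \mathcal{F}^*_{a,n}} \E[d(\mu^*)^2]$, which is slightly stronger and more informative than the qualitative statement of the lemma, and it makes transparent that no interchange of supremum with a limit is ever needed — only finiteness of $L$ and positivity of each stratum probability.
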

\begin{proof}
	We use proof by contradiction. Assume that $\sup_{\mu^* \in \mathcal{F}^*_{a,n}}\E([\mu^*(\X) - \mu^*_a(\X)]^2) = o_P(1)$, and that there exists a $\delta > 0$ and a strata $Z'$ such that for all $\epsilon > 0$, $$\lim_{n\to\infty} P\left(\sup_{\mu^* \in \mathcal{F}^*_{a,n}} \E([{\mu}^*(\X) - \mu_a^*(\X)]^2|\Z') > \epsilon \right) > \delta.$$ Furthermore, by iterated expectation, $\E([{\mu}^*(\X) - \mu_a^*(\X)]^2) = \sum_{l=1}^{L} \E([{\mu}^*(\X) - \mu_a^*(\X)]^2|\Z=l) > \E([{\mu}^*(\X) - \mu_a^*(\X)]^2|\Z') P(Z')$. Since each strata has positive probability, there exists a $\rho > 0$ such that $\E([{\mu}^*(\X) - \mu_a^*(\X)]^2) > \E([{\mu}^*(\X) - \mu_a^*(\X)]^2|\Z') \rho$. Therefore, for all $\epsilon' > 0$ (where we can re-parameterize $\epsilon' = \epsilon \cdot \rho$),
	\begin{align*}
		P\left(\sup_{\mu^* \in \mathcal{F}^*_{a,n}} \E([{\mu}^*(\X) - \mu_a^*(\X)]^2) > \epsilon' \right) \geq P\left(\sup_{\mu^* \in \mathcal{F}^*_{a,n}} \E([{\mu}^*(\X) - \mu_a^*(\X)]^2|\Z') > \epsilon \right).
	\end{align*}
	Taking the limit of both sides as $n \to \infty$, we have that the RHS is $> \delta$ by assumption, for all $\epsilon > 0$. Therefore, the limit of the LHS is also $> \delta$, meaning that $\sup_{\mu^* \in \mathcal{F}^*_{a,n}} \E([{\mu}^*(\X) - \mu_a^*(\X)]^2)$ is not $o_P(1)$. This implies that if  $\sup_{\mu^* \in \mathcal{F}^*_{a,n}}\E([\mu^*(\X) - \mu^*_a(\X)]^2) = o_P(1)$, then each strata $Z$ must also have $\sup_{\mu^* \in \mathcal{F}^*_{a,n}} \E([{\mu}^*(\X) - \mu_a^*(\X)]^2|\Z) = o_P(1)$.
\end{proof}

\begin{lemma}\label{lemma:implies-sup}
	Let $\hat{\mu}_{a,k}$ be a function such that, with probability $1 - \Delta_n$, $\norm{\hat{\mu}_{a,k} - \mu_a}_{L^2(P_0)} \leq \delta_n$, where $\delta_n = o(1)$ and $\Delta_n = o(1)$. Then with probability $1 - \Delta_n$, $\hat{\mu}_{a,k} \in \mathcal{F}_{a,n}$, and 
	\begin{align*}
		\sup_{\mu\in \mathcal{F}_{a,n}} \norm{\hat{\mu}_{a,k} - \mu_a}_{L^2(P_0)}^2 \equiv \sup_{\mu\in \mathcal{F}_{a,n}} \E([{\hat{\mu}_{a,k}(\X) - \mu_a(\X)}_{L^2(P_0)}]^2) = o(1)
	\end{align*}
	where $\mathcal{F}_{a,n} = \{\mu: \norm{\mu - \mu_a}_{L^2(P_0)} \leq \delta_n\}$.
\end{lemma}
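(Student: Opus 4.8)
The plan is to observe that both assertions follow essentially by unpacking the definition of the function class $\mathcal{F}_{a,n}$, which is precisely the closed $L^2(P_0)$-ball of radius $\delta_n$ centered at $\mu_a$, namely $\mathcal{F}_{a,n} = \{\mu : \norm{\mu - \mu_a}_{L^2(P_0)} \leq \delta_n\}$. The only genuinely probabilistic ingredient is the hypothesis inherited from Assumption 2*, which controls the estimator $\hat{\mu}_{a,k}$; the uniform shrinkage of the class itself is a deterministic consequence of its radius.

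First I would establish the membership claim. By hypothesis, on an event of probability $1 - \Delta_n$ we have $\norm{\hat{\mu}_{a,k} - \mu_a}_{L^2(P_0)} \leq \delta_n$. On this event $\hat{\mu}_{a,k}$ satisfies the defining inequality of $\mathcal{F}_{a,n}$, and hence $\hat{\mu}_{a,k} \in \mathcal{F}_{a,n}$. This immediately yields $P(\hat{\mu}_{a,k} \in \mathcal{F}_{a,n}) \geq 1 - \Delta_n$. Note that $\mathcal{F}_{a,n}$ is nonempty regardless of the outcome, since $\mu_a \in \mathcal{F}_{a,n}$ trivially.

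Second I would prove the uniform $o(1)$ bound. For any fixed $\mu \in \mathcal{F}_{a,n}$, the defining inequality gives $\norm{\mu - \mu_a}_{L^2(P_0)}^2 \leq \delta_n^2$, a bound that is deterministic and does not depend on the particular $\mu$. Taking the supremum over $\mu \in \mathcal{F}_{a,n}$ therefore preserves it, so that $\sup_{\mu \in \mathcal{F}_{a,n}} \norm{\mu - \mu_a}_{L^2(P_0)}^2 \leq \delta_n^2$. Since $\delta_n = o(1)$ by hypothesis, we have $\delta_n^2 = o(1)$, which gives the claim.

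The main point to keep straight—rather than a genuine obstacle—is the bookkeeping distinction between the two sources of smallness: the supremum in the conclusion ranges over the deterministic ball $\mathcal{F}_{a,n}$, so its $o(1)$ bound is a purely deterministic consequence of the radius $\delta_n$, whereas the randomness enters only through the membership of $\hat{\mu}_{a,k}$, which is controlled separately by the failure probability $\Delta_n$. Because both $\delta_n$ and $\Delta_n$ tend to zero, the high-probability membership and the uniform shrinkage of $\mathcal{F}_{a,n}$ hold simultaneously, which is exactly the setup needed to invoke Lemmas \ref{lemma:mu-star} and \ref{lemma:strata-contradiction} in the subsequent cross-fitting argument.
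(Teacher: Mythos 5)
Your proof is correct and follows essentially the same route as the paper's: high-probability membership of $\hat{\mu}_{a,k}$ in $\mathcal{F}_{a,n}$ is immediate from the hypothesis, and the uniform bound $\sup_{\mu \in \mathcal{F}_{a,n}} \norm{\mu - \mu_a}_{L^2(P_0)}^2 \leq \delta_n^2 = o(1)$ is a deterministic consequence of the ball's radius. If anything, your write-up is slightly cleaner than the paper's, which invokes compactness unnecessarily where your observation that the bound holds pointwise and is preserved under the supremum suffices.
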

\begin{proof}
	First, by definition, note that since with probability $1 - \Delta_n$, $\norm{\hat{\mu}_{a,k} - \mu_a}_{L^2(P_0)} \leq \delta_n$, with probability $1 - \Delta_n$, $\hat{\mu}_{a,k} \in \mathcal{F}_{a,n}$ since it satisfies the condition of $\mathcal{F}_{a,n}$. Furthermore, for any $\mu$ in $\mathcal{F}_{a,n}$, we know that $\norm{\mu - \mu_a}_{L^2(P_0)} \leq \delta_n$. Therefore, the supremum over $\mu \in \mathcal{F}_{a,n}$ also satisfies $\sup_{\mu\in \mathcal{F}_{a,n}} \norm{\mu - \mu_a}_{L^2(P_0)} \leq \delta_n$ (because it is a compact set). Finally, since $\sup_{\mu \in \mathcal{F}_{a,n}} \norm{\hat{\mu}_{a,k} - \mu_a}_{L^2(P_0)} \leq \delta_n$, $\sup_{\mu \in \mathcal{F}_{a,n}} \norm{\mu_{a,k} - \mu_a}_{L^2(P_0)}^2 \leq \delta^2_n$. Since $\delta_n = o(1)$, $\delta_2^n = o(1)$ as well. This completes the proof.
\end{proof}

\begin{lemma}\label{cf-remainder}
Defining $\psi(\bm W_i; \theta_a, \eta_a) = \frac{1_a(A_i)}{{\pi}_a}\bigg\{\Yai - \mu^*_a(\X_i) \bigg\} + \mu^*_a(\X_i) - \theta_a\label{linear-score}$, we have the following:
\begin{align*}
	\mathcal{I}_{k} := \frac{1}{\sqrt n_k} \sum_{i\in I_k} \left\{\psi(\bm W_i; \theta_a, \hat{\eta}_{a,k}) - \psi(\bm W_i; \theta_a, \eta_a) \right\} = o_P(1).
\end{align*}
\end{lemma}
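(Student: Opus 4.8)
The plan is to adapt the cross-fitting argument of \citet{chernozhukov2017double} to the dependence induced by CAR, by conditioning simultaneously on the auxiliary fold $I_k^c$ and on the full vector $(\mathcal{A}_n,\mathcal{Z}_n)$. First I would make the summand explicit: since $\hat\eta_{a,k}$ differs from $\eta_a$ only by replacing $\mu^*_a$ with $\hat\mu^*_{a,k}$, a direct calculation gives
\begin{align*}
	\psi(\bm W_i;\theta_a,\hat\eta_{a,k}) - \psi(\bm W_i;\theta_a,\eta_a) = \left(1 - \frac{1_a(A_i)}{\pi_a}\right) r_{a,k}(\X_i), \qquad r_{a,k} := \hat\mu^*_{a,k} - \mu^*_a,
\end{align*}
so that $\mathcal{I}_k = n_k^{-1/2}\sum_{i\in I_k}(1 - 1_a(A_i)/\pi_a)\, r_{a,k}(\X_i)$. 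Let $\mathcal{E}_k$ be the $\sigma$-field generated by the auxiliary data $\{\bm W_i: i\in I_k^c\}$ together with $(\mathcal{A}_n,\mathcal{Z}_n)$. Conditional on $\mathcal{E}_k$, the function $r_{a,k}$ is nonrandom, each factor $1 - 1_a(A_i)/\pi_a$ is fixed, and by Assumption \ref{assump: car}(i) the covariates $\{\X_i: i\in I_k\}$ are mutually independent (though not identically distributed), with $\X_i$ depending on $\mathcal{E}_k$ only through $\Z_i$. The strategy is then to show $\E(\mathcal{I}_k\mid\mathcal{E}_k)=o_P(1)$ and $\Var(\mathcal{I}_k\mid\mathcal{E}_k)=o_P(1)$, and to pass from these conditional statements to the unconditional conclusion.

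For the conditional mean, writing $\bar r_{a,k}(z):=\E\{r_{a,k}(\X)\mid\Z=z\}$ and grouping the sum by strata gives
\begin{align*}
	\E(\mathcal{I}_k\mid\mathcal{E}_k) = \sum_{l=1}^{L}\bar r_{a,k}(z_l)\cdot\frac{n_k(z_l)}{\pi_a\sqrt{n_k}}\left(\pi_a - \frac{n_{a,k}(z_l)}{n_k(z_l)}\right),
\end{align*}
where $n_k(z_l)$ and $n_{a,k}(z_l)$ are the fold-$k$ analogues of $n(z_l)$ and $n_a(z_l)$. This display is the CAR manifestation of Neyman orthogonality: each summand is the product of a scaled stratum imbalance and a stratum-level nuisance error. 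The imbalance factor is $O_P(1)$, since $n_k(z_l)/\sqrt{n_k}=O_P(\sqrt{n_k})$ while Assumption \ref{assump: car}(iii), applied at the fold level (valid because $J$ is fixed and the partition is random), yields $\pi_a-n_{a,k}(z_l)/n_k(z_l)=O_P(n_k^{-1/2})$. The nuisance factor is negligible: by Jensen's inequality $|\bar r_{a,k}(z_l)|\le[\E\{r_{a,k}^2(\X)\mid\Z=z_l\}]^{1/2}$, which is $o_P(1)$ by Lemmas \ref{lemma:mu-star} and \ref{lemma:strata-contradiction} together with Assumption 2*. Since $L<\infty$, the whole sum is $o_P(1)$.

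For the conditional variance, conditional independence annihilates the cross terms, so
\begin{align*}
	\Var(\mathcal{I}_k\mid\mathcal{E}_k) = \frac{1}{n_k}\sum_{i\in I_k}\left(1 - \frac{1_a(A_i)}{\pi_a}\right)^2\Var\{r_{a,k}(\X_i)\mid\Z_i\} \le \frac{C}{n_k}\sum_{i\in I_k}\E\{r_{a,k}^2(\X_i)\mid\Z_i\},
\end{align*}
the constant $C$ existing because $\pi_a$ is bounded away from $0$. Taking expectations over $\{\Z_i\}$ and using iterated expectation, the right-hand side has mean $C\,\E\{r_{a,k}^2(\X)\}=C\,\|\hat\mu^*_{a,k}-\mu^*_a\|_{L^2(P_0)}^2$, which is $O(\delta_n^2)$ with probability $1-\Delta_n$ by Assumption 2* and Lemma \ref{lemma:mu-star}; hence $\Var(\mathcal{I}_k\mid\mathcal{E}_k)=O_P(\delta_n^2)=o_P(1)$ by Markov's inequality. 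To finish, I would decompose $\mathcal{I}_k=\{\mathcal{I}_k-\E(\mathcal{I}_k\mid\mathcal{E}_k)\}+\E(\mathcal{I}_k\mid\mathcal{E}_k)$; the conditional Chebyshev inequality gives $P(|\mathcal{I}_k-\E(\mathcal{I}_k\mid\mathcal{E}_k)|>\epsilon\mid\mathcal{E}_k)\le\epsilon^{-2}\Var(\mathcal{I}_k\mid\mathcal{E}_k)$, and taking expectations and invoking dominated convergence (the bound never exceeds $1$) shows this term is $o_P(1)$, which with the conditional-mean bound yields $\mathcal{I}_k=o_P(1)$. I expect the conditional-mean step to be the main obstacle: unlike the i.i.d. case, where orthogonality forces an exactly zero mean, the CAR dependence persists after conditioning on $(\mathcal{A}_n,\mathcal{Z}_n)$ and leaves the stratum-imbalance terms behind, so the argument hinges on pairing the fold-level rate of Assumption \ref{assump: car}(iii) with the stratum-wise $L^2$ convergence of the nuisance estimator.
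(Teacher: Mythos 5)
There is a genuine gap: you have silently dropped the estimated propensity from the nuisance parameter. For the cross-fitted estimator \eqref{eq: cf} that Lemma \ref{cf-remainder} serves, the weight is $1_a(A_i)/\hat{\pi}_{a,k}$ with $\hat{\pi}_{a,k}=n_a^{(k)}/n^{(k)}$, so $\eta_a=(\pi_a,\mu^*_a)$ and $\hat{\eta}_{a,k}=(\hat{\pi}_{a,k},\hat{\mu}^*_{a,k})$: both components are estimated. Your opening identity
\begin{align*}
	\psi(\bm W_i;\theta_a,\hat\eta_{a,k}) - \psi(\bm W_i;\theta_a,\eta_a) = \left(1 - \frac{1_a(A_i)}{\pi_a}\right) r_{a,k}(\X_i)
\end{align*}
is therefore false for the estimator under analysis; the true difference also contains $1_a(A_i)\{\Yai - \hat{\mu}^*_{a,k}(\X_i)\}\left(\hat{\pi}_{a,k}^{-1} - \pi_a^{-1}\right)$, and these terms do not vanish for free. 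They are exactly what forces, in the paper's proof, the second-order Taylor expansion of $r\mapsto \E^*\,\psi\left(\bm W_i;\theta_a,\eta_a + r(\hat\eta_{a,k}-\eta_a)\right)$ (the map is affine in the $\mu^*$ perturbation alone, which is why your one-step argument looked sufficient, but it is genuinely nonlinear in $\pi$ through $1/\pi$), the standing assumption $\epsilon<\hat{\pi}_{a,k}<1-\epsilon$, and an additional centered-sum argument showing
\begin{align*}
	\left[\hat{\pi}_{a,k}-\pi_a\right]\frac{1}{\sqrt{n_k}}\sum_{i\in I_k} 1_a(A_i)\,\E\left(\Yai-\mu^*_a(\X_i)\mid\Z_i\right) = o_P(1),
\end{align*}
which is not trivial under CAR: $\E(\Yai-\mu^*_a(\X_i)\mid\Z_i)$ need not vanish stratum-by-stratum (only its average over $\Z$ is zero, since $\E\{\mu^*_a(\X)\}=\theta_a$), so one must again group by strata and invoke Assumption \ref{assump: car}(iii) before the $O_P(n^{-1/2})$ rate of $\hat{\pi}_{a,k}-\pi_a$ can absorb the sum. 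As written, your proposal proves the lemma only for a known-$\pi_a$ variant of the estimator, not for \eqref{eq: cf}.

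Apart from this omission, your architecture coincides with the paper's: your $\sigma$-field $\mathcal{E}_k$ is the paper's conditioning $\E(\cdot\mid (\bm W_i)_{i\in I_k^c},\mathcal{A}_n,\mathcal{Z}_n)$; your conditional-variance-plus-Chebyshev step is the paper's bound on $\mathcal{I}_{3,k}$ (where conditional independence likewise kills the cross terms, though the paper must additionally control the $\upsilon_1,\upsilon_3$ pieces generated by $\hat{\pi}_{a,k}$); and your conditional-mean computation, pairing the fold-and-stratum imbalance $\sqrt{n_k}\{\pi_a - n_{a,k}(z_l)/n_k(z_l)\}=O_P(1)$ with the stratum-wise $o_P(1)$ nuisance error from Lemmas \ref{lemma:mu-star} and \ref{lemma:strata-contradiction}, is precisely how the paper controls the $f'_{k,i}(0)$ contribution inside $\mathcal{I}_{4,k}$. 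One secondary point: your fold-level use of Assumption \ref{assump: car}(iii) is asserted rather than proved --- a sampling-without-replacement (hypergeometric) fluctuation argument for the random partition would justify it --- though the paper is comparably informal at its corresponding step. The fix is clear and local: keep your two-step conditional argument but carry the propensity perturbation through both steps as the paper does, which costs you the Taylor expansion, the bound on $\hat{\pi}_{a,k}$ away from $0$ and $1$, and the centered-sum display above.
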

\begin{proof}
	Here we show asymptotic negligibility for the difference in Averages of Estimating Function Evaluated at $\hat \eta_{a,k}$ versus $\eta_a$. This roughly follows the proof in \citet{chernozhukov2017double} for Step 3 of Theorem 3.1, but we define a different empirical process notation that allows us to have conditional independence in the setting of covariate adaptive randomization. Finally, in what follows, we are considering the following probabilities \textit{on the event that} 
	 $\hat{\mu}_{a,k} \in \mathcal{F}_{a,n}$, where $\mathcal{F}_{a,n}$ is defined in Lemma \ref{lemma:implies-sup}. Recall that by Assumption \ref{assump:cross-fit} combined with Lemma \ref{lemma:implies-sup}, this occurs with probability $1-\Delta_n$, with $\Delta_n \to 0$. Therefore, asymptotically, the following results hold.
	
	We define the following empirical process term:
\begin{align}
	\mathbb{G}_{n,k}^*[\psi(W; \theta, \eta)] = \frac{1}{\sqrt n_k} \sum_{i\in I_k} \left(\psi(\bm W_i; \theta, \eta) - \E \big\{ \psi(\bm W_i; \theta, \eta)\big|(\bm W_i)_{i\in I_k^c}, \mathcal{A}_n, \mathcal{Z}_n \big\} \right)\label{new-emp}
\end{align}
The second term is similar to how $P_{0}^n$ is defined in the proof of Theorem \ref{theo:asymptotic-normality} (i), but where we restrict the sum to the $k$th partition, and condition on the auxiliary sample $(\bm W_i)_{i\in I_k^c}$, as well as all of the treatment indicators and strata $(\mathcal{A}_n, \mathcal{Z}_n)$. We aim to show that, for any $k$,
\begin{align*}
	\mathcal{I}_{k} := \frac{1}{\sqrt n_k} \sum_{i\in I_k} \left\{\psi(\bm W_i; \theta_a, \hat{\eta}_{a,k}) - \psi(\bm W_i; \theta_a, \eta_a) \right\} = o_P(1).
\end{align*}
Term $(I)$ can be broken up into two parts, using the notation in \eqref{new-emp}. Define:
\begin{align*}
	\mathcal{I}_{3,k} &:= \mathbb{G}^*_{n,k}[\psi(W; \theta_a, \hat\eta_{a,k})] - \mathbb{G}^*_{n,k}[\psi(W; \theta_a, \eta_a)] \\
	\mathcal{I}_{4,k} &:= \frac{1}{\sqrt n_k}\sum_{i\in I_k} \left(\E \big\{ \psi(\bm W_i; \theta_a, \hat\eta_{a,k}) - \psi(\bm W_i; \theta_a, \eta_a)\big|(\bm W_i)_{i\in I_k^c}, \mathcal{A}_n, \mathcal{Z}_n \big\} \right) \\
	&= \frac{1}{\sqrt n_k}\sum_{i\in I_k} \left(\E \big\{ \psi(\bm W_i; \theta_a, \hat\eta_{a,k})\big|(\bm W_i)_{i\in I_k^c}, \mathcal{A}_n, \mathcal{Z}_n \big\} - \E \big\{ \psi(\bm W_i; \theta_a, \eta_a)\big|\mathcal{A}_n, \mathcal{Z}_n \big\} \right).
\end{align*}
Then it suffices to show that $\mathcal{I}_{3,k} = o_P(1)$ and $\mathcal{I}_{4,k} = o_P(1)$ because 
	$|\mathcal{I}_{k}| \leq |\mathcal{I}_{3,k}| + |\mathcal{I}_{4,k}|$.
Starting with $\mathcal{I}_{3,k}$, define $\phi(w) := \psi(w; \theta_a, \hat\eta_{a,k}) - \psi(w; \theta_a,\eta_a)$. Then, defining the short-hand notation $\E^*(\cdot) = \E(\cdot | (\bm W_i)_{i\in I_k}, \mathcal{A}_n, \mathcal{Z}_n )$, we have
\begin{align*}
	\E^*\{\mathcal{I}^2_{3,k}\} &= \E^*\{|\mathbb G_{nk}^* [\phi(\W)]|^2\} \\
	&= \frac{1}{n_k}\E^* \left\{\bigg(\sum_{i\in I_k} \left(\phi(\bm W_i) - \E^*(\phi(\bm W_i))\right)\bigg)^2 \right\}.
\end{align*}
Notice that if we expand the square, the cross-terms cancel out through conditioning on $(\mathcal{A}_n,\mathcal{Z}_n)$ in $\E^*$ so that elements $(\bm W_i, W_j)$ are independent. The cross-terms are
$\left[\E^*(\phi(\bm W_i) - \E^*(\phi(\bm W_i))\right] \left[\E^*(\phi(W_j) - \E^*(\phi(W_j))\right]$ which are zero. Therefore, we can continue the above as:
\begin{align*}
	\E^*\{\mathcal{I}^2_{3,k}\} &= \frac{1}{n_k} \E^* \left\{\sum_{i\in I_k} \left(\phi(\bm W_i) - \E^*(\phi(\bm W_i))\right)^2 \right\} \\
	&\leq \frac{1}{n_k} \sum_{i\in I_k} \E^*(\phi(\bm W_i)^2) \\
	&\leq \frac{1}{n_k} \sum_{i\in I_k} \sup_{\eta \in \mathcal{T}_N} \E(\phi(\bm W_i)^2 | \mathcal{A}_n,\mathcal{Z}_n).
\end{align*}
We would like to show that the summation above is $o_P(1)$. We do this by breaking up $\phi(\bm W_i)^2$. Note first that
\begin{align*}
		[\underbrace{\psi(\W; \theta_a, \eta) - \psi(\W; \theta_a, \eta_a)}_{\upsilon(\eta, \eta_a)}]^2 &\leq \underbrace{[1_a(A)\Ya[\pi^{-1} - \pi_a^{-1}]]^2}_{\upsilon_1(\eta, \eta_a)} + \underbrace{[\mu^*(\X) - \mu^*_a(\X)]^2}_{\upsilon_2(\eta, \eta_a)} \\
		&\quad + \underbrace{[1_a(A)[\mu^*(\X)\pi^{-1} - \mu^*_a(\X)\pi^{-1}_a]]^2}_{\upsilon_3(\eta, \eta_a)}.
	\end{align*}
We will bound each of the $\upsilon$ terms separately. Using Assumption \ref{assump:cross-fit} (b), we know that $|\mu^*(\X) \pi^{-1} - \mu_a^*(\X) \pi_a^{-1}| < \epsilon^{-2}|\mu^*(\X)\pi_a - \mu_a^*(\X)\pi| = \epsilon^{-2}|\pi_a (\mu^*(\X) - \mu_a^*(\X)) + \mu_a^*(\pi - \pi_a)|$. By the generalized mean inequality, $\exists C > 0$ such that
\begin{align*}
	(\epsilon^{-2}|\pi_a (\mu^*(\X) - \mu_a^*(\X)) + \mu_a(\pi - \pi_a)|)^2 &\leq \epsilon^{-4} C \left(\pi_a^2 (\mu^*(\X) - \mu_a^*(\X))^2 + \mu_a^2(\X) (\pi - \pi_a)^2\right) \\
	&\leq \epsilon^{-4} C \left(\mu^*(\X) - \mu_a^*(\X))^2 + \mu_a^2(\X) (\pi - \pi_a)^2\right).
\end{align*}
Turning to each of the $\upsilon$ terms, and recalling that we use $\hat{\pi}_a$ (in the whole sample) as our estimator for $\pi_a$ (and that it can be brought out of the conditional expectation because it is deterministic conditional on the whole set of indicators $(\mathcal{A}_n)$), we have:
\begin{align*}
	(1) \quad \frac{1}{n_k} \sum_{i\in I_k} \sup_{\eta \in \mathcal{T}_n} \E(\upsilon_1|\mathcal{A}_n,\mathcal{Z}_n) &\leq [\hat{\pi}^{-1}_{a,k} - \pi_a^{-1}]^2 \left[\frac{1}{n_k} \sum_{i\in I_k} \E([\Yai]^2|\Z_i)\right] \\
	(2) \quad \frac{1}{n_k} \sum_{i\in I_k} \sup_{\eta \in \mathcal{T}_n}  \E(\upsilon_2|\mathcal{A}_n,\mathcal{Z}_n) &\leq \frac{1}{n_k} \sum_{i\in I_k} \sup_{\mu^* \in \mathcal{F}^*_{a,n}} \E([\mu^*(\X_i) - \mu_a^*(\X_i)]^2 | \Z_i) \\
	(3) \quad \frac{1}{n_k} \sum_{i\in I_k} \sup_{\eta \in \mathcal{T}_n} \E(\upsilon_3|\mathcal{A}_n,\mathcal{Z}_n) &\leq \frac{1}{n_k} \sum_{i\in I_k} \sup_{\mu^* \in \mathcal{F}^*_{a,n}} \epsilon^{-4} C \E[\mu^*(\X_i) - \mu_a^*(\X_i))^2|\Z_i] \\
	&\quad\quad + (\hat\pi_{a,k} - \pi_a)^2\frac{1}{n_k} \sum_{i\in I_k} \epsilon^{-4} C\E[\mu_a^2(X_i)|\Z_i]
\end{align*}
For (1) to be $o_P(1)$, it suffices that $\hat{\pi}_{a,k}^{-1} \to \pi_a$ in probability (Assumption \ref{assump: car}), and for all strata $Z$, $\E([\Ya]^2|\Z) < \infty$. This must hold since we assume that $Y$ has a finite second-order moment, unconditionally. For (2) to be $o_P(1)$, it suffices that for each strata $Z$, $\sup_{\mu^* \in \mathcal{F}^*_{a,n}} \E([\mu^*(\X_i) - \mu_a^*(\X_i)]^2 | \Z_i) = o_P(1)$. This holds by Assumption 4 (a) combined with Lemmas \ref{lemma:mu-star} and \ref{lemma:strata-contradiction}. For (3) to be $o_P(1)$ the preceding two conditions must hold, with the additional condition that $\E(\mu_a^*(\X)^2 | \Z) < \infty$ for each strata $Z$ (Assumption \ref{assump:cross-fit} specifies this unconditionally for $\mu_a$, which implies that $\E(\mu_a^*(\X)^2 | \Z) < \infty$ as well; it must also hold conditionally by strata, since each strata has positive probability). Therefore, $\E^*\{\mathcal{I}^2_{3,k}\} = o_P(1)$. By Lemma 6.1 in \citet{chernozhukov2017double}, this implies that $\mathcal{I}_{3,k} = o_P(1)$.

For $\mathcal{I}_{4,k}$, we define the function for $r \in [0, 1]$: $$f_{i,k}(r) = \left(\E \big\{ \psi(\bm W_i; \theta_a, \eta_a + r(\hat\eta_{a,k} - \eta_a))\big|(\bm W_i)_{i\in I_k^c}, \mathcal{A}_n, \mathcal{Z}_n \big\} - \E \big\{ \psi(\bm W_i; \theta_a, \eta_a)\big|\mathcal{A}_n, \mathcal{Z}_n \big\} \right)$$ and define $\mathcal I_{4,k}(r) = \frac{1}{\sqrt n_k} \sum_{i\in I_k} f_{k,i}(r)$. Then $\mathcal{I}_{4,k} = \mathcal{I}_{4,k}(1)$. We take a Taylor expansion of $\mathcal I_{4,k}(r)$ around $r = 0$ by first expanding $f_{k,i}(r)$ as follows, where we note that we can remove that conditioning on $(\bm W_i)_{i\in I_k^c}$ whenever we do not have a $\hat\eta_{a,k}$ since the observations $(\bm W_i)_{i\in I_k}$ are independent through conditioning on $(\mathcal{A}_n,\mathcal{Z}_n)$:
\begin{align*}
	f_{k,i}(0) &= \E \big\{ \psi(\bm W_i; \theta_a, \eta_a)\big|(\bm W_i)_{i\in I_k^c}, \mathcal{A}_n, \mathcal{Z}_n \big\} - \E \big\{ \psi(\bm W_i; \theta_a, \eta_a)\big|\mathcal{A}_n, \mathcal{Z}_n \big\} = 0 \\
	f_{k,i}'(0) &= \frac{d}{dr} \E \big\{ \psi(\bm W_i; \theta_a, \eta_a + r(\hat\eta_{a,k} - \eta_a))\big|(\bm W_i)_{i\in I_k^c}, \mathcal{A}_n, \mathcal{Z}_n \big\}\Big|_{r = 0} \\
	f_{k,i}''(\tilde{r}) &= \frac{d^2}{dr^2} \E \big\{ \psi(\bm W_i; \theta_a, \eta_a + r(\hat\eta_{a,k} - \eta_a))\big|(\bm W_i)_{i\in I_k^c}, \mathcal{A}_n, \mathcal{Z}_n \big\}\Big|_{r = \tilde{r}}.
\end{align*}
We have that
\begin{align*}
	\psi(\bm W_i; \theta_a, \eta_a + r (\hat\eta_{a,k} - \eta_a)) &= \frac{1_a(A_i)}{\pi_a + r(\hat\pi_{a,k} - \pi_a)} \{\Yai - \mu^*_a(\X_i) - r(\hat\mu_{a,k}^*(\X_i) - \mu_a^*(\X_i))\} \\
	&\quad\quad + \mu^*_a(\X_i) + r(\hat\mu_{a,k}^*(\X_i) - \mu_a^*(\X_i)) - \theta_a.
\end{align*}
Taking the expectation, then two derivatives with respect to $r$, we have
\begin{align*}
	\frac{d}{dr} \E^* \big\{ \psi(\bm W_i; \theta_a, \eta_a + r(\hat\eta_{a,k} - \eta_a))\big\} &= -\frac{1_a(A_i) \E^*(\hat\mu_{a,k}^*(\X_i) - \mu_a^*(\X_i))}{\pi_a + r(\hat{\pi}_{a,k} - \pi_a)} \\
	&\quad\quad - \frac{1_a(A_i)\E^*[\Yai - \mu^*_a(\X_i) - r(\hat\mu_{a,k}^*(\X_i) - \mu_a^*(\X_i))][\hat{\pi}_{a,k} - \pi_a]}{[\pi_a + r(\hat\pi_{a,k} - \pi_a)]^2} \\
	&\quad\quad + \E^*(\hat\mu_{a,k}^*(\X_i) - \mu_a^*(\X_i)) \\
	\frac{d^2}{dr^2} \E^* \big\{ \psi(\bm W_i; \theta_a, \eta_a + r(\hat\eta_{a,k} - \eta_a))\big\} &= 2\frac{(\hat{\pi}_{a,k} - \pi_a)1_a(A_i)\E^*(\hat\mu_{a,k}^*(\X_i) - \mu_a^*(\X_i))}{[\pi_a + r(\hat{\pi} - \pi_a)]^2} \\
	&\quad\quad + \frac{1_a(A_i) \E^*[\Yai - \mu^*_a(\X_i) - r(\hat\mu^*_{a,k}(X_i) - \mu^*_a(\X_i))][\hat\pi_{a,k} - \pi_a]^2}{[\pi_a + r(\hat\pi_{a,k} - \pi_a)]^3}.
\end{align*}
Evaluating $f'$ at $r = 0$, we have
\begin{align*}
	f_{k,i}'(0) &= \E^*(\hat\mu_{a,k}^*(\X_i) - \mu_a^*(\X_i)) \left(1 - \frac{1_a(A_i)}{\pi_a} \right) - \frac{1_a(A_i)}{\pi_a^2} [\hat{\pi}_{a,k} - \pi_a] \E(\Yai - \mu_a^*(\X_i)|\Z_i).
\end{align*}
We know that $\frac{1}{\sqrt n_k} \sum_{i\in I_k} f_{k,i}(r) = o_P(1)$, using both Assumption \ref{assump: car} and \ref{assump:cross-fit}. Then,
\begin{align*}
	\frac{1}{\sqrt n_k} \sum_{i\in I_k} f'_{k,i}(0) &= \pi_a \sqrt{n_k} \sum_{l=1}^{L} \frac{1}{n_{k}(l)} \hat{p}_{k,l} \left(\sup_{\mu^* \in \mathcal{F}^*_{a,n}} \E(\mu^*(\X_i) - \mu_a^*(\X_i)|\Z_i=l)\right) \sum_{i\in I_{k,l}} (\pi_a - 1_a(A_i)) \\
	&\quad\quad + \frac{1}{\pi_a^2} [\hat{\pi}_{a,k} - \pi_a]\frac{1}{\sqrt n_k}\sum_{i\in I_k} 1_a(A_i) \E(\Yai - \mu^*_a(\X_i)|\Z_i) \\
	&= \pi_a \sum_{l=1}^{L} \sqrt{\hat{p}_{k,l}} \left(\sup_{\mu^* \in \mathcal{F}^*_{a,n}} \E(\mu^*(\X_i) - \mu_a^*(\X_i)|\Z_i=l)\right) \frac{1}{\sqrt{n_{k,l}}} \sum_{i\in I_{k,l}} (\pi_a - 1_a(A_i)) \\
	&\quad\quad + o_P(1)
\end{align*}
where $\hat{p}_{k,l} = n_k(l) / n_k$, i.e., the fraction within strata $l$ out of the sample $k$, and $\mathcal I_{k,l}$ are the indexes within $\mathcal{I}_k$ such that they are also in strata $l$.
Within each strata $l$, by Assumption \ref{assump: car} (see equation \eqref{strata-ai}), we have that $\frac{1}{\sqrt{n_{k,l}}} \sum_{i\in I_{k,l}} (\pi_a - 1_a(A_i)) = O_P(1)$. Furthermore, by Assumption 2*, $\left(\sup_{\mu^* \in \mathcal{F}^*_{a,n}} \E(\mu^*(\X_i) - \mu_a^*(\X_i)|\Z_i=l)\right) = o_P(1)$. Therefore, since we have a finite number of strata, $\frac{1}{\sqrt n_k} \sum_{i\in I_k} f_{k,i}(0) = o_P(1) O_P(1) = o_P(1)$.
For the second derivative, we have (again applying the prediction unbiasedness, and recalling that $\tilde{r} \leq 1$):
\begin{align*}
	\left|\frac{1}{\sqrt n_k} \sum_{i\in I_k} f''_{k,i}(\tilde{r})\right| &\leq 2\left |\frac{(\hat{\pi}_{a,k} - \pi_a) \frac{1}{\sqrt n_k} \sum_{i\in I_k} \sup_{\mu^* \in \mathcal{F}^*_{a,n}} \E(\mu^*(\X_i) - \mu_a^*(\X_i)|\Z_i)}{[\pi_a + \tilde r(\hat{\pi} - \pi_a)]^2}\right| \\
	&\quad\quad + \left|[\hat{\pi}_{a,k} - \pi_a]^2\frac{\frac{1}{\sqrt n_k} \sum_{i\in I_k} \sup_{\mu^* \in \mathcal{F}^*_{a,n}} \E(\mu^*(\X_i) - \mu^*_a(\X_i)|\Z_i)}{[\pi_a + \tilde r(\hat\pi_{a,k} - \pi_a)]^3}\right| \\
	&\quad\quad + \left|[\hat\pi_{a,k} - \pi_a]^2\frac{\frac{1}{\sqrt n_k} \sum_{i\in I_k}1_a(A_i)\E(\Yai - \mu^*_a(\X_i)|\Z_i)}{[\pi_a + \tilde r(\hat{\pi}_{a,k} - \pi_a)]^3}\right|.
\end{align*}
To show that $\frac{1}{\sqrt n_k} \sum_{i\in I_k} f''_{k,i}(\tilde{r}) = o_P(1)$, we first show that $\pi_a + \tilde{r}(\hat{\pi}_{a,k} - \pi_a)$ is bounded below by a constant. Notice that $\pi_a + \tilde{r}(\hat{\pi}_{a,k} - \pi_a) = (1-\tilde{r}) \pi_a + (\tilde{r}) \hat{\pi}_{a,k}$ is a weighed average of $\pi_a$ and $\hat{\pi}_{a,k}$. Since both $\pi_a$ and $\hat{\pi}_{a,k}$ must be larger than $\epsilon$, their average must also be larger than $\epsilon$. Therefore, recalling that $\hat{\pi}_{a,k} - \pi_a = O_P(1/\sqrt{n})$ we have
\begin{align*}
	\left|\frac{1}{\sqrt n_k} \sum_{i\in I_k} f''_{k,i}(\tilde{r})\right| &\leq 2 O_P(1/\sqrt{n}) \epsilon^2 \left| \frac{1}{\sqrt n_k} \sum_{i\in I_k} \sup_{\mu^* \in \mathcal{F}^*_{a,n}} \E(\mu^*(\X_i) - \mu_a^*(\X_i)|\Z_i)\right| \\
	&\quad\quad + O_P(1/n) \epsilon^3 \left| \frac{1}{\sqrt n_k} \sum_{i\in I_k} \sup_{\mu^* \in \mathcal{F}^*_{a,n}} \E(\mu^*(\X_i) - \mu_a^*(\X_i)|\Z_i)\right|  \\
	&\quad\quad + O_P(1/n) \epsilon^3 \left|\frac{1}{\sqrt n_k} \sum_{i\in I_k} 1_a(A_i) \E(\Yai - \mu^*_a(\X_i)|\Z_i) \right| \\
	&\leq 2O_P(1/\sqrt{n}) \epsilon^2 \left| \frac{1}{\sqrt n_k} \sum_{i\in I_k} \sup_{\mu^* \in \mathcal{F}^*_{a,n}} \E(\mu^*(\X_i) - \mu_a^*(\X_i)|\Z_i)\right| + O_P(1/n) \epsilon^3 O_P(1/\sqrt{n_k})
\end{align*} 
where the last term comes from either (B) or (U) holding, as before. Furthermore, recalling Assumption 2* as we did for $f'$, we have: 
\begin{align*}
	 \frac{1}{\sqrt n_k} \sum_{i\in I_k} \sup_{\mu^* \in \mathcal{F}^*_{a,n}} \E(\mu^*(\X_i) - \mu_a^*(\X_i)|\Z_i) &= \frac{1}{\sqrt n_k} \sum_{l=1}^{L} \sum_{i\in I_{k,l}} \sup_{\mu^* \in \mathcal{F}^*_{a,n}} \E(\mu^*(\X_i) - \mu_a^*(\X_i)|\Z_i) \\
	 &= \sum_{l=1}^{L}  \sqrt{\hat p_{k,l}} \sqrt{n_{k,l}} \sup_{\mu^* \in \mathcal{F}^*_{a,n}} \E(\mu^*(\X_i) - \mu_a^*(\X_i)|\Z_i) \\
	 &\leq \sum_{l=1}^{L} \sqrt{n_{k,l}} \sup_{\mu^* \in \mathcal{F}^*_{a,n}} \E(\mu^*(\X_i) - \mu_a^*(\X_i)|\Z_i) \\
	 &= \sum_{l=1}^{L} \sqrt{n_{k,l}} \cdot  o_P(1) \\
	 &= o_P(\sqrt{n_{k,l}}).
\end{align*}
Therefore, $\left|\frac{1}{\sqrt n_k} \sum_{i\in I_k} f''_{k,i}(\tilde{r})\right| = O_P(1/\sqrt{n})o_P(\sqrt{n_{k,l}}) + o_P(1) = o_P(1)$.
\end{proof}

 \subsection{Proof of Theorem \ref{theo:ml}}

    We now show that the influence function of the cross-fitted estimator under Assumptions \ref{assump: car} and 2* is the same as equation \eqref{influence-function}. As a result, all of our other theoretical results hold.

Let $\W = (A, X, \Ya)$. Our estimating function is $\psi$:
\begin{align}
\psi(\bm W_i; \theta_a, \eta_a) &= \frac{1_a(A_i)}{{\pi}_a}\bigg\{\Yai - \mu^*_a(\X_i) \bigg\} + \mu^*_a(\X_i) - \theta_a\label{linear-score}
\end{align}
where the nuisance parameter $\eta_a = (\pi_a, \mu^*_a)$, and $\mu^*_a(\X) = \mu_a(\X) - P_0 \mu_a + \theta_a$. So there is an implicit dependence between the nuisance parameter $\mu_a$ and $\theta_a$. The result of the Theorem 3.1 in \citet{chernozhukov2017double} is that $\sqrt{n}(\hat{\theta}_{CF,a} - \theta_a) = \frac{1}{\sqrt{n}} \sum_{i=1}^{n} \psi(\bm W_i; \theta_a, \eta_a) + o_P(1)$, which is the same influence function as in Theorem \ref{theo:asymptotic-normality} (i). We now follow the proof outline of Theorem 3.1 in \citet{chernozhukov2017double}, built off of our Lemma \ref{cf-remainder} that has modifications to the Step 3 of their theorem that are needed to allow for covariate-adaptive randomization.

First, we have that $\E(\psi(\W; \theta_a, \eta_a)) = 0$. Next, note that the score $\psi$ is linear in the sense that it can be written as $\psi(\W; \theta, \eta) = \psi^a(\W; \eta) \theta + \psi^b(\W; \eta)$,
	where $\psi^a(\W; \eta) = -1$, and $\psi^b(\W; \eta) = \frac{1_a(A)}{{\pi}}\left[\Ya - \mu^*(\X) \right] + \mu^*(\X)$ with $\eta = (\pi, \mu^*)$. Therefore, we can write, for any partition $k$,
	\begin{align*}
		\sqrt{n_k} \left(\check{\theta}_{a,k} - \theta_a\right) &= \sqrt{n_k} \left(\E_{n,k} [\psi^b(\W; \hat{\eta}_{a,k})] - \theta_a\right) \\
		&= \sqrt{n_k} \left(\E_{n,k}[\psi(\W; \theta_a, \hat{\eta}_{a,k}] \right) \\
		&= \frac{1}{\sqrt{n_k}} \sum_{i \in I_k}\psi(\bm W_i; \theta_a, \eta_a) + \mathcal{I}_k
	\end{align*}
	where $\mathcal{I}_k$ is from Lemma \ref{cf-remainder}. Now we can take the average across all $k = 1, ..., K$ partitions to get the cross-fitted estimator on the LHS, and the desired influence function on the RHS:
	\begin{align*}
	\sqrt{n} (\hat{\theta}_{a} - \theta_a) &= \sqrt{n} \left(\frac{1}{K}  \sum_{k=1}^{K} \check{\theta}_{a,k} - \theta_a \right) \\
	&= \sum_{k=1}^{K} \frac{\sqrt{n_k}}{\sqrt{K}} \left(\check{\theta}_{a,k} - \theta_a \right) \\
	&= \sum_{k=1}^{K} \sum_{i\in I_k} \frac{1}{\sqrt{n}} \psi(\bm W_i; \theta_a, \eta_a) + \frac{1}{\sqrt{K}} \sum_{k=1}^{K} \mathcal I_k \\
	&= \frac{1}{\sqrt n} \sum_{i=1}^{n} \psi(\bm W_i; \theta_a, \eta_a) + o_P(1)
	\end{align*}
	since each $\mathcal{I}_k = o_P(1)$ (Lemma \ref{cf-remainder}), their sum (over a finite number of partitions) is also $o_P(1)$.

\begin{remark}\label{remark:cf-g-pia}
	Theorem \ref{theo:ml} holds replacing $\hat{\pi}_{a,k}$ with $\hat{\pi}_a$ in $\hat{\theta}_{\rm CF, a}$ in \eqref{eq: cf}. That is,
	\begin{align}
		\frac{1}{K} \sum_{k=1}^{K} \frac{1}{n_k}\sum_{i\in I_k} \left[\frac{1_a(A_i)}{\hat{\pi}_{a
}}\bigg\{\Yai - \hat{\mu}_{a,k}(\X_i) \bigg\} + \hat{\mu}_{a,k}(\X_i) \right]\label{eq: cf-2}
	\end{align}
	has the same influence function given in \eqref{influence-function}.
	This is because in each step of the proof of Theorem \ref{theo:ml}, we could have equivalently used $\hat{\pi}_a$ without changing the result, asymptotically. Specifically, since we condition on the full $(\mathcal{A}_n,\mathcal{Z}_n)$ in the proof, we could have just as easily removed $\hat{\pi}_{a}$ from the conditional expectation as we did $\hat{\pi}_{a,k}$. Also, $O_p(n_k^{-1/2}) \equiv O_p(n^{-1/2})$, since $n_k / n$ is constant as $n$ grows.
\end{remark}

\begin{lemma}[Equivalence of Cross-Fitted AIPW Estimators in Simulation]\label{lemma: cf-aipw-equivalence}
	Let $\check{\mu}_{a}(\X_i) = \sum_{k=1}^{K} I\{i \in I_k\} \hat{\mu}_{a,k}(\X_i)$ be the estimated $\mu_a(\X_i)$ from AIPW using cross-fitting with $K$ partitions. Then if the partitions are equally sized, i.e., $n_k = n_j$ for all $k = 1, ..., K$, then the AIPW estimator in \eqref{eq: aipw} is equivalent to the cross-fitted AIPW estimator in \eqref{eq: cf-2}.
\end{lemma}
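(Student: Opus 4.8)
The plan is to prove this as an exact finite-sample identity (not merely an asymptotic equivalence) by a direct algebraic manipulation, using the equal-partition-size hypothesis solely to collapse the cross-fitting normalization. First I would write out the $a$th component of the cross-fitted estimator in \eqref{eq: cf-2}, namely
$$\frac{1}{K}\sum_{k=1}^{K}\frac{1}{n_k}\sum_{i\in I_k}\left[\frac{1_a(A_i)}{\hat\pi_a}\{y_{a,i}-\hat\mu_{a,k}(\X_i)\}+\hat\mu_{a,k}(\X_i)\right],$$
and observe that when $n_k=n/K$ for every $k$, the prefactor satisfies $\frac{1}{K}\cdot\frac{1}{n_k}=\frac{1}{n}$. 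This removes the block-dependent weights and leaves a single factor $1/n$ in front of the entire double sum.

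Next I would collapse the double sum $\sum_{k=1}^{K}\sum_{i\in I_k}$ into a single sum $\sumn$. Since $\{I_k\}$ partitions $\{1,\dots,n\}$, each index $i$ lies in exactly one block, and for that block $\hat\mu_{a,k}(\X_i)$ is by definition precisely $\check\mu_a(\X_i)=\sum_{k=1}^{K}I\{i\in I_k\}\hat\mu_{a,k}(\X_i)$. Hence the estimator becomes $\frac{1}{n}\sumn\big[\frac{1_a(A_i)}{\hat\pi_a}\{y_{a,i}-\check\mu_a(\X_i)\}+\check\mu_a(\X_i)\big]$. I would then split this into the inverse-propensity piece and the augmentation piece and simplify each: because \eqref{eq: cf-2} uses the \emph{pooled} proportion $\hat\pi_a=n_a/n$, the factor $\frac{1}{n}\cdot\frac{1}{\hat\pi_a}$ equals $\frac{1}{n_a}$, so the first piece reduces to $\frac{1}{n_a}\suma\{y_{a,i}-\check\mu_a(\X_i)\}=\bar y_a-\frac{1}{n_a}\suma\check\mu_a(\X_i)$, while the second piece is $\frac{1}{n}\sumn\check\mu_a(\X_i)$. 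Adding the two recovers exactly the $a$th component of $\aipw$ in \eqref{eq: aipw} with $\hat\mu_a$ replaced by $\check\mu_a$, which is the claimed identity.

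The argument is entirely elementary, so there is no genuine analytic obstacle; the only point requiring care is bookkeeping of the two normalizations. Specifically, the collapse $\frac{1}{Kn_k}=\frac{1}{n}$ relies on the equal-block-size condition $n_k=n/K$, and the reduction $\frac{1}{n\hat\pi_a}=\frac{1}{n_a}$ relies on \eqref{eq: cf-2} using the pooled $\hat\pi_a$ rather than the block-specific $\hat\pi_{a,k}$ of \eqref{eq: cf} (this is precisely why the lemma invokes the variant in Remark \ref{remark:cf-g-pia}). With unequal block sizes, or with block-specific propensities, the two estimators would in general differ in finite samples, so both hypotheses are genuinely used. I would close by noting that this exact equivalence is what justifies reporting a single ``cross-fitted AIPW'' number in the simulation study, regardless of which of the two equivalent formulas is implemented.
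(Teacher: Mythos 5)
Your proof is correct and takes essentially the same route as the paper's: both establish an exact finite-sample identity by using $n_k = n/K$ to collapse the prefactor $\frac{1}{K}\cdot\frac{1}{n_k}$ into $\frac{1}{n}$ and merging the double sum over partitions via the definition of $\check{\mu}_a$, with yours simply running the algebra from \eqref{eq: cf-2} to \eqref{eq: aipw} rather than the reverse. Your explicit observation that the pooled $\hat\pi_a = n_a/n$ yields $\frac{1}{n\hat\pi_a} = \frac{1}{n_a}$, so the weighted form coincides with the $a$th component of $\aipw$, spells out a step the paper leaves implicit, and your remark that block-specific $\hat\pi_{a,k}$ or unequal block sizes would break the identity correctly identifies why both hypotheses are needed.
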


\begin{proof}
	The AIPW estimator using $\check{\mu}_{a}(\X_i) = \sum_{k=1}^{K} I\{i \in I_k\} \hat{\mu}_{a,k}(\X_i)$ is
	\begin{align}\label{eq:aipw-pack}
		\frac1n \sum_{i=1}^n \left[ \frac{1_a(A_i)}{\hat \pi_{a}}  \left\{  \Yai  - \check \mu_a(\X_i) \right\} + \check \mu_a (\X_i)\right].
	\end{align}
	With the additional assumption that $n_k = n_j$ for all $k = 1, ..., K$, we have
	\begin{align*}
		\eqref{eq:aipw-pack} &= \frac1n \sum_{i=1}^n \left[ \frac{1_a(A_i)}{\hat \pi_{a}}  \left\{  \Yai  - \sum_{k=1}^{K} I\{i \in I_k\} \hat{\mu}_{a,k}(\X_i) \right\} + \sum_{k=1}^{K} I\{i \in I_k\} \hat{\mu}_{a,k}(\X_i)\right] \\
		&= \frac{1}{K} \frac{1}{n_k} \sum_{k=1}^{K} \sum_{i \in I_k}\left[ \frac{1_a(A_i)}{\hat \pi_{a}}  \left\{  \Yai  - \sum_{k=1}^{K} I\{i \in I_k\}\hat{\mu}_{a,k}(\X_i) \right\} + \sum_{k=1}^{K} I\{i \in I_k\}\hat{\mu}_{a,k}(\X_i)\right] \\
		&= \frac{1}{K} \sum_{k=1}^{K} \frac{1}{n_k}  \sum_{i \in I_k}\left[ \frac{1_a(A_i)}{\hat \pi_{a}}  \left\{  \Yai  - \hat{\mu}_{a,k}(\X_i) \right\} + \hat{\mu}_{a,k}(\X_i)\right] = \eqref{eq: cf-2}.
	\end{align*}
\end{proof}

\newpage
\section{Guaranteed Efficiency Gain Under (B) + (G1): Proof of Theorem \ref{corollary:efficiency}}

We will show that under condition (B) we have a guaranteed efficiency gain if the following condition (G1) holds:
\begin{align*}
	\begin{array}{c}
{\rm diag} \left\{ \pi_a^{-1} \cov \{ y_a-\mu_a(\X), \mu_a(\X)\}, a=1,...,k \right\} - \cov \{ \Y - \mu (\X), \mu (\X) \} \\
= E\left[ \{  R_Y (\Z) -  R_X(\Z)\} \{  \Omega_{\rm SR} - \Omega(\Z) \}   R_X(\Z)  \right]. \end{array}
\end{align*}
where $R_Y(\Z) = \diag\{ \pi_a^{-1} \E(\Ya - \theta_a | \Z)\}$ and $R_X(\Z) = \diag\{\pi_a^{-1} \E(\mu_a(\X) - \E(\mu_a(\X))|\Z) \}$, so that $R(\Z) = R_Y(\Z) - R_X(\Z)$.
In order to achieve a guaranteed efficiency gain over ANOVA, a simple sample mean, we would need the difference between the covariance matrix of ANOVA and $\hat{\theta}_{AIPW}$ to be positive definite. In what follows, we show that a sufficient condition for this difference being positive definite is that (G1) holds. Let $\bm \Sigma^{\bar{Y}}_{(B)}$ be the variance-covariance matrix for the sample mean $\bar{Y}$ under $(B)$, given by the result in Theorem \ref{theo:asymptotic-normality} with $\tilde{\mu} = \bm 0$ and $R(\Z) = R_Y(\Z)$:
\begin{align*}
	\bm \Sigma^{\bar{Y}}_{(B)} &= \diag\{\pi_a^{-1} \Var(\Ya)\} - \E(R_Y(\Z) \{\Omega_{SR} - \Omega\} R_Y(\Z)).
\end{align*}
Then,
\begin{align*}
	\bm \Sigma^{\bar{Y}}_{(B)} - \bm \Sigma_{(B)} &= \underbrace{\diag\{\pi_a^{-1} \Var(\Ya)\} - \diag\{\pi_a^{-a} \Var\{ \Ya - \mu_a(X)\}}_{(I)} \\
  &\quad\quad \underbrace{- \Cov\{Y, \mu(X)\} - \Cov\{\mu(X), Y\} + \Var\{\mu(X)\}}_{(II)} \\
  &\quad\quad \underbrace{- \E(R_Y(\Z) \{\Omega_{SR} - \Omega(\Z) \} R_Y(\Z)) + E\left[ \{  R_Y (\Z) - R_X(\Z)\} \{  \Omega_{\rm SR} - \Omega(\Z) \} \{  R_Y (\Z) - R_X(\Z)\} \right]}_{(III)}. \\
  &= \diag\{\pi_a^{-1} \Var(\mu_a(X))\} + 2\diag\{\pi_a^{-a} \Cov\{\mu_a(X), \Ya - \mu_a(X)\} \\
  &\quad\quad  - \Cov\{Y - \mu(X), \mu(X)\} - \Cov\{\mu(X), Y - \mu(X)\} - \Var\{\mu(X)\} \\
  &\quad\quad - \E(R_Y(\Z) \{\Omega_{SR} - \Omega\} R_Y(\Z)) + E\left[ \{  R_Y (\Z) - R_X(\Z)\} \{  \Omega_{\rm SR} - \Omega(\Z) \} \{  R_Y (\Z) - R_X(\Z)\} \right]
\end{align*}
We can rewrite each of the above terms. Using the variance of a difference and the law of total variance:
\begin{align*}
    (I) &= \diag\{\pi_a^{-1} \Var(\mu_a(X))\} + 2\diag\{\pi_a^{-1} \Cov\{\mu_a(X), \Ya - \mu_a(X)\}\} \\
    &= \diag\{\pi_a^{-1} \E(\Var(\mu_a(X)|Z))\} + \diag\{\pi_a^{-1} \Var(\E(\mu_a(X)|Z))\} \\
    &\quad\quad + 2\diag\{\pi_a^{-1} \E(\Cov\{\mu_a(X), \Ya - \mu_a(X)|Z\})\} + 2\diag\{\pi_a^{-1} \Cov\{\E(\mu_a(X)|Z), \E(\Ya - \mu_a(X)|Z)\}\}.
\end{align*}
For $(II)$, we expand terms and then add and subtract the quadratic form of $R_{X}$:
\begin{align*}
    (II) &= - \E(R_X(\Z) \{\Omega_{SR} - \Omega(\Z)\} R_Y(\Z)) - \E(R_Y(\Z) \{\Omega_{SR} - \Omega(\Z)\} R_X(\Z)) + \E(R_X(\Z) \{\Omega_{SR} - \Omega(\Z)\} R_X(\Z)) \\
    &= - \E(R_X(\Z) \{\Omega_{SR} - \Omega(\Z)\} \{R_Y(\Z) - R_X(\Z)\}) - \E(\{R_Y(\Z) - R_X(\Z)\} \{\Omega_{SR} - \Omega(\Z)\} R_X(\Z)) \\
    &\quad\quad - \E(R_X(\Z) \{\Omega_{SR} - \Omega(\Z)\} R_X(\Z)).
\end{align*}
For $(III)$, we use law of total covariance and variance, and then use the definition of $R_X$ and $R_Y$. As an example, we have
\begin{align*}
    \Var(\mu(\X)) &= \E(\Var(\mu(\X)|\Z)) + \Var(\E(\mu(\X)|\Z)) \\
    &= \E(\Var(\mu(\X)|\Z)) + \E(\E(\mu(\X)|\Z) \E(\mu(\X)|\Z)^T) - \E(\mu(\X)) \E(\mu(\X))^T \\
    &= \E(\Var(\mu(\X)|\Z)) + \E(R_X(\Z) \pi \pi^T R_X(\Z)).
\end{align*}
The same steps hold for the covariance terms. Therefore,
\begin{align*}
    (III) &= - \E(\Cov\{Y - \mu(X), \mu(X)|\Z\}) - \E(\{R_Y(\Z) - R_X(\Z)\} \pi \pi^T R_X(\Z)|Z) \\
    &\quad\quad - \E(\Cov\{\mu(X), Y - \mu(X)|\Z\}) - \E(R_X(\Z)\pi \pi^T \{R_Y(\Z) - R_X(\Z)\}|Z) \\
    &\quad\quad - \E(\Var(\mu(\X)|\Z)) - \E(R_X(\Z) \pi \pi^T R_X(\Z)).
\end{align*}
We now can combine terms across $(I) + (II) + (III)$. For example, considering the variance terms for $\mu$, and recalling that $\Omega_{SR} = \diag(\pi) - \pi \pi^T$, we have:
\begin{align*}
    &\diag\{\pi_a^{-1} \E(\Var(\mu_a(X)|Z))\} + \diag\{\pi_a^{-1} \Var(\E(\mu_a(X)|Z))\} \\
    &\quad - \E(R_X(\Z) \{\Omega_{SR} - \Omega(\Z)\} R_X(\Z)) - \E(\Var(\mu(\X)|\Z)) - \E(R_X(\Z) \pi \pi^T R_X(\Z))  \\
    &= \diag\{\pi_a^{-1} \E(\Var(\mu_a(X)|Z))\} - \E(\Var(\mu(\X)|\Z)) + \E(R_X(\Z) \Omega(\Z) R_X(\Z)) \\
    &\quad + \E(R_X(\Z) \diag(\pi) R_X(\Z)) - \E(R_X(\Z) \pi \pi^T R_X(\Z)) - \E(R_X(\Z) \Omega_{SR} R_X(\Z)) \\
    &=  \diag\{\pi_a^{-1} \E(\Var(\mu_a(X)|Z))\} - \E(\Var(\mu(\X)|\Z)) + \E(R_X(\Z) \Omega(\Z) R_X(\Z)).
\end{align*}
We have a similar result for each of the covariance terms. Using the law of total covariance, and the fact that again $\Omega_{SR} = \diag(\pi) - \pi \pi^T$, we have:
\begin{align*}
    &\diag(\pi_a^{-1} \E(\Cov(y_a - \mu_a(X), \mu_a(X)|Z))\} - \E(\Cov(Y - \mu(X), \mu(X)|Z)) \\
    &\quad\quad = \diag(\pi_a^{-1} \Cov(y_a - \mu_a(X), \mu_a(X))\} - \Cov(Y - \mu(X), \mu(X)) - \E(\{R_Y(Z) - R_X(Z) \}\Omega_{SR} R_X(Z)).
\end{align*}
Therefore,
\begin{align*}
    &\diag(\pi_a^{-1} \E(\Cov(y_a - \mu_a(X), \mu_a(X)|Z))\} - \E(\Cov(Y - \mu(X), \mu(X)|Z)) \\
    &\quad\quad + \E(\{R_Y(Z) - R_X(Z) \}\Omega(Z) R_X(Z)) \\
    &= \diag(\pi_a^{-1} \Cov(y_a - \mu_a(X), \mu_a(X))\} - \Cov(Y - \mu(X), \mu(X)) - \E(\{R_Y(Z) - R_X(Z) \}\{\Omega_{SR} - \Omega(Z)\}R_X(Z)).
\end{align*}
This also holds for the transpose matrix, i.e., dealing with terms like $\Cov(\mu(X), Y - \mu(X))$.
Putting all of these parts together, we have the following result:
\begin{align*}
    (I) + (II) + (III) &= \diag\{\pi_a^{-1} \E(\Var(\mu_a(X)|Z))\} - \E(\Var(\mu(\X)|\Z)) + \E(R_X(\Z) \Omega(\Z) R_X(\Z)) \\
    &\quad + 2\diag(\pi_a^{-1} \E(\Cov(\mu_a(\X), \Ya - \mu_a(\X)|\Z)) \\
    &\quad - \Cov(Y - \mu(X), \mu(X)) - \Cov(Y - \mu(X), \mu(X))^T \\
    &\quad - \E(\{R_Y(\Z) - R_X(\Z)\} \{\Omega_{SR} - \Omega(\Z)\} R_X(\Z)) \\
    &\quad - \E(\{R_Y(\Z) - R_X(\Z)\} \{\Omega_{SR} - \Omega(\Z)\} R_X(\Z))^T.
\end{align*}
With the condition given in 1 (d), all terms cancel out besides the first line. In other words, under 1 (d),
\begin{align*}
    \bm \Sigma^{\bar{Y}}_{(B)} - \bm \Sigma_{(B)} = \diag\{\pi_a^{-1} \E(\Var(\mu_a(X)|Z))\} - \E(\Var(\mu(\X)|\Z)) + \E(R_X(\Z) \Omega(\Z) R_X(\Z)).
\end{align*}
Note that $\Omega(Z)$ is positive semidefinite. Furthermore, Lemma 1 from \citet{Ye2021better} states that if $M$ is a matrix with columns $m_1, ..., m_k$ and $\pi_1, ..., \pi_k$ are nonnegative constants that sum to one, then $\diag(\pi_t^{-1} m_t^T m_t) - M^T M$ is positive semidefinite. This means that $\diag\{\pi_a^{-1} \E(\Var(\mu_a(X)|Z))\} - \E(\Var(\mu(\X)|\Z))$ is positive semidefinite, as is its expectation. Therefore, $\bm \Sigma^{\bar{Y}}_{(B)} - \bm \Sigma_{(B)}$ is positive semidefinite and we have guaranteed efficiency gain.

\newpage
\section{Universal Applicability: Proof of Theorem \ref{theo:universality}}

Under (U), we have that $\E \{ \Y - \mu (\X)  \mid \Z\} = \theta - E\{ \mu (\X) \}$ almost surely. Since we defined $\tilde{\mu} = \mu(\X) - \E(\mu(\X)) - \theta$, this means that under (U), $\E \{ \Y - \tilde{\mu} (\X)  \mid \Z\} = \bm 0$. As a result, $\bm \phi^2 = \bm \phi^3 = \bm 0$ almost surely. Therefore, $\bm V^{(5)} := \bm V^{(3,5)} = \Var(\E(\tilde{\mu}|\Z))$.

Define  $\bm M_n^{(1, 4)} := \left[\bm V^{(1,4)}\right]^{-1}\sqrt{n} \E_n(\bm \phi^1 + \bm \phi^4)$ and $\bm M_n^{(5)} := \left[\bm V^{(3,5)}\right]^{-1}\sqrt{n} \E_n(\bm \phi^5)$. Then by the continuous mapping theorem, $\bm M_n^{(1, 4)} | \mathcal{A}_n,\mathcal{Z}_n \xrightarrow{d}  N(0, \bm I_k)$ and $\bm M_n^{(5)} \xrightarrow{d}  N(0, \bm I_k)$. Furthermore, these two normal distributions are asymptotically independent. Applying Lemma \ref{lemma:asymptotic-independence},
$(\bm M_n^{(1, 4)}, \bm M_n^{(5)}) \xrightarrow{d} (\bm M^{(1, 4)}, \bm M^{(5)})$, where $\bm M^{(1,4)}$ and $\bm M^{(5)}$ are independent multivariate normal random variables.
Therefore, by the continuous mapping theorem, we can both scale them by their respective variance-covariance matrices, and add the two random variables together to get that under (B),
\begin{align*}
	\sqrt{n} (\hat{\theta}_{AIPW} - \theta) \equiv \sqrt{n} \E_n(\bm \phi^1 + \bm \phi^4 + \bm \phi^5) + o_P(1)\xrightarrow{d} N(0, \bm \Sigma_{(U)})
\end{align*}
where $\bm \Sigma_{(U)} = \bm V^{(1,4)} + \bm V^{(5)}$. Note that $\bm \Sigma_{(U)}$ has no dependence on the covariate-adaptive randomization scheme, so the asymptotic distribution (including the asymptotic variance) of $\hat{\theta}_{AIPW}$ is universal under (U).

It is now straightforward to show guaranteed efficiency gain under (U) + (G2). Under (U), $R(Z) = \bm 0$. Therefore, the RHS of (G1) is $\bm 0$. If (G2) holds, then both covariance terms on the LHS of (G1) are also $\bm 0$. Since both sides of (G1) are $\bm 0$, (G1) is satisfied.
\newpage

\section{Joint Calibration}

\subsection{Joint Calibration Satisfies Assumptions 2 and 3}

Here we show that Assumptions 2 and 3 are satisfied for the jointly calibrated $\hat{\mu}_a^*(X_i)$ and a limiting function $\mu_a^*(X_i)$. Recall that for joint calibration, we have defined

\begin{equation*}
\begin{array}{l}
	\hat \mu_a^* (\X_i) =  \hat \balpha_{a}^T \Z_i +\hat {\bm  \beta}_a^T  \hat {\mu} (\X_i) , \\
 (\hat \balpha_{a} , \, \hat {\bm \beta}_a) = {\displaystyle \argmin_{ \balpha_{a}, \, \bm \beta_a  } \, \sum_{i: A_i = a}\left\{ Y_{a,i}  -  \balpha_{a}^T \Z_i  -  \bm  \beta_a ^T  \hat {\mu} (\X_i) \right\}^2}. \end{array}
 \end{equation*}
 
Define $\mu_a^*(X_i) = \bm \alpha^{*T}_a Z_i + \bm \beta_a^{*T} {\mu}(X_i)$, where $\mu(X_i) = (\mu_1(X_i), ..., \mu_k(X_i))^T$ is the vector of all $\mu$ functions. Define the true regression parameters as the solution to the corresponding population problem, where $\hat{\mu}$ has been replaced with $\mu$:
\begin{align}
	 (\balpha^*_{a} , \, \bm \beta^*_a) = {\displaystyle \argmin_{ \balpha_{a}, \, \bm \beta_a  } \, \E\left\{\left( \Ya  -  \balpha_{a}^T \Z  -  \bm  \beta_a ^T {\mu}^* (\X)\right)^2 \right\}}.\label{jc:pop-min}
\end{align}
By assumption, Assumptions 2 and 3 are satisfied for $\hat\mu_a(X)$ using some $\mu_a(X)$ for each $a$. Our goal is to show that, given Assumptions 2 and 3 for each $\hat\mu_a(X)$, Assumptions 2 and 3 are satisfied for $\hat \mu_a^* (\X)$ using the above $\mu_a^* (\X)$.

\subsubsection{Consistency of Regression Coefficients}\label{sec:reg-consistency}

We first establish that $\hat{\bm\alpha}_a \to \bm \alpha^{*}_a$ and $\hat{\bm\beta}_a \to \bm \beta^{*}_a$ in probability. If $\hat{\mu}$ were not random this would be trivial as they are just coefficients from OLS. Will figure out what $\bm \alpha^*_a$ and $\bm \beta^*_a$ are, and show that the corresponding $\hat{\bm \alpha}_a$ and $\hat{\bm \beta}_a$ converge in probability. Will likely need to use Assumptions 2 and 3 for each $\hat{\mu}_a$.

Following the same logic as in Section 3.1 of \citet{Ye2021better}, we have that
\begin{align*}
\begin{pmatrix}
    \hat{\bm \alpha}_a \\
    \hat{\bm \beta}_a
\end{pmatrix} = \frac{n}{n_a} \left\{ \sum_{i=1}^{n}
\begin{pmatrix}
    Z_i \\
    \hat{\mu}(\X_i)
\end{pmatrix}
\begin{pmatrix}
    Z_i \\
    \hat{\mu}(\X_i)
\end{pmatrix}^T
\right\}^{-1}
\sum_{i:A_i=a} \begin{pmatrix}
    Z_i Y_i \\
    \hat{\mu}(\X_i) Y_i
\end{pmatrix}
\end{align*}
Therefore, to prove that $\hat{\bm\alpha}_a \to \bm \alpha^{*}_a$, and $\hat{\bm\beta}_a \to \bm \beta^{*}_a$, it suffices to show that $\frac{1}{n} \sum_{i=1}^{n}
\begin{psmallmatrix}
    Z_i \\
    \hat{\mu}(\X_i)
\end{psmallmatrix}
\begin{psmallmatrix}
    Z_i \\
    \hat{\mu}(\X_i)
\end{psmallmatrix}^T \to \E\left\{ \begin{psmallmatrix}
    Z \\
    {\mu}(\X)
\end{psmallmatrix}
\begin{psmallmatrix}
    Z \\
    {\mu}(\X)
\end{psmallmatrix}^T\right\}$ in probability, and that $\frac{1}{n_a} \sum_{i:A_i=a} \begin{psmallmatrix}
    Z_i \Yai \\ 
    \hat{\mu}(\X_i) \Yai
\end{psmallmatrix} \to\E \left\{ \begin{psmallmatrix}
    Z \Ya \\ 
    {\mu}(\X) \Ya
\end{psmallmatrix}\right\}$ in probability. For each of the matrix/vector components, we can show this. The components without 
$\hat{\mu}$ are straightforward. Specifically, $\frac{1}{n} \sum_{i=1}^{n} Z_i Z_i^T \to \E(Z Z^T)$ by the law of large numbers since $Z_i$ are iid. We cite \citet{Ye2021better}'s work in Lemma 3 of their supplemental materials to claim that $\frac{1}{n_a} \sum_{i:A_i=a} Z_i \Yai \to \E(Z \Ya)$ in probability. Turning to terms with $\hat{\mu}$, we start with $\frac{1}{n_a} \sum_{i:A_i=a} \Yai \hat{\mu}(\X_i)$:
\begin{align*}
    \frac{1}{n_a} \sum_{i:A_i=a} \Yai \hat{\mu}(\X_i) - \E(Y_i \mu(\X_i)) &= \underbrace{\frac{1}{n_a} \sum_{i:A_i=a} \Yai {\mu}(\X_i) - \E(\Yai \mu(\X_i))}_{(1)} \\
    &\quad + \underbrace{\frac{1}{n_a} \sum_{i:A_i=a} \Yai \hat{\mu}(\X_i) - \frac{1}{n_a} \sum_{i:A_i=a} \Yai {\mu}(\X_i)}_{(2)}
\end{align*}
Term (1) is $o_P(1)$ by the same logic of \citet{Ye2021better} Lemma 3. For term (2), we have similar types of empirical process terms as we dealt with in the proof of Theorem 1, when finding the influence function. However, the difference here is that we do not need a specific rate; we just need it to be $o_P(1)$. We have:
\begin{align*}
    (2) &= \frac{1}{n_a} \sum_{i:A_i=a} \Yai \hat{\mu}(\X_i) - \frac{1}{n_a} \sum_{i:A_i=a} \Yai {\mu}(\X_i) \\
    &= \hat\pi_a^{-1} \frac{1}{n} \sum_{i:A_i=a} [\Yai \hat{\mu}(\X_i) - P_0(\Yai \hat{\mu}(\X_i))] - \hat\pi_a^{-1} \frac{1}{n} \sum_{i=1}^{n} 1_a(A_i) [\Yai {\mu}(\X_i) - P_0(\Yai \mu(\X_i))] \\
    &\quad + P_0(\Yai (\hat{\mu}(\X_i) - \mu(\X_i))) \\
    &= \hat\pi_a^{-1} \frac{1}{n} \sum_{i=1}^{n} 1_a(A_i) \left\{\Yai \hat{\mu}(\X_i) - P_0 \Yai \hat{\mu}(\X_i) - \Yai \mu(\X_i) + P_0 \Yai \mu(\X_i) \right\} + P_0(\Yai (\hat{\mu}(\X_i) - \mu(\X_i))).
\end{align*}
Since $\hat{\pi}_a^{-1} \to \pi_a^{-1}$ in probability, by Slutsky's Lemma, to show that (2) is $o_P(1)$, it suffices to show that
\begin{enumerate}[label=(\Roman*)]
    \item $P_0(\Ya \hat{\mu}(\X) - \Ya \mu(\X)) = o_P(1)$. By the Cauchy-Schwartz Inequality, we have
    \begin{align*}
        |P_0(\Ya \hat{\mu}(\X) - \Yai \mu(\X))| &\leq \norm{\Ya}_{L_2} \norm{\hat{\mu}(\X) - \mu(\X)}_{L_2} \\
        &= o_P(1)
    \end{align*}
    by assumption since $\norm{\hat{\mu}(\X_i) - \mu(\X_i)}_{L_2} \to 0$ in probability, and since $\Ya$ has a finite second moment. Note that this also holds replacing $P_0(\Ya [\hat{\mu}(\X) - \mu(\X)])$ with $P_0(|\Ya| |\hat{\mu}(\X) - \mu(\X)|)$, therefore $P_0(|\Ya| |\hat{\mu}(\X) - \mu(\X)|)$ is also $o_P(1)$.
    \item $\frac{1}{n} \sum_{i=1}^{n} 1_a(A_i) \left\{\Yai \hat{\mu}(\X_i) - P_0 \Yai \hat{\mu}(\X_i) - \Yai \mu(\X_i) + P_0 \Yai \mu(\X_i) \right\} = o_P(1)$. This term is exactly the same as term $\mathcal{M}$ in the proof of Theorem 1(a), but instead of $\hat{\mu}_a(\X_i)$ and $\mu_a(\X_i)$, we have $\Yai \hat{\mu}(\X_i)$ and $\Yai \mu(\X_i)$ respectively. However, we do not want to follow the techniques of $\mathcal{M}$ in the proof of Theorem 1(a) because it requires too strict of new assumptions. We only need the term to be $o_P(1)$ rather than $o_P(1/\sqrt{n})$. So in this case, we can use some more basic techniques. Term (II) can be broken up into two terms, each of which is $o_P(1)$ by applications of Assumption \ref{assump: stability}, Cauchy-Schwartz, and Markov's Inequality:
    \begin{enumerate}[label=(\roman*)]
        \item $\frac{1}{n} \sum_{i=1}^{n} 1_a(A_i) \left\{\Yai[\hat{\mu}(\X_i) - \mu(\X_i)] \right\}$. Here we can apply the finite version of Cauchy-Schwartz:
        \begin{align*}
            \left|\frac{1}{n} \sum_{i=1}^{n} 1_a(A_i) \left\{\Yai[\hat{\mu}(\X_i) - \mu(\X_i)] \right\}\right| &\leq \sqrt{\frac1n \sum_{i:A_i=a} [\Yai]^2} \sqrt{\frac1n \sum_{i:A_i=a} [\hat{\mu}(\X_i) - \mu(\X_i)]^2} \\
            &\leq \sqrt{\frac1n \sum_{i=1}^{n} [\Yai]^2} \sqrt{\frac1n \sum_{i=1}^n [\hat{\mu}(\X_i) - \mu(\X_i)]^2}.
        \end{align*}
        We know that $\sqrt{\frac1n \sum_{i=1}^{n} [\Yai]^2} - \norm{\Yai}_{L_2} = o_P(1)$ by the law of large numbers and continuous mapping theorem. We also know that $\sqrt{\frac1n \sum_{i=1}^n [\hat{\mu}(\X_i) - \mu(\X_i)]^2} = o_P(1)$ by Assumption:
        \begin{align*}
            \lim_{n\to\infty} P\left(\frac1n \sum_{i=1}^n [\hat{\mu}(\X_i) - \mu_a(\X_i)]^2 > \epsilon\right) &= \lim_{n\to\infty} \E\left( P\left(\frac1n \sum_{i=1}^n [\hat{\mu}(\X_i) - \mu(\X_i)]^2 > \epsilon \Bigg| \hat{\mu}\right)\right) \\
            \text{(DCT)} \quad &= \E\left(\lim_{n\to\infty} P\left(\frac1n \sum_{i=1}^n [\hat{\mu}(\X_i) - \mu(\X_i)]^2 > \epsilon \Bigg| \hat{\mu}\right)\right) \\
            \text{(Markov's Inequality)} \quad &\leq \E\left(\lim_{n\to\infty} P\left(\frac1n \sum_{i=1}^n [\hat{\mu}(\X_i) - \mu(\X_i)]^2 \Bigg|\hat{\mu} \right) \right)/\epsilon \\
            \text{(Assumption \ref{assump: stability})} \quad &= \E(0) / \epsilon = 0.
        \end{align*}
        Therefore, by applying Slutsky's Theorem, we can see that the whole term is $o_P(1)$.
        \item $\frac{1}{n} \sum_{i=1}^{n} 1_a(A_i) P_0 \left\{\Yai[\hat{\mu}(\X_i) - \mu(\X_i)] \right\}$. To see that this term is $o_P(1)$, note that the $P_0$ term is not dependent on $i$, therefore, it becomes $\hat{\pi}_a P_0 \left\{\Ya[\hat{\mu}(\X) - \mu(\X)] \right\}$. Furthermore,
        \begin{align*}
            \left|\hat{\pi}_a P_0 \left\{\Yai[\hat{\mu}(\X) - \mu(\X)] \right\}\right| &\leq \hat{\pi}_a P_0 \left| \Ya[\hat{\mu}(\X) - \mu(\X)] \right| \\
            &\leq \hat{\pi}_a \norm{\hat{\mu}(\X) - \mu(\X)}_{L_2} \norm{\Ya}_{L_2} \\
            &= \hat{\pi}_a o_P(1) \\
            &= (\hat{\pi}_a - \pi_a) o_P(1) + \pi_a o_P(1) = o_P(1)
        \end{align*}
        where we have used Cauchy-Schwartz, the fact that $\norm{\hat{\mu}(\X) - \mu(\X)}_{L_2} = o_P(1)$ by Assumption \ref{assump: stability}, and that $\Ya$ has a bounded second moment.
    \end{enumerate}
\end{enumerate}

Next, we can see that $\frac1n \sum_{i=1}^{n} \Z_i \hat{\mu}(\X_i) \to \E(\Z \mu(\X))$ by using the following expansion:
\begin{align*}
    \frac1n \sum_{i=1}^{n} \Z_i \hat{\mu}(\X_i) - \E(\Z \mu(\X)) &= (P_n - P_0) \Z_i \mu(\X_i) + P_0(Z_i [\hat{\mu}(\X_i) - \mu(\X_i)]) + (P_n - P_0) (\Z_i \hat{\mu} - \Z_i \mu)
\end{align*}
The first term is $o_P(1)$ by the law of large numbers. The second term is $o_P(1)$ for the same reason as (I) above replacing $\Yai$ with $\Z_i$, noting that $\Z_i$ is just an indicator function. The last term is $o_P(1)$ because the functions $(\Z_i, X_i) \to \Z_i \hat{\mu}(\X_i)$ and $(\Z_i, X_i) \to \Z_i {\mu}(\X_i)$ satisfy Assumption \ref{assump: simple}, the Donsker condition, assuming that it is already satisfied for $\hat{\mu}(\X_i)$ and $\mu(\X_i)$.

Finally, we turn to $\hat{\mu}(\X_i) \hat{\mu}(\X_i)^T$. We would like to show that
\begin{align}\label{eq:mumu-const}
    \frac1n \sum_{i=1}^{n} \hat{\mu}(\X_i) \hat{\mu}(\X_i)^T \to \E(\mu(\X) \mu(\X)^T)
\end{align}
in probability. We focus on each of the components of the matrix $\E(\mu(\X_i) \mu(\X_i)^T)$, i.e., we would like to show that $\frac1n \sum_{i=1}^{n} \hat{ \mu}_a(\X_i) \hat{\mu}_b(\X_i) \to \E(\mu_a(\X_i) \mu_b(\X_i))$ in probability, where $a = 1, ..., k$, $b = 1, ..., k$. We cannot simply use the law of large numbers because of the estimated $\hat{\mu}$. However, we can decompose this problem into several parts. Let $\hat{g}_{a,b} = \hat{ \mu}_a(\X_i) \hat{\mu}_b(\X_i)$ and $g_{a,b} = { \mu}_a(\X_i) {\mu}_b(\X_i)$. Then,
\begin{align*}
    \frac1n \sum_{i=1}^{n} \hat{ \mu}_a(\X_i) \hat{\mu}_b(\X_i) - \E(\mu_a(\X) \mu_b(\X)) &= (P_n - P_0) g_{a,b} + P_n (\hat{g}_{a,b} - g_{a,b}).
\end{align*}
$(P_n - P_0) g_{a,b} = o_P(1)$ by the law of large numbers. Therefore, for \eqref{eq:mumu-const} to hold, we need $P_n (\hat{g}_{a,b} - g_{a,b}) = o_P(1)$. We can expand this term as:
\begin{align*}
    \frac1n \sum_{i=1}^{n} \left(\underbrace{\left[\hat{\mu}_a(\X_i) - \mu_a(\X_i) \right]\left[\hat{\mu}_b(\X_i) - \mu_b(\X_i) \right]}_{(III)} + \underbrace{{\mu}_b(\X_i)\left[ \hat{\mu}_a(\X_i) - \mu_a(\X_i)\right]}_{(IV)} + \underbrace{\mu_a(\X_i)\left[ \hat{\mu}_b(\X_i) - \mu_b(\X_i)\right]}_{(V)}\right)
\end{align*}
We will use Cauchy-Schwartz Inequality on each of these terms. In (II) (i) from before, we showed that $\sqrt{\frac1n \sum_{i=1}^n [\hat{\mu}(\X_i) - \mu(\X_i)]^2} = o_P(1)$, so this holds for the $\mu_a$ and $\mu_b$ components of the $\mu$ vector. Therefore, we have
\begin{align*}
    (III) &\leq \sqrt{\frac1n \sum_{i=1}^{n} (\hat{\mu}_a(\X_i) - \mu_a)^2} \sqrt{\frac1n \sum_{i=1}^{n} (\hat{\mu}_b(\X_i) - \mu_b)^2} = o_P(1) \\
    (IV) &\leq \left(\sqrt{\frac1n \sum_{i=1}^{n} \mu_b^2(X_i)} - \norm{\mu_b}_{L_2(P)}\right) \sqrt{\frac1n \sum_{i=1}^{n} (\hat{\mu}_a(\X_i) - \mu_a)^2} + \norm{\mu_b}_{L_2(P)}\sqrt{\frac1n \sum_{i=1}^{n} (\hat{\mu}_a(\X_i) - \mu_a)^2} \\
    &= o_P(1).
\end{align*}
Term (V) is identical to term (IV) with $\mu_a$ and $\mu_b$ switched. Therefore, $P_n (\hat{g}_{a,b} - g_{a,b}) = o_P(1)$.

\subsubsection{Assumption \ref{assump: stability} for Joint Calibration: Stability}

Assume that for every $a$, there exists a function ${\mu}_a \in L_2(P)$ on $\mathscr{X}$ such that, as $n \to \infty$ (Assumption \ref{assump: stability}), 
 $\norm{\hat{\mu}_a - \mu_a}_{L_2} \to 0 $ in probability. First note that $\mu_a^* \in L_2(P)$ as well, because:
 \begin{align*}
     \norm{\mu^*_a}_{L_2} &\leq \norm{\bm \alpha^{*T}_a}_2 + \sum_{a'=1}^{k} |\bm \beta_{a,a'}^{*T}| \norm{\mu_a'}_{L_2} < \infty.
 \end{align*}
 Then:
 \begin{align*}
 	\norm{\hat{\mu}_a^* - \mu_a^*}_{L_2} &\leq \norm{\hat{\bm \alpha}_a - \bm \alpha^*_a }_{L_2} + \sum_{a'=1}^{k} \norm{\hat{\bm \beta}_{a,a'} \hat{\mu}_{a'} - \bm \beta_{a,a'}^* \mu^*_{a'}}_{L_2} \\
 	&\leq \norm{\hat{\bm \alpha}_a - \bm {\alpha^*}_a}_2 + \sum_{a=1}^{k} |\hat{\bm \beta}_{a,a'} - \bm \beta_{a,a'}|\norm{\hat{\mu}_{a'}}_{L_2} + |\bm \beta_{a,a'}|\norm{\hat{\mu}_{a'} - \mu^*_{a'}}_{L_2} \\
  &\leq \norm{\hat{\bm \alpha}_a - \bm {\alpha^*}_a}_2 + \sum_{a'=1}^{k} |\hat{\bm \beta}_{a,a'} - \bm \beta_{a,a'}|\left(\norm{{\mu}_{a'}}_{L_2} + \norm{\hat{\mu}_{a'} - \mu_{a'}}_{L_2} \right) + |\bm \beta_{a,a'}|\norm{\hat{\mu}_{a'} - \mu^*_{a'}}_{L_2}.
 \end{align*}
 This entire expression on the RHS is $o_P(1)$ because of the consistency of the regression coefficients that we proved in Section \ref{sec:reg-consistency} and because of Assumption \ref{assump: stability} on $\hat{\mu}_a$. Therefore, Assumption \ref{assump: stability} holds for $\hat{\mu}^*_a$ and $\mu^*_a$.

\subsubsection{Assumption \ref{assump: simple} for Joint Calibration: Donsker Condition}
 
Assume that Assumption \ref{assump: simple} (i-iii) holds for $\mu_a(X_i)$ and $\hat{\mu}_a(X_i)$, $a = 1, ..., k$. Define
\begin{align*}
	\mathcal{F}^*_{a} := \{\bm \alpha^{T} Z_i + \bm \beta^{T} {\mu}(X_i): \mu_a \in \mathcal{F}_{a}, |\alpha_a - \alpha_a^*| < 1, |\beta_a - \beta_a^*| < 1, a = 1, ..., k\}.
\end{align*}
Then,
\begin{enumerate}[label=(\roman*)]
	\item $\mu_a^* \in \mathcal{F}^*_{a}$. By assumption, each $\mu_a \in \mathcal{F}_{a}$, therefore, for $\mu^*_a \in \mathcal{F}^*_{a}$.
	\item $P(\hat\mu^*_a  \in \mathcal{F}^*_{a} )\rightarrow 1$ as $n \to \infty$. We can write
 \begin{align*}
     P(\hat\mu^*_a  \notin \mathcal{F}^*_{a} ) \leq \sum_{a'=1}^{k} \left\{P(\hat{\mu}_{a'} \notin \mathcal{F}_{a'}) + P(|\hat{\bm \alpha}_{a,a'} - \bm \alpha_{a,a'}^*| \geq 1) + P(|\hat{\bm \beta}_{a,a'} - \bm \beta_{a,a'}^*| \geq 1) \right\} \to 0
 \end{align*}
 by Assumption \ref{assump: simple} on $\hat{\mu}_{a'}$ and by definition of convergence in probability.
\item $\int_0^1 \sup_Q \sqrt {\log {N( \mathcal{F}^*_{a}, \|\cdot \|_{L_2(Q)} ,s)} } \ d s < \infty$. Let $T_a^{\mu}$ be a minimal $\epsilon$-cover for $\mathcal{F}_a$, $T_l^{\alpha}$ be a minimal $\epsilon$-cover for the unit ball in 1-dimension of $\alpha_l - \alpha_l^*$ (letting $\bm \alpha_a = (\alpha_1, ..., \alpha_L)$ and ignoring the $a$ subscript on $\bm \alpha_a$ for simplicity), and similarly $T_{a}^{\beta}$ be a minimal $\epsilon$-cover for the unit ball in 1-dimension of $\beta_a - \beta_a^*$  (letting $\bm \beta_a = (\beta_1, ..., \beta_k)$ and ignoring the $a$ subscript on $\bm \beta_a$ for simplicity).

Take any function $f \in \mathcal{F}_a^*$. Select the elements of $T_a^{\mu}$, $T_l^{\alpha}, l = 1, ..., L$, and $T_a^{\beta}, a = 1, ..., k$ such that they cover the $\alpha_l$'s, $\beta_a$'s and $\mu_a$'s of the chosen $f$, and denote them as $\alpha_l'$, $\beta_a'$, and $\mu_a'$, respectively. All of these functions are contained within the class
\begin{align*}
    T := \{ (\alpha_1, ..., \alpha_L)^T Z_i + (\beta_1, ..., \beta_k)^T (\mu_1(X_i), ..., \mu_k(X_i)): \alpha_l - \alpha_l^* \in T_l^{\alpha}, \beta_a - \beta^*_a \in T_a^{\beta}, \mu_a \in T_a^{\mu} \}.
\end{align*}
We define the function $f'(Z_i, X_i) := (\alpha_1', ..., \alpha_L')^T Z_i + (\beta_1', ..., \beta_k')^T (\mu_1'(X_i), ..., \mu_k'(X_i))$ as the covering function for $f$. Then:
\begin{align*}
    \norm{f - f'}_{L_2} &\leq \sum_{l=1}^{L} |\alpha_l - \alpha'_l| + \sum_{a=1}^{k} |\beta_a - \beta_a'| \norm{\mu_a}_{L_2} + |\beta'_a|\norm{\mu_a - \mu_a'}_{L_2} \\
    &< L \epsilon + 2k C \epsilon
\end{align*}
where $C > \max_a\{|1 + \beta_a^*|, \norm{\mu_a}_{L_2}, a = 1, ..., k\}$, which we know is $< \infty$ because $\beta_a^* < \infty$ and functions in $\mathcal{F}_a$ are square integrable. Therefore, we can find a constant $1 \leq C' < \infty$ such that $\norm{f - f'}_{L_2} < C'\epsilon$.

Now that we have shown that $T$ is a $C'\epsilon$ cover for $\mathcal{F}_a^*$, we will show that the uniform entropy integral also holds. We know that $N(\mathcal{F}_a^*, \norm{\cdot}_{L_2}, C'\epsilon) \leq \prod_{l=1}^{L} |T_{l}^\alpha| \prod_{a=1}^{k}  |T_{a}^{\beta}||T_a^{\mu}|$. Therefore,
\begin{align*}
    \log N(\mathcal{F}_a^*, \norm{\cdot}_{L_2}, C'\epsilon) \leq \sum_{a=1}^{k} \log N(\mathcal{F}_a, \norm{\cdot}_{L_2}, \epsilon) + \sum_{a=1}^k (M/\epsilon) + \sum_{l=1}^{L} (M/\epsilon)
\end{align*}
for some constant $M < \infty$, where the covering numbers for the unit ball in 1-dimension is given by Example 5.8 in Wainwright (2019). Taking the square root, the supremum, then the integral, we have:
\begin{align*}
    \int_0^{1} \sup_Q \sqrt{ \log N(\mathcal{F}_a^*, \norm{\cdot}_{L_2(Q)}, C'\epsilon) } d\epsilon \leq \sum_{a=1}^{k} \int_0^{1} \sup_Q \sqrt{\log N(\mathcal{F}_a, \norm{\cdot}_{L_2}, \epsilon)} d\epsilon + (k + L) M \int_0^{1} \sqrt{(1/\epsilon)} d\epsilon.
\end{align*}
$\int_0^{1} \sqrt{(1/\epsilon)} d\epsilon$ is finite, and $\int_0^{1} \sup_Q \sqrt{\log N(\mathcal{F}_a, \norm{\cdot}_{L_2}, \epsilon)} d\epsilon$ for all $a$ is finite by Assumption \ref{assump: simple}. Therefore, the RHS of the above is finite. Finally, for the LHS, let $\epsilon' = C'\epsilon$. Then, LHS is equivalent to
\begin{align*}
    (C')^{-1} \int_0^{C'} \sup_Q \sqrt{ \log N(\mathcal{F}_a^*, \norm{\cdot}_{L_2(Q)}, \epsilon') } d\epsilon'
\end{align*}
Since $1 \leq C' < \infty$, we can see that $\int_0^{1} \sup_Q \sqrt{ \log N(\mathcal{F}_a^*, \norm{\cdot}_{L_2(Q)}, \epsilon) } d\epsilon < \infty$.
\end{enumerate}

 \newpage
\subsection{Proof of Theorem \ref{theo: double}}

We have already shown that Assumptions 2 and 3 hold for $\hat{\mu}^*_a$ with respect to the function $\mu_a^*(X_i) = \bm \alpha^{*T}_a Z_i + \bm \beta_a^{*T} {\mu}(X_i)$. Let $\hat{W} = (Z^T, \hat{\mu}(X))$, ${W} = (Z^T, \mu(X))$, and $\Sigma_W = \Var(W)$. Let $\gamma_a = (\bm \alpha^{*T}_a, \bm \beta_a^{*T})^T$ and $\hat{\gamma}_a = (\hat{\bm \alpha}^{T}_a, \hat{\bm \beta}_a^{T})^T$. With this simplified notation, $\hat{\mu}^*_a(X) = \hat{\gamma}_a^T \hat{W}$, and $\mu_a^*(X) = \gamma_a^TW$. Based on the fact that we have shown that our OLS regression coefficients are consistent for their population minimizers based on \eqref{jc:pop-min}, we know that $\hat{\gamma}_a \to \gamma_a \equiv \Var(W)^{-1} \Cov(W, y_a)$.

To claim universality with (U), it suffices to show that for every $a$, 
	$E\{ \Ya - \mu_a^*(\X) + P_0 \mu_a^* - \theta_a \mid \Z\} = 0$.
This is satisfied by the population score equations based on how we have defined $\bm \alpha^*_a$ and $\bm \beta^*_a$, because for each $l = 1, ..., L$, we have
	$E[1(Z_i=l) \{\Ya - \mu_a^*(\X)\}] = 0$.
Therefore, we can directly apply Theorem \ref{theo:universality} to obtain the variance of $\hat{\mu}_a^*$. Furthermore, now that we have established universality, we only need the following to hold in order to claim guaranteed efficiency gain $\Cov\{ \Ya - \mu^*_a(X), \mu^*_b(X)\} = 0$,
for all $a = 1, ..., k$, $b = 1, ..., k$.
By the definition of covariance, it suffices to show that both $\E(( \Ya -  \mu^*_a(X)) \mu_b^*(X)) = 0$ and $\E(\Ya - \mu^*_a(X)) = 0$. The latter is satisfied by population score equations for group $a$ since we have a $\Z$-specific intercept, and because we can take expectation over $\Z$. For $\E((\Ya - \mu^*_a(X)) \mu_b^*(X))$, we have
\begin{align*}
	\E((\Ya - \mu^*_a(X)) \mu_b^*(X)) &= \E((\Ya - \mu^*_a(X)) \{\bm \alpha^{*T}_b Z + \bm \beta_b^{*T} {\mu}(X)\}) \\
	&= \bm \alpha^{*T}_b \E(Z (\Ya- \mu^*_a(X))) + \bm \beta^{*T}_b \E(\mu(X) (\Ya- \mu^*_a(X))).
\end{align*}
Both of the terms above are also zero by the population score equations for group $a$. Using the above result, we can use our results from Theorem \ref{theo:universality} applied under (G2) that the asymptotic variance of $\hat{\mu}_a^*$ is
\begin{align*}
    V &= {\rm{diag}} \left[\pi_a^{-1} \Var\{y_a - \mu^*_a(X)\}, a = 1, \dots, k \right] + \Cov\{\mu^*(X), Y\} \\
    &= {\rm{diag}} \left[\pi_a^{-1} \Var\{y_a - \gamma_a^T W\}, a = 1, \dots, k \right] + \Cov((\mu^*_1(X), ..., \mu^*_k(X))^T, (y_1, ..., y_k)^T).
\end{align*}
The $(a, b)$th entry of $\Cov((\mu^*_1(X), ..., \mu^*_k(X))^T, (y_1, ..., y_k)^T)$ is given by
\begin{align*}
    \Cov(\mu^*_1(X), y_a) &= \Cov(\gamma_a^T W, y_b) \\
    &= \gamma_a^T \Cov(W, y_b) \\
    &= \Cov(W, y_a)^T \Var(W)^{-1} \Cov(W, y_b) \\
    &= \Cov(W, y_a)^T \Var(W)^{-1} \Var(W) \Var(W)^{-1} \Cov(W, y_b) \\
    &= \gamma_a^T \Var(W) \gamma_b
\end{align*}
Therefore, the whole matrix $\Cov((\mu^*_1(X), ..., \mu^*_k(X))^T, (y_1, ..., y_k)^T)$ can be written as $\Gamma^T \Sigma_W \Gamma$ where $\Gamma = (\gamma_1, \dots, \gamma_k)$. Thus,
\begin{align*}
    V &= {\rm{diag}} \left[\pi_a^{-1} \Var\{y_a - \mu^*_a(X)\}, a = 1, \dots, k \right] + \Gamma^T \Sigma_W \Gamma.
\end{align*}

Based on the fact that we have shown that (U) and (G2) are satisfied, this means that $\hat{\theta}_{\rm JC}$ has guaranteed efficiency gain over the sample mean $\bar{Y}$, as that is the direct result of Theorem \ref{theo:universality}.

\subsection{Joint Calibration with Cross-Fitting}

\begin{corollary}
	The estimator $\hat{\theta}_{\rm JC,a}$ that uses the cross-fit $\check{\mu}_{a}(\X_i) = \sum_{k=1}^{K} I\{i \in I_k\} \hat{\mu}_{a,k}(\X_i)$ as defined in Lemma \ref{lemma: cf-aipw-equivalence} rather than the standard $\hat{\mu}_a(\X_i)$, has influence function given by
\begin{align}\label{influence-function}
		\psi_a(A_i, \X_i, Y_i) := \frac{I(A_i= a)}{\pi_a } \{ \Yai - \mu_a^* (\X_i) - (\theta_a  - P_0\mu_a^* (\X_i)) \} + \mu_a^*(\X_i) - P_0\mu_a^* (\X_i)
\end{align}
where $\mu^*_a(\X_i) = \sum_{l=1}^{L} I(Z_i = z_l) \alpha_{a,l} + \bm \beta^T_{a}{\mu}(\X_i)$
\end{corollary}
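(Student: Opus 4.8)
The plan is to recognize the cross-fitted joint calibration estimator as the cross-fitted AIPW estimator \eqref{eq: cf} in which the first-stage nuisance is replaced, on fold $\I_k$, by the calibrated function $\hat\mu_{a,k}^{*}(\X_i) = \hat\gamma_a^{\T}\hat{W}_{k,i}$, where $\hat{W}_{k,i} = (\Z_i^{\T}, \hat\mu_k(\X_i)^{\T})^{\T}$, $\hat\mu_k$ collects the fold-$k$ first-stage predictions across arms, $\hat\gamma_a = (\hat\alpha_a^{\T}, \hat\beta_a^{\T})^{\T}$, and the calibration coefficients are estimated on the full sample. The objective is then to invoke Theorem \ref{theo:ml}, which already produces the influence function \eqref{influence-function} for any cross-fitted AIPW estimator whose nuisance obeys Assumption 2*. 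It therefore suffices to (i) identify the $L_2$ limit of the calibrated nuisance as $\mu_a^{*}(\X) = \sum_{l=1}^{L} I(\Z = z_l)\alpha_{a,l} + \beta_a^{\T}\mu(\X)$ and (ii) verify the hypotheses of Theorem \ref{theo:ml}, accounting for the fact that the calibration coefficients are full-sample objects.

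First I would establish consistency of the calibration coefficients, $\hat\gamma_a \to \gamma_a = \Sigma_W^{-1}\Cov(W, y_a)$ in probability, where $\gamma_a$ solves the population regression \eqref{jc:pop-min}; this identifies the limit $\mu_a^{*}$. The argument mirrors Section \ref{sec:reg-consistency}, reducing to convergence of the sample moments $n^{-1}\sum_{i} \hat{W}_{k,i}\hat{W}_{k,i}^{\T}$ and $n_a^{-1}\sum_{i:A_i=a} \hat{W}_{k,i}\, y_{a,i}$ to their population analogues. The one difference is that the first stage is now the packed cross-fit $\check\mu$, which need not satisfy the Donsker condition of Assumption \ref{assump: simple}; the empirical-process terms previously controlled by that condition must instead be bounded by conditioning on the auxiliary fold $\I_k^c$ to render $\hat\mu_k$ deterministic within $\I_k$, and then applying the Cauchy--Schwarz inequality together with the $L_2$ rate $\|\hat\mu_k - \mu\|_{L_2}\le\delta_n$ from Assumption 2*. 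Only $o_P(1)$, rather than $o_P(n^{-1/2})$, is needed here, so this is a lighter version of the calculation in Lemma \ref{cf-remainder}.

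The main obstacle is that $\hat\mu_{a,k}^{*}$ depends on the full sample through $\hat\gamma_a$, which breaks the clean cross-fitting structure required to apply Theorem \ref{theo:ml} directly. I would resolve this with the decomposition $\hat\mu_{a,k}^{*}(\X_i) = \gamma_a^{\T}\hat{W}_{k,i} + (\hat\gamma_a - \gamma_a)^{\T}\hat{W}_{k,i}$. The first piece, $\gamma_a^{\T}\hat{W}_{k,i} = \sum_{l}I(\Z_i=z_l)\alpha_{a,l} + \beta_a^{\T}\hat\mu_k(\X_i)$, uses only population coefficients and fold-$k$ predictions, so within each fold it depends solely on $\I_k^c$ and is genuinely cross-fittable; a triangle-inequality bound combined with the coefficient consistency above and Assumption 2* for $\hat\mu_k$ shows it satisfies Assumption 2* with limit $\mu_a^{*}$, so Theorem \ref{theo:ml} applies and yields \eqref{influence-function} with $\mu_a$ replaced by $\mu_a^{*}$. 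The second piece contributes to the estimator the term $(\hat\gamma_a - \gamma_a)^{\T}\, \frac{1}{J}\sum_{k}\frac{1}{n^{(k)}}\sum_{i\in \I_k} (1 - I(A_i=a)/\hat\pi_{a,k})\hat{W}_{k,i}$; here the augmentation weight $1 - I(A_i=a)/\hat\pi_{a,k}$ has conditional mean zero given the covariates because treatment is independent of $(\X,\Z)$, so the averaged moment reduces to an overall-minus-treated sample mean of $\hat{W}_{k,i}$ and is $O_P(n^{-1/2})$. Multiplied by $\hat\gamma_a - \gamma_a = o_P(1)$ it is $o_P(n^{-1/2})$ and leaves the influence function unchanged.

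Having reduced to the population-coefficient nuisance, which satisfies Assumption 2* with limit $\mu_a^{*}$, and having shown that the coefficient-estimation error contributes only $o_P(n^{-1/2})$, I would conclude by applying Theorem \ref{theo:ml}, using Remark \ref{remark:cf-g-pia} to permit either $\hat\pi_{a,k}$ or $\hat\pi_a$ in the weights, to obtain the stated influence function $\psi_a$. The limit $\mu_a^{*}(\X) = \sum_{l=1}^{L} I(\Z=z_l)\alpha_{a,l} + \beta_a^{\T}\mu(\X)$ is read off from the population regression solution identified in the first step, matching the claim.
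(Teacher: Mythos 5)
Your architecture is essentially the paper's: both arguments peel off the coefficient-estimation error via the decomposition $\hat\mu_a^*=\gamma_a^{\T}\hat{W}+(\hat\gamma_a-\gamma_a)^{\T}\hat{W}$ (the paper writes the identical split at the estimator level, as its remainder $R_1-R_2$), verify that the population-coefficient nuisance $\sum_{l}I(\Z=z_l)\alpha_{al}+\beta_a^{\T}\check{\mu}(\X)$ inherits Assumption 2* with limit $\mu_a^*$ (the paper's bound $\|\check\mu_a^*-\mu_a^*\|_{L_2}\leq \beta_a^{\T}\bm 1_k\delta_n$), and then invoke Theorem \ref{theo:ml} with Remark \ref{remark:cf-g-pia}. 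On one point you are more careful than the paper: the paper's proof of this corollary simply \emph{assumes} consistency of $\hat\alpha_{al},\hat\beta_a$, while you correctly observe that the Donsker-based argument of Section \ref{sec:reg-consistency} does not transfer to the packed cross-fit $\check\mu$ and sketch the right substitute (condition on the auxiliary fold to freeze $\hat\mu_{a,k}$, then Cauchy--Schwarz with the Assumption 2* rate, which suffices because only $o_P(1)$ is needed).

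The one step whose justification does not go through as written is the $O_P(n^{-1/2})$ bound for the augmentation-weighted moment $\frac1n\sum_i\bigl(1-I(A_i=a)/\hat\pi_{a}\bigr)\hat{W}_{k,i}$. You attribute it to the weight having conditional mean zero ``because treatment is independent of $(\X,\Z)$''; under CAR that independence holds only marginally (Assumption \ref{assump: car}(ii)) --- the vector $(A_1,\dots,A_n)$ is dependent and correlated with $(\Z_1,\dots,\Z_n)$, and $\hat\mu_{a,k}$ is random --- so a mean-zero-weight heuristic does not by itself deliver the $\sqrt{n}$-rate. The conclusion is true but needs CAR-specific input, and this is precisely where the paper works: for the $\Z$-block the weighted sum is \emph{exactly} the stratum imbalance, $\hat\pi_a^{-1}(n(z_l)/n)\{n_a(z_l)/n(z_l)-\hat\pi_a\}$, which is $O_P(n^{-1/2})$ by Assumption \ref{assump: car}(iii) rather than by any independence argument; for the $\check\mu$-block the paper avoids a fresh empirical-process analysis through the identity
\begin{align*}
\frac1n\sum_{i=1}^n\Bigl[\frac{1_a(A_i)}{\hat\pi_a}\check{\mu}(\X_i)-\check{\mu}(\X_i)\Bigr]
= -\frac1n\sum_{i=1}^n\Bigl[\frac{1_a(A_i)}{\hat\pi_a}\bigl(\bm Y_{a,i}-\check{\mu}(\X_i)\bigr)+\check{\mu}(\X_i)-\theta_a\Bigr]+(\bar{Y}_a-\theta_a)\bm 1_k,
\end{align*}
where both bracketed terms are $O_P(n^{-1/2})$ by Theorems \ref{theo:asymptotic-normality}--\ref{theo:ml} together with Lemma \ref{lemma: cf-aipw-equivalence} and Remark \ref{remark:cf-g-pia}. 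Alternatively you could repair your direct route by conditioning on $(\mathcal{A}_n,\mathcal{Z}_n)$ and the auxiliary fold, splitting into within-stratum imbalance times conditional means (Assumption \ref{assump: car}(iii)) plus a conditionally centered fluctuation controlled by Chebyshev --- essentially redoing the $\mathcal{M}_1$--$\mathcal{M}_3$ analysis. With that single step repaired, your proof is complete and coincides with the paper's.
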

Let $\check{\mu} = (\check{\mu}_1, ..., \check{\mu}_k)$, Recall that with $\hat \mu_a^* (\X_i) = \sum_{l=1}^L I(\Z_i =z_l) \hat \alpha_{al} +\hat {\bm  \beta}_a^T  \check {\mu} (\X_i)$,
\begin{align*}
	\hat{\theta}_{\rm JC,a} - \theta_a &= \frac1n \sum_{i=1}^n \left[ \frac{1_a(A_i)}{\hat \pi_{a}}  \left\{  \Yai  - \hat \mu_a^*(\X_i) \right\} + \hat \mu_a^* (\X_i)\right] - \theta_a.
\end{align*}
Then, assuming that $\hat{\alpha}_{a,l}$ and $\hat{\beta}_{a}$ are consistent for $\alpha_{a,l}$ and $\beta_a$ respectively, we have
\begin{align*}
	\hat{\theta}_{\rm JC,a} - \theta_a &= \frac1n \sum_{i=1}^n \Bigg[ \frac{1_a(A_i)}{\hat \pi_{a}}  \bigg\{\Yai - \sum_{l=1}^L I(\Z_i =z_l) \hat \alpha_{al} - \hat {\bm  \beta}_a^T  \check {\mu} (\X_i) \bigg\} \\
	&\quad\quad\quad\quad\quad + \sum_{l=1}^L I(\Z_i =z_l) \hat \alpha_{al} +\hat {\bm  \beta}_a^T  \check {\mu} (\X_i)\Bigg] - \theta_a \\
	&= \frac1n \sum_{i=1}^n \Bigg[ \frac{1_a(A_i)}{\hat \pi_{a}}  \bigg\{\Yai - \sum_{l=1}^L I(\Z_i =z_l)  \alpha_{al} -  {\bm  \beta}_a^T  \check {\mu} (\X_i) \bigg\} \\
	&\quad\quad\quad\quad\quad + \sum_{l=1}^L I(\Z_i =z_l)  \alpha_{al} + {\bm  \beta}_a^T  \check {\mu} (\X_i)\Bigg] - \theta_a \\
	&\quad\quad - \underbrace{\frac{1}{n} \sum_{i=1}^{n} \left[\frac{1_a(A_i)}{\hat \pi_a} \left(\sum_{l=1}^{L} I(Z_i=z_l) (\hat{\alpha}_{al} - \alpha_{al}) + (\hat{\bm \beta}_a - \bm \beta_a)^T \check{\mu}(\X_i)\right)\right]}_{R_1} \\
	&\quad\quad + \underbrace{\frac{1}{n} \sum_{i=1}^{n} \left[\sum_{l=1}^{L} I(Z_i=z_l) (\hat{\alpha}_{al} - \alpha_{al}) + (\hat{\bm \beta}_a - \bm \beta_a)^T \check{\mu}(\X_i)\right]}_{R_2}.
\end{align*}
First, it suffices to show that $R_1 - R_2$ is $o_P(1/\sqrt{n})$. Starting with the $(\hat{\alpha}_{al} - \alpha_{al})$ terms, we have
\begin{align*}
	\frac{1}{n} \sum_{i=1}^{n} \left[\frac{1_a(A_i)}{\hat \pi_a} - 1 \right] \left[\sum_{l=1}^{L} I(Z_i=z_l) (\hat{\alpha}_{al} - \alpha_{al}) \right] &= \sum_{l=1}^{L} \frac{1}{n} \sum_{i:Z_i=z_l} \left[\frac{1_a(A_i)}{\hat \pi_a} - 1 \right] (\hat{\alpha}_{al} - \alpha_{al}).
\end{align*}
Therefore, we aim to show that in each level of $l$, we have
\begin{align*}
	(\hat{\alpha}_{al} - \alpha_{al}) \frac{1}{n} \sum_{i:Z_i=z_l} \left[\frac{1_a(A_i)}{\hat \pi_a} - 1 \right] = o_P(1/\sqrt{n}). 
\end{align*}
Since $(\hat{\alpha}_{al} - \alpha_{al}) = o_P(1)$, it suffices to show that $\frac{1}{n} \sum_{i:Z_i=z_l} \left[\frac{1_a(A_i)}{\hat \pi_a} - 1 \right] = O_P(1/\sqrt{n})$:
\begin{align*}
	\frac{1}{n} \sum_{i:Z_i=z_l} \left[\frac{1_a(A_i)}{\hat \pi_a} - 1 \right] &= (\hat{\pi}_a) \frac{1}{n} \sum_{i:Z_i=z_l}  \left(1_a(A_i) - \hat{\pi}_a \right) \\
	&= (\hat{\pi}_a) (n(l)/n) \frac{1}{n(l)} \sum_{i:Z_i=z_l} (1_a(A_i) - \hat{\pi}_a) \\
	&= (\hat{\pi}_a) (n(l)/n) \left(n_a(l)/n(l) -\hat{\pi}_a \right) \\
	&= (\hat{\pi}_a) (n(l)/n)\left([n_a(l)/n(l) - \pi_a] + [\pi_a -\hat{\pi}_a] \right) \\
	&= (\hat{\pi}_a) (n(l)/n) O_P(1/\sqrt{n}) \\
	&= \pi_a P(\Z_i=l) O_P(1/\sqrt{n}) + o_P(1/\sqrt{n}) \\
	&= O_P(1/\sqrt{n}).
\end{align*}
For the $(\hat{\bm \beta}_a - \bm \beta_a)^T$ term, we will use the results of previous lemmas and theorems. We have that the following vector can be re-written by adding and subtracting the scalar $(\bar{Y}_a - \theta_a)$:
\begin{align*}
	\frac{1}{n} \sum_{i=1}^{n} \left[\frac{1_a(A_i)}{\hat{\pi}_a} \check{\mu}(\X_i) - \check{\mu}(\X_i) \right] &= -\frac{1}{n} \sum_{i=1}^{n} \left[\frac{1_a(A_i)}{\hat{\pi}_a}(\bm \Yai - \check{\mu}(\X_i)) + \check{\mu}(\X_i) - \theta_a \right] + (\bar{Y}_a - \theta_a).
\end{align*}
The above term is a vector, i.e., with $(\bar{Y}_a - \theta_a) = (\bar{Y}_a - \theta_a)\bm 1_k$ and $\bm \Yai = \Yai \bm 1_k$. From Theorem \ref{theo:asymptotic-normality}, we know that $(\bar{Y}_a - \theta_a) = O_P(1/\sqrt{n})$. Using the equivalence result in Lemma \ref{lemma: cf-aipw-equivalence}, and Remark \ref{remark:cf-g-pia}, $-\frac{1}{n} \sum_{i=1}^{n} \left[\frac{1_a(A_i)}{\hat{\pi}_a}(\bm \Yai - \check{\mu}(\X_i)) + \check{\mu}(\X_i) - \theta_a \right]$ is also $O_P(1/\sqrt{n})$ by Theorem \ref{theo:asymptotic-normality}. Although the vector $\check{\mu}$ is $(\check{\mu}_1, ..., \check{\mu}_k)$ which may not be prediction unbiased for $j = 1, ..., k$, $j \neq a$, we still have the desired influence function because we are using the AIPW estimator, so we have prediction unbiasedness guaranteed. Therefore, we have
\begin{align*}
	(\hat{\bm \beta}_a - \bm \beta_a)^T \frac{1}{n} \sum_{i=1}^{n} \left[\frac{1_a(A_i)}{\hat{\pi}_a} \check{\mu}(\X_i) - \check{\mu}(\X_i) \right] &= (\hat{\bm \beta}_a - \bm \beta_a)^T \bm O_P(1/\sqrt{n}) + (\hat{\bm \beta}_a - \bm \beta_a)^T \bm O_P(1/\sqrt{n}) \\
	&= o_P(1/\sqrt{n}).
\end{align*}
Finally, we have
\begin{align*}
	\hat{\theta}_{\rm JC,a} &= \frac1n \sum_{i=1}^n \Bigg[ \frac{1_a(A_i)}{\hat \pi_{a}}  \bigg\{\Yai - \sum_{l=1}^L I(\Z_i =z_l)  \alpha_{al} -  {\bm  \beta}_a^T  \check {\mu} (\X_i) \bigg\} \\
	&\quad\quad\quad\quad\quad + \sum_{l=1}^L I(\Z_i =z_l)  \alpha_{al} + {\bm  \beta}_a^T  \check {\mu} (\X_i)\Bigg] + o_P(1/\sqrt{n}).
\end{align*}
We will show that $\check\mu_a^* (\X_i) := \sum_{l=1}^L I(\Z_i =z_l) \alpha_{al} + {\bm  \beta}_a^T  \check {\mu} (\X_i)$ satisfies Assumption 2* as long as $\hat{\mu}_{a,k}$ satisfies Assumption 2*, making the above equivalent to using a different cross-fitting estimator. Specifically, we know that with probability $1 - \Delta_{n}$, $\|\hat \mu_{a,k}  - \mu_a \|_{L^2 (P_0)} \leq \delta_n $ for all $a$ treatment groups. Then also with probability $1 - \Delta_n$,
\begin{align*}
	\norm{\check\mu_a^* (\X_i) - \mu_a^*}_{L_2(P_0)} \leq \bm \beta_a^T \norm{\check{\mu} - \mu} \leq \bm \beta_a^T  \bm 1_k \delta_n.
\end{align*}
Since $\delta_n = o(1)$, $\bm \beta_a^T  \bm 1_k \delta_n = o(1)$, so Assumption 2* is satisfied and $\hat{\theta}_{\rm JC,a}$ has the desired influence function.

\end{document}